\newtheorem*{theorem*}{Theorem}
\title{MAX CUT in Weighted Random Intersection Graphs and Discrepancy of Sparse Random Set Systems} %TODO Please add
\titlerunning{MAX CUT in Weighted Random Intersection Graphs} %TODO optional, please use if title is longer than one line
\author{Sotiris Nikoletseas}{Computer Engineering \& Informatics Department, University of Patras, Greece \and Computer Technology Institute, Greece}{nikole@ceid.upatras.gr}{https://orcid.org/0000-0003-3765-5636}{}%{https://orcid.org/0000-0002-1825-0097}%{(Optional) author-specific funding acknowledgements}%TODO mandatory, please use full name; only 1 author per \author macro; first two parameters are mandatory, other parameters can be empty. Please provide at least the name of the affiliation and the country. The full address is optional
\author{Christoforos Raptopoulos\footnote{Corresponding author}}{Computer Engineering \& Informatics Department, University of Patras, Greece}{raptopox@ceid.upatras.gr}{https://orcid.org/0000-0002-9837-2632}{Supported by the Hellenic Foundation for Research and Innovation (H.F.R.I.) under the ``2nd Call for H.F.R.I. Research Projects to support Post-Doctoral Researchers'' (Project Number:704).}
\author{Paul Spirakis}{Department of Computer Science, University of Liverpool, UK \and Computer Engineering \& Informatics Department, University of Patras, Greece \and Computer Technology Institute, Greece}{p.spirakis@liverpool.ac.uk}{https://orcid.org/0000-0001-5396-3749}{Supported by NeST initiative of the School of EEE and CS at the U. of Liverpool and
by the EPSRC grant EP/P02002X/1 
}
\authorrunning{S. Nikoletseas, C. Raptopoulos, P. Spirakis} %TODO mandatory. First: Use abbreviated first/middle names. Second (only in severe cases): Use first author plus 'et al.'
\keywords{Random Intersection Graphs, Maximum Cut, Discrepancy} %TODO mandatory; please add comma-separated list of keywords
\begin{document}

\maketitle

%TODO mandatory: add short abstract of the document
\begin{abstract}
Let $V$ be a set of $n$ vertices, ${\cal M}$ a set of $m$ labels, and let $\mathbf{R}$ be an $m \times n$ matrix of independent Bernoulli random variables with probability of success $p$; columns of $\mathbf{R}$ are incidence vectors of label sets assigned to vertices. A random instance $G(V, E, \mathbf{R}^T \mathbf{R})$ of the weighted random intersection graph model is constructed by drawing an edge with weight equal to the number of common labels (namely $[\mathbf{R}^T \mathbf{R}]_{v,u}$) between any two vertices $u, v$ for which this weight is strictly larger than 0. In this paper we study the average case analysis of \textsc{Weighted Max Cut}, assuming the input is a weighted random intersection graph, i.e. given $G(V, E, \mathbf{R}^T \mathbf{R})$ we wish to find a partition of $V$ into two sets so that the total weight of the edges having exactly one endpoint in each set is maximized.

In particular, we initially prove that the weight of a maximum cut of $G(V, E, \mathbf{R}^T \mathbf{R})$ is concentrated around its expected value, and then show that, when the number of labels is much smaller than the number of vertices (in particular, $m=n^{\alpha}, \alpha<1$), a random partition of the vertices achieves asymptotically optimal cut weight with high probability. Furthermore, in the case $n=m$ and constant average degree (i.e. $p = \frac{\Theta(1)}{n}$), we show that with high probability, a majority type randomized algorithm outputs a cut with weight that is larger than the weight of a random cut by a multiplicative constant strictly larger than 1. Then, we formally prove a connection between the computational problem of finding a (weighted) maximum cut in $G(V, E, \mathbf{R}^T \mathbf{R})$ and the problem of finding a 2-coloring that achieves minimum discrepancy for a set system $\Sigma$ with incidence matrix $\mathbf{R}$ (i.e. minimum imbalance over all sets in $\Sigma$). We exploit this connection by proposing a (weak) bipartization algorithm for the case $m=n, p = \frac{\Theta(1)}{n}$ that, when it terminates, its output can be used to find a 2-coloring with minimum discrepancy in a set system with incidence matrix $\mathbf{R}$. In fact, with high probability, the latter 2-coloring corresponds to a bipartition with maximum cut-weight in $G(V, E, \mathbf{R}^T \mathbf{R})$. Finally, we prove that our (weak) bipartization algorithm terminates in polynomial time, with high probability, at least when $p = \frac{c}{n}, c<1$.  
\end{abstract}

\section{Introduction}

Given an undirected graph $G (V,E)$, the \textsc{Max Cut} problem asks for a partition of the vertices of $G$ into two sets, such that the number of edges with exactly one endpoint in each set of the partition is maximized. This problem can be naturally generalized for weighted (undirected) graphs. A weighted graph is denoted by $G (V, E, \mathbf{W})$, where $V$ is the set of vertices, $E$ is the set of edges and $\mathbf{W}$ is a weight matrix, which specifies a weight $\mathbf{W}_{i,j}=w_{i,j}$, for each pair of vertices $i, j$. In particular, we assume that $\mathbf{W}_{i,j}=0$, for each edge $\{i,j\} \notin E$. 

\begin{definition}[\textsc{Weighted Max Cut}]
Given a weighted graph $G (V, E, \mathbf{W})$, find a partition of $V$ into two (disjoint) subsets $A, B$, so as to maximize the cumulative weight of the edges of $G$ having one endpoint in $A$ and the other in $B$.
\end{definition}

\textsc{Weighted Max Cut} is fundamental in theoretical computer science and is relevant in various graph layout and embedding problems \cite{DPS02}. Furthermore, it also has many practical applications, including infrastructure cost and circuit layout optimization in network and VLSI design \cite{PT95}, minimizing the Hamiltonian of a spin glass model in statistical physics \cite{BGJR88}, and data clustering \cite{PZ06}. In the worst case \textsc{Max Cut} (and also \textsc{Weighted Max Cut}) is \textsc{APX}-hard, meaning that there is no polynomial-time approximation scheme that finds a solution that is arbitrarily close to the optimum, unless P = NP \cite{PY91}.  

The average case analysis of \textsc{Max Cut}, namely the case where the input graph is chosen at random from a probabilistic space of graphs, is also of considerable interest and is further motivated by the desire to justify and understand why various graph partitioning heuristics work well in practical applications. In most research works the input graphs are drawn from the Erd\H{o}s-R\'{e}nyi random graphs model ${\cal G}_{n, m}$, i.e. random instances are drawn equiprobably from the set of simple undirected graphs on $n$ vertices and $m$ edges, where $m$ is a linear function of $n$ (see also \cite{GL18,CMS06} for the average case analysis of \textsc{Max Cut} and its generalizations with respect to other random graph models). One of the earliest results in this area is that \textsc{Max Cut} undergoes a phase transition on ${\cal G}_{n, \gamma n}$ at $\gamma=\frac{1}{2}$ \cite{CGHS04}, in that the difference between the number of edges of the graph and the Max-Cut size is $O(1)$, for $\gamma <\frac{1}{2}$, while it is $\Omega(n)$, when $\gamma > \frac{1}{2}$. For large values of $\gamma$, it was proved in \cite{BGT13} that the maximum cut size of $G_{n, \gamma n}$ normalized by the number of vertices $n$ reaches an absolute limit in probability as $n \to \infty$, but it was not until recently that the latter limit was established and expressed analytically in \cite{DMS17}, using the interpolation method; in particular, it was shown to be asymptotically equal to $(\frac{\gamma}{2}+P_* \sqrt{\frac{\gamma}{2}})n$, where $P_* \approx 0.7632$. We note however that these results are existential, and thus do not lead to an efficient approximation scheme for finding a tight approximation of the maximum cut with large enough probability when the input graph is drawn from ${\cal G}_{n, \gamma n}$. An efficient approximation scheme in this case was designed in \cite{CGHS04}, and it was proved that, with high probability, this scheme constructs a cut with at least $\left(\frac{\gamma}{2} + 0.37613 \sqrt{\gamma}\right)n = (1+0.75226 \frac{1}{\sqrt{\gamma}}) \frac{\gamma}{2}n$ edges, noting that $\frac{\gamma}{2}n$ is the size of a random cut (in which each vertex is placed independently and equiprobably in one of the two sets of the partition). Whether there exists an efficient approximation scheme that can close the gap between the approximation guarantee of \cite{CGHS04} and the limit of \cite{DMS17} remains an open problem.    

%In view of the above difficulty, 
In this paper, we study the average case analysis of \textsc{Weighted Max Cut} when input graphs are drawn from the generalization of another well-established model of random graphs, namely the \emph{weighted random intersection graphs model} (the unweighted version of the model was initially defined in \cite{KSS99}). In this model, edges are formed through the intersection of label sets assigned to each vertex and edge weights are equal to the number of common labels between edgepoints. %The definition below highlights a connection between weighted random intersection graphs and random set systems.

\begin{definition}[Weighted random intersection graph]
Consider a universe ${\cal M} = \{1, 2, \ldots, m\}$ of labels and a set of $n$ vertices $V$. We define the $m \times n$ representation matrix $\mathbf{R}$ whose entries are independent Bernoulli random variables with probability of success $p$. For $\ell \in {\cal M}$ and $v \in V$, we say that vertex $v$ has chosen label $\ell$ iff $\mathbf{R}_{\ell, v}=1$. Furthermore, we draw an edge with weight $[\mathbf{R}^T \mathbf{R}]_{v,u}$ between any two vertices $u, v$ for which this weight is strictly larger than 0.%between two vertices iff they have selected at least one label in common. 
The weighted graph $G = (V, E, \mathbf{R}^T \mathbf{R})$ is then a random instance of the weighted random intersection graphs model $\overline{\cal G}_{n, m, p}$.  
\end{definition}  

Random intersection graphs are relevant to and capture quite nicely social networking; vertices are the individual actors and labels correspond to specific types of interdependency. Other applications include oblivious resource sharing in a (general) distributed setting, efficient and secure communication in sensor networks [20], interactions of mobile agents traversing the web etc. (see e.g. the survey papers \cite{survey1,NRS18} for further motivation and recent research related to random intersection graphs). In all these settings, weighted random intersection graphs, in particular, also capture the strength of connections between actors (e.g. in a social network, individuals having several characteristics in common have more intimate relationships than those that share only a few common characteristics). One of the most celebrated results in this area is equivalence (measured in terms of total variation distance) of random intersection graphs and Erd\H{o}s-R\'enyi random graphs when the number of labels satisfies $m = n^{\alpha}, \alpha>6$ \cite{FSS00}. This bound on the number of labels was improved in \cite{R11}, by showing equivalence of sharp threshold functions among the two models for $\alpha \geq 3$. Similarity of the two models has been proved even for smaller values of $\alpha$ (e.g. for any $\alpha > 1$) in the form of various translation results (see e.g. Theorem 1 in \cite{RS05}), suggesting that some algorithmic ideas developed for Erd\H{o}s-R\'{e}nyi random graphs also work for random intersection graphs (and also weighted random intersection graphs). 

%Even though existing proofs that the total variation distance of the two distributions tends to 0 work only for the case $\alpha \geq 3$ \cite{R11}, in view of other translation results between the two models , we believe that ideas developed for Erd\H{o}s-R\'{e}nyi random graphs also work for random intersection graphs (and also weighted random intersection graphs) even when $\alpha$ is between $1$ and $3$.

%%%%%%%%%%%%%%%%%%%%%%%%%%%

%We are interested in asymptotic analysis, and so we will take $n$ to be large. As it is common in the random intersection graphs literature, we assume $m=n^\alpha$ for some positive constant $\alpha$ (we refer to \cite{AN21} for recent existential results on discrepancy in the case where $\alpha \leq 1-o(1)$). We will further assume that $\alpha \leq 1$, since in this range the distributions of random intersection graphs and Erd\H{o}s-R\'{e}nyi random graphs models, differ the most. In particular, even though existing proofs that the total variation distance of the two distributions tends to 0 work only for the case $\alpha \geq 3$ \cite{R11}, in view of other translation results between the two models (see e.g. Theorem 1 in \cite{RS05}), we believe that ideas developed for Erd\H{o}s-R\'{e}nyi random graphs also work for random intersection graphs (and also weighted random intersection graphs) even when $\alpha$ is between $1$ and $3$. 

%%%%%%%%%%%%%%%%%%%%%%%%%

In view of this, in the present paper we study the average case analysis of \textsc{Weighted Max Cut} under the weighted random intersection graphs model, for the range $m=n^{\alpha}, \alpha \leq 1$ for two main reasons: First, the average case analysis of \textsc{Max Cut} has not been considered in the literature so far when the input is a drawn from the random intersection graphs model, and thus the asymptotic behaviour of the maximum cut remains unknown especially for the range of values where random intersection graphs and Erd\H{o}s-R\'enyi random graphs differ the most. Furthermore, studying a model where we can implicitly control its intersection number (indeed $m$ is an obvious upper bound on the number of cliques that can cover all edges of the graph) may help understand algorithmic bottlenecks for finding maximum cuts in Erd\H{o}s-R\'{e}nyi random graphs.

Second, we note that the representation matrix $\mathbf{R}$ of a weighted random intersection graph can be used to define a random set system $\Sigma$ consisting of $m$ sets $\Sigma=\{L_1, \ldots, L_m\}$, where $L_{\ell}$ is the set of vertices that have chosen label $\ell$; we say that $\mathbf{R}$ is the \emph{incidence matrix} of $\Sigma$. Therefore, there is a natural connection between \textsc{Weighted Max Cut} and the \textsc{discrepancy} of such random set systems, which we formalize in this paper. In particular, given a set system $\Sigma$ with incidence matrix $\mathbf{R}$, its \emph{discrepancy} is defined as $\text{disc}(\Sigma) = \min_{\mathbf{x} \in \{\pm 1\}^n} \max_{L \in \Sigma} \left| \sum_{v \in L} x_v \right| = \| \mathbf{R} \mathbf{x} \|_{\infty}$, i.e. it is the minimum imbalance of all sets in $\Sigma$ over all 2-colorings $\mathbf{x}$. Recent work on the discrepancy of random rectangular matrices defined as above \cite{AN21} has shown that, when the number of labels (sets) $m$ satisfies $n \geq 0.73 m \log{m}$, the discrepancy of $\Sigma$ is at most 1 with high probability. The proof of the main result in \cite{AN21} is based on a conditional second moment method combined with Stein's method of exchangeable pairs, and improves upon a Fourier analytic result of \cite{HR19}, and also upon previous results in \cite{EL16}, \cite{P18}. The design of an efficient algorithm that can find a 2-coloring having discrepancy $O(1)$ in this range still remains an open problem. Approximation algorithms for a similar model for random set systems were designed and analyzed in \cite{BM19}; however, the algorithmic ideas there do not apply in our case.

%random set systems defined as above \cite{HR19} has shown that, when the number of labels (sets) $m$ is roughly $o(\sqrt{n})$ and $p = \frac{\Omega(\log{n})}{m}$, the discrepancy of $\Sigma$ is at most 1 with high probability. The proof of the main result in \cite{HR19} is based on Fourier analysis of an $m$-dimensional random variable, and improves upon earlier results \cite{EL16}. The design of an efficient algorithm that can find a 2-coloring having discrepancy $O(1)$ in this range still remains an open problem. Approximation algorithms for a similar model for random set systems were designed and analyzed in \cite{BM19}; however, the algorithmic ideas there do not apply in our case.

\subsection{Our Contribution}

In this paper, we introduce the model of weighted random intersection graphs and we study the average case analysis of \textsc{Weighted Max Cut} through the prism of \textsc{Discrepancy} of random set systems. We formalize the connection between these two combinatorial problems for the case of arbitrary weighted intersection graphs in Corollary \ref{thm:MaxCutvsDiscrepancy}. We prove that, given a weighted intersection graph $G = (V,E,\mathbf{R}^T \mathbf{R})$ with representation matrix $\mathbf{R}$, and a set system with incidence matrix $\mathbf{R}$, such that $\text{disc}(\Sigma) \leq 1$, a 2-coloring has maximum cut weight in $G$ if and only if it achieves minimum discrepancy in $\Sigma$. In particular, Corollary \ref{thm:MaxCutvsDiscrepancy} applies in the range of values considered in \cite{AN21} (i.e. $n \geq 0.73 m \log{m}$), and thus any algorithm that finds a maximum cut in $G(V,E,\mathbf{R}^T \mathbf{R})$ with large enough probability can also be used to find a 2-coloring with minimum discrepancy in a set system $\Sigma$ with incidence matrix $\mathbf{R}$, with the same probability of success.

We then consider weighted random intersection graphs in the case $m = n^{\alpha}, \alpha \leq 1$, and we prove that the maximum cut weight of a random instance $G(V,E,\mathbf{R}^T \mathbf{R})$ of $\overline{{\cal G}}_{n, m, p}$ concentrates around its expected value (see Theorem \ref{theorem:MaxCutconcentration}). In particular, with high probability (whp, i.e. with probability tending to 1 as $n \to \infty$) over the choices of $\mathbf{R}$, $\texttt{Max-Cut}(G) \sim \mathbb{E}_{\mathbf{R}}[\texttt{Max-Cut}(G)]$, where $\mathbb{E}_{\mathbf{R}}$ denotes expectation with respect to $\mathbf{R}$. The proof is based on the Efron-Stein inequality for upper bounding the variance of the maximum cut. As a consequence of our concentration result, we prove in Theorem \ref{theorem:randomcut} that, in the case $\alpha <1$, a random 2-coloring (i.e. biparition) $\mathbf{x}^{(rand)}$ in which each vertex chooses its color independently and equiprobably, has cut weight asymptotically equal to $\texttt{Max-Cut}(G)$, with high probability over the choices of $\mathbf{x}^{(rand)}$ and $\mathbf{R}$. 

The latter result on random cuts allows us to focus the analysis of our randomized algorithms of Section \ref{sec:randomizedalgorithms} on the case $m=n$ (i.e. $\alpha=1$), and $p = \frac{c}{n}$, for some constant $c$ (see also the discussion at the end of subsection \ref{sec:randomcut}), where the assumptions of Theorem \ref{theorem:randomcut} do not hold. It is worth noting that, in this range of values, the expected weight of a fixed edge in a weighted random intersection graph is equal to $mp^2 = \Theta(1/n)$, and thus we hope that our work here will serve as an intermediate step towards understanding when algorithmic bottlenecks for \textsc{Max Cut} appear in sparse random graphs (especially Erd\H{o}s-R\'{e}nyi random graphs) with respect to the intersection number. In particular, we analyze a Majority Cut Algorithm \ref{alg:majoritycut} that extends the algorithmic idea of \cite{CGHS04} to weighted intersection graphs as follows: vertices are colored sequentially (each color $+1$ or $-1$ corresponding to a different set in the partition of the vertices), and the $t$-th vertex is colored opposite to the sign of $\sum_{i \in [t-1]} [\mathbf{R}^T \mathbf{R}]_{i,t} x_i$, namely the total available weight of its incident edges, taking into account colors of adjacent vertices. Our average case analysis of the Majority Cut Algorithm shows that, when $m=n$ and $p = \frac{c}{n}$, for large constant $c$, with high probability over the choices of $\mathbf{R}$, the expected weight of the constructed cut is at least $1+\beta$ times larger than the expected weight of a random cut, for some constant $\beta = \beta(c) \geq \sqrt{\frac{16}{27 \pi c^3}} - o(1)$. The fact that the lower bound on beta is inversely proportional to $c^{3/2}$ was to be expected, because, as $p$ increases, the approximation of the maximum cut that we get from the weight of a random cut improves (see also the discussion at the end of subsection \ref{sec:randomcut}). %We also note that the exponent of $c$ is smaller than the exponent of $\gamma$ in the approximation factor of \cite{CGHS04} for the Erd\H{o}s-R\'{e}nyi random graphs model (i.e. $-3/2$ vs. $-1/2$), because $p$ appears squared in the (unconditional) edge probability for weighted random intersection graphs. 

In subsection \ref{sec:weakBipartization} we propose a framework for finding maximum cuts in weighted random intersection graphs for $m=n$ and $p = \frac{c}{n}$, for  constant $c$, by exploiting the connection between \textsc{Weighted Max Cut} and the problem of discrepancy minimization in random set systems. In particular, we design a Weak Bipartization Algorithm \ref{alg:WeakBipartization}, that takes as input an intersection graph with representation matrix $\mathbf{R}$ and outputs a subgraph that is ``almost'' bipartite. In fact, the input intersection graph is treated as a multigraph composed by overlapping cliques formed by the label sets $L_{\ell} = \{v: \mathbf{R}_{\ell, v}=1\}, \ell \in {\cal M}$. The algorithm attempts to destroy all odd cycles of the input (except from odd cycles that are formed by labels with only two vertices) by replacing each clique induced by some label set $L_{\ell}$ by a random maximal matching. In Theorem \ref{theorem:discrepancytomaxcut} we prove that, with high probability over the choices of $\mathbf{R}$, if the Weak Bipartization Algorithm terminates, then its output can be used to construct a 2-coloring that has minimum discrepancy in a set system with incidence matrix $\mathbf{R}$, which also gives a maximum cut in $G(V,E,\mathbf{R}^T \mathbf{R})$. It is worth noting that this does not follow from Corollary \ref{thm:MaxCutvsDiscrepancy}, because a random set system with incidence matrix $\mathbf{R}$ has discrepancy larger than 1 with (at least) constant probability when $m=n$ and $p = \frac{c}{n}$. Our proof relies on a structural property of closed 0-strong vertex-label sequences (loosely defined as closed walks of edges formed by distinct labels) in the weighted random intersection graph $G(V, E, \mathbf{R}^T \mathbf{R})$ (Lemma \ref{lem:disjointcycles}). Finally, in Theorem \ref{theorem:c<1termination}, we prove that our Weak Bipartization Algorithm terminates in polynomial time, with high probability, if the constant $c$ is strictly less than 1. Therefore, there is a polynomial time algorithm for finding weighted maximum cuts, with high probability, when the input is drawn from $\overline{{\cal G}}_{n, n, \frac{c}{n}}$, with $c<1$. We believe that this part of our work may also be of interest regarding the design of efficient algorithms for finding minimum disrepancy colorings in random set systems. 

Due to lack of space, some of the proofs are given in a clearly marked Appendix, to be read at the discretion of the program committee.

\section{Notation and preliminary results}

We denote weighted undirected graphs by $G(V, E, \mathbf{W})$; in particular, $V=V(G)$ (resp. $E=E(G)$) is the set of vertices (resp. set of edges) and $\mathbf{W} = \mathbf{W}(G)$ is the weight matrix, i.e. $\mathbf{W}_{i, j} = w_{i, j}$ is the weight of (undirected) edge $\{i,j\} \in E$. We allow $\mathbf{W}$ to have non-zero diagonal entries, as these do not affect cut weights. We also denote the number of vertices by $n$, and we use the notation $[n] = \{1,2,\ldots,n\}$. We also use this notation to define parts of matrices, for example $\mathbf{W}_{[n], 1}$ denotes the first column of the weight matrix.

A bipartition of the sets of vertices is a partition of $V$ into two sets $A, B$ such that $A \cap B = \emptyset$ and $A \cup B = V$. Bipartitions correspond to 2-colorings, which we denote by vectors $\mathbf{x}$ such that $x_i=+1$ if $i \in A$ and $x_i=-1$ if $i \in B$.

Given a weighted graph $G(V, E, \mathbf{W})$, we denote by $\texttt{Cut}(G, \mathbf{x})$ the weight of a cut defined by a bipartition $\mathbf{x}$, namely $\texttt{Cut}(G, \mathbf{x}) = \sum_{\{i, j\} \in E: i \in A, j \in B} w_{i,j} = \frac{1}{4} \sum_{\{i, j\} \in E} w_{i,j} (x_i-x_j)^2$. The maximum cut of $G$ is $\texttt{Max-Cut}(G) = \max_{\mathbf{x} \in \{-1, +1\}^n} \texttt{Cut}(G, \mathbf{x})$.

For a weighted random intersection graph $G(V,E, \mathbf{R}^T \mathbf{R})$ with representation matrix $\mathbf{R}$, we denote by $S_v$ the set of labels chosen by vertex $v \in V$, i.e. $S_v = \{\ell: \mathbf{R}_{\ell, v}=1\}$. Furthermore, we denote by $L_{\ell}$ the set of vertices having chosen label $\ell$, i.e. $L_{\ell}=\{v:\mathbf{R}_{\ell, v}=1\}$. Using this notation, the weight of an edge $\{v, u\} \in E$ is $|S_v \cap S_u|$; notice also that this is equal to 0 when $\{v, u\} \notin E$. We also note here that we may also think of a weighted random intersection graph as a simple weighted graph where, for any pair of vertices $v, u$, there are $|S_v \cap S_u|$ simple edges between them.

A set system $\Sigma$ defined on a set $V$ is a family of sets $\Sigma = \{L_1, L_2, \ldots, L_m\}$, where $L_\ell \subseteq V, \ell \in [m]$. The incidence matrix of $\Sigma$ is an $m \times n$ matrix $\mathbf{R} = \mathbf{R}(\Sigma)$, where for any $\ell \in [m], v \in [n]$, $\mathbf{R}_{\ell, v} = 1$ if $v \in S_{\ell}$ and 0 otherwise. The discrepancy of $\Sigma$ with respect to a 2-coloring $\mathbf{x}$ of the vertices in $V$ is $\text{disc}(\Sigma, \mathbf{x}) = \max_{\ell \in [m]} \left| \sum_{v \in V} \mathbf{R}_{\ell, v} x_v \right| = \| \mathbf{R} \mathbf{x} \|_{\infty}$. The discrepancy of $\Sigma$ is $\text{disc}(\Sigma) = \min_{\mathbf{x} \in \{-1, +1\}^n} \text{disc}(\Sigma, \mathbf{x})$.

%We extend the definition of Intersection Graphs to define Intersection Multigraphs as follows: Let $V$ be a set of vertices and to each vertex $v$ we assign a subset $S_v$ of a universal set ${\cal M}$. We then draw $|S_v \cap S_u|$ edges between vertices $v, u$. 

It is well-known that the cut size of a bipartition of the set of vertices of a graph $G(V,E)$ into sets $A$ and $B$ is given by $\frac{1}{4} \sum_{\{i,j\} \in E} (x_i-x_j)^2$, where $x_i=+1$ if $i \in A$ and $x_i=-1$ if $i \in B$. This can be naturally generalized for multigraphs and also for weighted graphs. In particular, the \texttt{Max-Cut} size of a weighted graph $G(V, E, \mathbf{W})$ is given by
\begin{equation} \label{eq:maxcut}
\texttt{Max-Cut}(G) = \max_{\mathbf{x} \in \{-1, +1\}^n} \frac{1}{4} \sum_{\{i,j\} \in E} \mathbf{W}_{i,j} (x_i-x_j)^2.
\end{equation} 

In particular, we get the following Corollary (refer to Section \ref{sec:corollary1proof} of the Appendix for the proof):

\begin{corollary} \label{corollary:maxcutRIGformula}
Let $G(V,E, \mathbf{R}^T \mathbf{R})$ be a weighted intersection graph with representation matrix $\mathbf{R}$. Then, for any $\mathbf{x} \in \{-1, +1\}^n$,
\begin{equation} \label{eq:cutRIGformula}
\texttt{Cut}(G, \mathbf{x}) = \frac{1}{4} \left(\sum_{i,j \in [n]^2} \left[\mathbf{R}^T \mathbf{R} \right]_{i,j} - \left\| \mathbf{R} \mathbf{x} \right\|^2 \right)
\end{equation}
and so
\begin{equation} \label{eq:maxcutRIGformula}
\texttt{Max-Cut}(G) = \frac{1}{4} \left(\sum_{i,j \in [n]^2} \left[\mathbf{R}^T \mathbf{R} \right]_{i,j} - \min_{\mathbf{x} \in \{-1, +1\}^n}  \left\| \mathbf{R} \mathbf{x} \right\|^2 \right),
\end{equation}
where $\|\cdot\|$ denotes the 2-norm. In particular, the expectation of the size of a random cut, where each entry of $\mathbf{x}$ is independently and equiprobably either +1 or -1 is equal to $\mathbb{E}_{\mathbf{x}}\left[ \texttt{Cut}(G, \mathbf{x})\right] = \frac{1}{4} \sum_{i\neq j, i,j \in [n]} \left[\mathbf{R}^T \mathbf{R} \right]_{i,j}$, where $\mathbb{E}_{\mathbf{x}}$ denotes expectation with respect to $\mathbf{x}$. 
\end{corollary}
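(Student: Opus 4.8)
The plan is to start from the weighted-cut identity already recorded in the excerpt, namely $\texttt{Cut}(G, \mathbf{x}) = \frac{1}{4}\sum_{\{i,j\} \in E} [\mathbf{R}^T \mathbf{R}]_{i,j}(x_i - x_j)^2$, and to turn the edge sum into a sum over all ordered index pairs $(i,j) \in [n]^2$. Two observations make this clean. First, $[\mathbf{R}^T \mathbf{R}]_{i,j} = 0$ whenever $\{i,j\} \notin E$ with $i \neq j$, so including the non-adjacent pairs costs nothing. Second, the diagonal terms $i = j$ vanish because $(x_i - x_i)^2 = 0$, so the (possibly nonzero) diagonal of $\mathbf{R}^T \mathbf{R}$ is irrelevant, consistent with the earlier remark that diagonal weights do not affect cuts. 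Since $\mathbf{R}^T \mathbf{R}$ is symmetric and $(x_i - x_j)^2$ is symmetric in $i,j$, each unordered edge is counted twice in the ordered double sum, giving $\texttt{Cut}(G, \mathbf{x}) = \frac{1}{8}\sum_{i,j \in [n]} [\mathbf{R}^T \mathbf{R}]_{i,j}(x_i - x_j)^2$.

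The key algebraic step is to expand $(x_i - x_j)^2 = x_i^2 - 2x_i x_j + x_j^2 = 2 - 2 x_i x_j$, using $x_i^2 = x_j^2 = 1$ for a $\pm 1$ vector. This splits the double sum into a constant part $2\sum_{i,j}[\mathbf{R}^T \mathbf{R}]_{i,j}$ and a quadratic part $-2\sum_{i,j}[\mathbf{R}^T \mathbf{R}]_{i,j} x_i x_j = -2\, \mathbf{x}^T \mathbf{R}^T \mathbf{R}\, \mathbf{x}$. The one genuinely structural identity here is $\mathbf{x}^T \mathbf{R}^T \mathbf{R}\, \mathbf{x} = (\mathbf{R}\mathbf{x})^T (\mathbf{R}\mathbf{x}) = \|\mathbf{R}\mathbf{x}\|^2$, which is precisely where the intersection-graph structure (edge weights as inner products of label-incidence columns) enters. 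Substituting yields equation \eqref{eq:cutRIGformula}.

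Equation \eqref{eq:maxcutRIGformula} is then immediate: in \eqref{eq:cutRIGformula} the term $\sum_{i,j}[\mathbf{R}^T \mathbf{R}]_{i,j}$ does not depend on $\mathbf{x}$, so maximizing $\texttt{Cut}(G, \mathbf{x})$ over $\mathbf{x} \in \{-1,+1\}^n$ is the same as minimizing $\|\mathbf{R}\mathbf{x}\|^2$, and the outer maximum turns the $-\frac14\|\mathbf{R}\mathbf{x}\|^2$ into a minimum inside the bracket. For the random-cut expectation I would again start from \eqref{eq:cutRIGformula} and take $\mathbb{E}_{\mathbf{x}}$; since the first term is deterministic, only $\mathbb{E}_{\mathbf{x}}[\|\mathbf{R}\mathbf{x}\|^2] = \sum_{i,j}[\mathbf{R}^T \mathbf{R}]_{i,j}\, \mathbb{E}_{\mathbf{x}}[x_i x_j]$ needs evaluating. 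With the $x_i$ independent and uniform on $\{\pm 1\}$ we have $\mathbb{E}_{\mathbf{x}}[x_i x_j] = \mathbf{1}[i = j]$, so the expectation collapses to the diagonal $\sum_i [\mathbf{R}^T \mathbf{R}]_{i,i}$; subtracting this from the full sum leaves exactly the off-diagonal sum $\frac14\sum_{i \neq j}[\mathbf{R}^T \mathbf{R}]_{i,j}$.

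There is no real obstacle here — the argument is a routine quadratic-form computation — so the only thing to watch is bookkeeping: keeping the ordered-versus-unordered edge counting consistent (this is where the factor $\frac18$ on the full double sum reconciles with the $\frac14$ on the edge sum), and handling the diagonal correctly in two different places, namely discarding it because $(x_i - x_i)^2 = 0$ in the cut identity, yet retaining $\mathbb{E}_{\mathbf{x}}[x_i^2] = 1$ in the expectation. One can package all of this into the single symmetric-matrix identity $\frac18\sum_{i,j}W_{i,j}(x_i - x_j)^2 = \frac14\left(\mathbf{1}^T W \mathbf{1} - \mathbf{x}^T W \mathbf{x}\right)$, valid for any symmetric $W$ and any $\mathbf{x} \in \{-1,+1\}^n$, and then specialize $W = \mathbf{R}^T \mathbf{R}$ to read off \eqref{eq:cutRIGformula} directly.
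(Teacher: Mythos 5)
Your proposal is correct and follows essentially the same route as the paper: the packaged identity $\frac{1}{8}\sum_{i,j}W_{i,j}(x_i-x_j)^2=\frac{1}{4}\bigl(\mathbf{1}^{T}W\mathbf{1}-\mathbf{x}^{T}W\mathbf{x}\bigr)$ you end with is exactly the paper's Lemma~\ref{lem:maxcutmatrix}, specialized to $W=\mathbf{R}^{T}\mathbf{R}$ via $\mathbf{x}^{T}\mathbf{R}^{T}\mathbf{R}\mathbf{x}=\|\mathbf{R}\mathbf{x}\|^{2}$, and your handling of the diagonal and of $\mathbb{E}_{\mathbf{x}}[x_ix_j]=\mathbf{1}[i=j]$ matches the paper's argument for the random-cut expectation.
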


Since $\sum_{i,j \in [n]^2} \left[\mathbf{R}^T \mathbf{R} \right]_{i,j}$ is fixed for any given representation matrix $\mathbf{R}$, the above Corollary implies that, to find a bipartition of the vertex set $V$ that corresponds to a maximum cut, we need to find an $n$-dimensional vector in $\arg \min_{\mathbf{x} \in \{-1, +1\}^n}  \left\| \mathbf{R} \mathbf{x} \right\|^2$. We thus get the following (refer to Section \ref{sec:previoustheorem} of the Appendix for the proof):

\begin{corollary} \label{thm:MaxCutvsDiscrepancy}
Let $G(V,E, \mathbf{R}^T \mathbf{R})$ be a weighted intersection graph with representation matrix $\mathbf{R}$ and $\Sigma$ a set system with incidence matrix $\mathbf{R}$. If $\text{disc}(\Sigma) \leq 1$, then $\mathbf{x}^* \in \arg \min_{\mathbf{x} \in \{-1, +1\}^n}  \left\| \mathbf{R} \mathbf{x} \right\|^2$ if and only if $\mathbf{x}^* \in \arg \min_{\mathbf{x} \in \{-1, +1\}^n}  \text{disc}(\Sigma, \mathbf{x})$. In particular, if the minimum discrepancy of $\Sigma$ is at most 1, a bipartition corresponds to a maximum cut iff it achieves minimum discrepancy.
\end{corollary}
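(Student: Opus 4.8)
The plan is to exploit the fact that the two objectives $\left\| \mathbf{R}\mathbf{x} \right\|^2$ and $\text{disc}(\Sigma,\mathbf{x}) = \left\| \mathbf{R}\mathbf{x} \right\|_{\infty}$ are both controlled coordinate-by-coordinate by the same parity obstruction. First I would observe that, for any $\ell \in [m]$ and any $\mathbf{x} \in \{-1,+1\}^n$, the $\ell$-th coordinate $(\mathbf{R}\mathbf{x})_{\ell} = \sum_{v \in L_{\ell}} x_v$ is an integer whose parity equals that of $|L_{\ell}|$, independently of the choice of $\mathbf{x}$ (flipping any single $x_v$ changes the sum by $\pm 2$). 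Writing $\mu_{\ell} := |L_{\ell}| \bmod 2 \in \{0,1\}$, this yields the coordinate-wise lower bound $|(\mathbf{R}\mathbf{x})_{\ell}| \geq \mu_{\ell}$, and hence $\left\| \mathbf{R}\mathbf{x} \right\|^2 \geq \sum_{\ell} \mu_{\ell}^2$ and $\left\| \mathbf{R}\mathbf{x} \right\|_{\infty} \geq \max_{\ell} \mu_{\ell}$ for every $\mathbf{x}$.

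Next I would use the hypothesis $\text{disc}(\Sigma) \leq 1$ to show these bounds are simultaneously attainable. By definition there is a coloring with $\left\| \mathbf{R}\mathbf{x} \right\|_{\infty} \leq 1$; combined with the parity constraint this forces $|(\mathbf{R}\mathbf{x})_{\ell}| = \mu_{\ell}$ for every $\ell$ (an even-size $L_{\ell}$ must give $0$, an odd-size $L_{\ell}$ must give $\pm 1$). Let $\mathcal{X}^* := \{\mathbf{x} \in \{-1,+1\}^n : |(\mathbf{R}\mathbf{x})_{\ell}| = \mu_{\ell} \text{ for all } \ell\}$; the existence of such a coloring shows $\mathcal{X}^* \neq \emptyset$, and $\mathcal{X}^*$ is exactly the set of colorings meeting all the coordinate-wise lower bounds with equality.

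It then remains to identify both argmin sets with $\mathcal{X}^*$. For the $2$-norm objective, since each summand satisfies $(\mathbf{R}\mathbf{x})_{\ell}^2 \geq \mu_{\ell}^2$, equality in $\sum_{\ell} (\mathbf{R}\mathbf{x})_{\ell}^2 = \sum_{\ell} \mu_{\ell}^2$ is equivalent to equality in every coordinate, so $\arg\min_{\mathbf{x}} \left\| \mathbf{R}\mathbf{x} \right\|^2 = \mathcal{X}^*$. For the discrepancy objective I would split into two cases: if every $|L_{\ell}|$ is even then $\text{disc}(\Sigma)=0$ and $\arg\min_{\mathbf{x}} \text{disc}(\Sigma,\mathbf{x})$ is the set where all coordinates vanish, which is $\mathcal{X}^*$; otherwise some $|L_{\ell}|$ is odd, forcing $\text{disc}(\Sigma)=1$ (at least $1$ by parity, at most $1$ by hypothesis), and $\mathbf{x}$ attains $\left\| \mathbf{R}\mathbf{x} \right\|_{\infty} = 1$ iff all coordinates are at most $1$ in absolute value, which by parity is again exactly $\mathcal{X}^*$. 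In both cases $\arg\min_{\mathbf{x}} \text{disc}(\Sigma,\mathbf{x}) = \mathcal{X}^* = \arg\min_{\mathbf{x}} \left\| \mathbf{R}\mathbf{x} \right\|^2$, which is the claimed equivalence; the concluding statement about maximum cuts then follows immediately from Corollary \ref{corollary:maxcutRIGformula}.

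The step I expect to require the most care is the second one: the whole argument hinges on the observation that the hypothesis $\text{disc}(\Sigma) \leq 1$ is precisely strong enough to force the $\ell_{\infty}$-minimizer to meet the parity-induced lower bound in every coordinate at once, so that the per-coordinate minima are achieved simultaneously. Without this hypothesis the two argmin sets can genuinely differ, so the case analysis on the parities of the $|L_{\ell}|$ is exactly where the hypothesis must be invoked and is the only place the equivalence could break.
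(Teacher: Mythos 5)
Your proof is correct and follows essentially the same route as the paper's: the coordinate-wise parity lower bound $|(\mathbf{R}\mathbf{x})_{\ell}| \geq |L_{\ell}| \bmod 2$, combined with the hypothesis $\text{disc}(\Sigma)\leq 1$ forcing a coloring that attains all these bounds simultaneously, so that both the $\ell_2$ and $\ell_\infty$ minimizers coincide with the set of colorings meeting every bound with equality. Your write-up is in fact more careful than the paper's terse appendix argument (in particular the explicit identification of both argmin sets with $\mathcal{X}^*$), but the underlying idea is identical.
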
  

Notice that above result is not necessarily true when $\text{disc}(\Sigma) > 1$, since the minimum of $\| \mathbf{R} \mathbf{x} \|$ could be achieved by 2-colorings with larger discrepancy than the optimal.

\subsection{Range of values for $p$}% for $m, p$}

Concerning the success probability $p$, we note that, when $p = o\left( \sqrt{\frac{1}{nm}} \right)$, direct application of the results of \cite{BTU09} suggest that $G(V,E, \mathbf{R}^T \mathbf{R})$ is chordal with high probability, but in fact the same proofs reveal that a stronger property holds, namely that there is no closed vertex-label sequence (refer to the precise definition in subsection \ref{sec:weakBipartization}) having distinct labels. Therefore, in this case, finding a bipartition with maximum cut weight is straightforward: indeed, one way to construct a maximum cut is to run our Weak Bipartization Algorithm \ref{alg:WeakBipartization} from subsection \ref{sec:weakBipartization}, and then to apply Theorem \ref{theorem:discrepancytomaxcut} (noting that Weak Bipartization termination condition trivially holds, since the set ${\cal C}_{odd}(G^{(b)})$ defined in subsection \ref{sec:weakBipartization} is empty). Furthermore, even though we consider weighted graphs, we will also assume that $mp^2 = O(1)$, noting that, otherwise, $G(V,E, \mathbf{R}^T \mathbf{R})$ will be almost complete with high probability (indeed, the unconditional edge existence probability is $1-(1-p^2)^m$, which tends to 1 for $mp^2 = \omega(1)$). In particular, we will assume that $C_1 \sqrt{\frac{1}{nm}} \leq p \leq C_2 \frac{1}{\sqrt{m}}$, for arbitrary positive constants $C_1, C_2$; $C_1$ can be as small as possible, and $C_2$ can be as large as possible, provided $C_2 \frac{1}{\sqrt{m}} \leq 1$. We note that, when $p$ is asymptotically equal to the upper bound $C_2 \frac{1}{\sqrt{m}}$, there is no constant weight upper bound that holds with high probability, whereas, when $p$ is asymptotically equal to the lower bound $C_1 \sqrt{\frac{1}{nm}}$, all weights in the graph are bounded by a small constant with high probability. Our results in Section \ref{sec:concentration} assume this range of values for $p$, and thus graph instances may contain edges with large (but constant) weights. On the other hand, in the analysis of our randomized algorithms in section \ref{sec:randomizedalgorithms}, we assume $n=m$ and $p = \Theta\left(\frac{1}{n} \right)$; this range of values gives sparse graph instances (even though the distribution is different from sparse Erd\H{o}s-R\'{e}nyi random graphs).

%%%%%%%%%%%%%%%%% CONCENTRATION %%%%%%%%%%%%%%%%%%%%%%%%%%%%%%

\section{Concentration of Max-Cut} \label{sec:concentration}

In this section we prove that the size of the maximum cut in a weighted random intersection graph concentrates around its expected value. We note however, that the following Theorem does not provide an explicit formula for the expected value of the maximum cut. 

\begin{theorem} \label{theorem:MaxCutconcentration}
Let $G(V,E, \mathbf{R}^T \mathbf{R})$ be a random instance of the $\overline{{\cal G}}_{n, m, p}$ model with $m=n^a, \alpha \leq 1$, and $C_1 \sqrt{\frac{1}{nm}} \leq p \leq 1$, for arbitrary positive constant $C_1$, and let $\mathbf{R}$ be its representation matrix. Then $\texttt{Max-Cut}(G) \sim \mathbb{E}_{\mathbf{R}}[\texttt{Max-Cut}(G)]$ with high probability, where $\mathbb{E}_{\mathbf{R}}$ denotes expectation with respect to $\mathbf{R}$, i.e. $\texttt{Max-Cut}(G)$ concentrates around its expected value. 
\end{theorem}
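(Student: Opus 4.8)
The plan is to control $\mathrm{Var}_{\mathbf{R}}[\texttt{Max-Cut}(G)]$ via the Efron--Stein inequality and then conclude with Chebyshev. The natural source of independent randomness is the collection of $mn$ Bernoulli entries $\mathbf{R}_{\ell,v}$, so I would view $\texttt{Max-Cut}(G)$ as a function $f(\mathbf{R})$ of these independent variables. Letting $\mathbf{R}^{(\ell,v)}$ denote $\mathbf{R}$ with the single entry $\mathbf{R}_{\ell,v}$ replaced by an independent copy $\mathbf{R}_{\ell,v}'$, Efron--Stein gives
\[
\mathrm{Var}_{\mathbf{R}}[f] \leq \frac{1}{2}\sum_{\ell \in [m]}\sum_{v \in [n]} \mathbb{E}\left[\left(f(\mathbf{R}) - f(\mathbf{R}^{(\ell,v)})\right)^2\right].
\]

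First I would establish a bounded-difference estimate for a single coordinate. Flipping $\mathbf{R}_{\ell,v}$ changes the weight $[\mathbf{R}^T\mathbf{R}]_{v,u}$ by exactly $\pm 1$ for each other vertex $u$ that has chosen label $\ell$, and leaves every other weight unchanged. Since for any fixed bipartition $\mathbf{x}$ each such edge contributes $\frac{1}{4}(x_v-x_u)^2 \in \{0,1\}$ to $\texttt{Cut}(G,\mathbf{x})$, the value of every fixed cut changes by at most $N_{\ell,v} := |\{u \neq v : \mathbf{R}_{\ell,u}=1\}|$, and hence so does the maximum over $\mathbf{x}$; this yields $|f(\mathbf{R}) - f(\mathbf{R}^{(\ell,v)})| \leq N_{\ell,v}$. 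The point that keeps the computation clean is that $N_{\ell,v}$ depends only on the entries $\{\mathbf{R}_{\ell,u}: u\neq v\}$, which are independent of the resampling event $\{\mathbf{R}_{\ell,v} \neq \mathbf{R}_{\ell,v}'\}$ (an event of probability $2p(1-p)$). Since $N_{\ell,v} \sim \mathrm{Bin}(n-1,p)$, we have $\mathbb{E}[N_{\ell,v}^2] = (n-1)p(1-p) + ((n-1)p)^2 = O(np + n^2p^2)$, so each summand is $O\big(p(np+n^2p^2)\big)$, and summing over the $mn$ coordinates gives $\mathrm{Var}_{\mathbf{R}}[\texttt{Max-Cut}(G)] = O\big(mn^2p^2 + mn^3p^3\big)$.

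Next I would lower bound the expectation. By Corollary \ref{corollary:maxcutRIGformula}, for every fixed $\mathbf{R}$ the quantity $\texttt{Max-Cut}(G)$ is at least the expected weight of a random cut, whose $\mathbf{R}$-average is $\frac{1}{4}\sum_{i\neq j}\mathbb{E}_{\mathbf{R}}\big[[\mathbf{R}^T\mathbf{R}]_{i,j}\big] = \frac{1}{4}n(n-1)mp^2 = \Theta(n^2mp^2)$, using $\mathbb{E}\big[[\mathbf{R}^T\mathbf{R}]_{i,j}\big] = mp^2$ for $i\neq j$; since $\texttt{Max-Cut}(G)$ is trivially at most the total edge weight $\frac{1}{2}\sum_{i\neq j}[\mathbf{R}^T\mathbf{R}]_{i,j}$, we get $\mathbb{E}_{\mathbf{R}}[\texttt{Max-Cut}(G)] = \Theta(n^2mp^2)$. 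Consequently,
\[
\frac{\mathrm{Var}_{\mathbf{R}}[\texttt{Max-Cut}(G)]}{\left(\mathbb{E}_{\mathbf{R}}[\texttt{Max-Cut}(G)]\right)^2} = O\left(\frac{1}{n^2mp^2} + \frac{1}{nmp}\right).
\]
Under the hypothesis $p \geq C_1\sqrt{1/(nm)}$ one checks $n^2mp^2 = \Omega(n) \to \infty$ and $nmp = \Omega(\sqrt{nm}) \to \infty$ (recall $m=n^\alpha$ with $\alpha \leq 1$), so the ratio tends to $0$, and Chebyshev's inequality yields $\texttt{Max-Cut}(G) \sim \mathbb{E}_{\mathbf{R}}[\texttt{Max-Cut}(G)]$ with high probability.

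I expect the main obstacle to be the bounded-difference step: one must argue cleanly that resampling a single entry perturbs $\texttt{Max-Cut}$ by at most $N_{\ell,v}$ (stability of a maximum under a uniformly bounded perturbation of the objective), and must exploit the independence between $N_{\ell,v}$ and the resampled coordinate to avoid correlations that would inflate the variance bound. Everything downstream is a routine second-moment computation, but obtaining a per-coordinate bound tight enough to beat the crude worst-case $O(n)$ is precisely what makes the final ratio vanish across the entire stated range of $p$.
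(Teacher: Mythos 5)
Your proposal is correct and follows essentially the same route as the paper's proof: Efron--Stein applied to the $mn$ independent entries of $\mathbf{R}$, a per-coordinate difference bounded by $|L_\ell\setminus\{v\}|\sim\mathrm{Bin}(n-1,p)$ whose independence from the resampling event yields a variance bound of order $n^3mp^3$ (your extra $mn^2p^2$ term is absorbed since $np=\Omega(1)$ in the stated range), and Chebyshev against $\mathbb{E}_{\mathbf{R}}[\texttt{Max-Cut}(G)]=\Theta(n^2mp^2)$. No gaps; your explicit justification of the bounded-difference step via stability of the maximum is slightly more detailed than the paper's but amounts to the same argument.
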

\begin{proof} Let $G=G(V,E, \mathbf{R}^T \mathbf{R})$ be a weighted random intersection graph, and let $\mathbf{D}$ denote the (random) diagonal matrix containing all diagonal elements of $\mathbf{R}^T\mathbf{R}$. In particular, equation (\ref{eq:maxcutRIGformula}) of Corollary \ref{corollary:maxcutRIGformula} can be written as 
\begin{equation*}
\texttt{Max-Cut}(G) = \frac{1}{4} \left(\sum_{i\neq j, i,j \in [n]} \left[\mathbf{R}^T \mathbf{R} \right]_{i,j} - \min_{\mathbf{x} \in \{-1, +1\}^n}  \mathbf{x}^T \left(\mathbf{R}^T \mathbf{R} -\mathbf{D}\right) \mathbf{x} \right).
\end{equation*}
Furthermore, for any given $\mathbf{R}$, notice that, if we select each element of $\mathbf{x}$ independently and equiprobably from $\{-1, +1\}$, then $\mathbb{E}_{\mathbf{x}}[\mathbf{x}^T \left(\mathbf{R}^T \mathbf{R} -\mathbf{D}\right) \mathbf{x}]=0$, where $\mathbb{E}_{\mathbf{x}}$ denotes expectation with respect to $\mathbf{x}$. Therefore, by the probabilistic method, $\min_{\mathbf{x} \in \{-1, +1\}^n}  \mathbf{x}^T \left(\mathbf{R}^T \mathbf{R} -\mathbf{D}\right) \mathbf{x} \leq 0$, implying the following bound:
\begin{equation} \label{eq:maxcutbounds}
\frac{1}{4} \sum_{i\neq j, i,j \in [n]} \left[\mathbf{R}^T \mathbf{R} \right]_{i,j} \leq \texttt{Max-Cut}(G) \leq \frac{1}{2} \sum_{i\neq j, i,j \in [n]} \left[\mathbf{R}^T \mathbf{R} \right]_{i,j},
\end{equation}
where the second inequality follows trivially by observing that $\frac{1}{2} \sum_{i\neq j, i,j \in [n]} \left[\mathbf{R}^T \mathbf{R} \right]_{i,j}$ equals the sum of the weights of all edges.

By linearity, $\mathbb{E}_{\mathbf{R}}\left[\sum_{i\neq j, i,j \in [n]} \left[\mathbf{R}^T \mathbf{R} \right]_{i,j} \right] = \mathbb{E}_{\mathbf{R}}\left[\sum_{i\neq j, i,j \in [n]} \sum_{\ell \in [m]} \mathbf{R}_{\ell,i} \mathbf{R}_{\ell, j} \right] = n(n-1)mp^2 = \Theta(n^2mp^2)$, which goes to infinity as $n \to \infty$, because $np = \Omega\left(\sqrt{\frac{n}{m}} \right) = \Omega(1)$ in the range of parameters that we consider. In particular, by (\ref{eq:maxcutbounds}), we have 
\begin{equation} \label{eq:expectationasymptotics}
\mathbb{E}_{\mathbf{R}}[\texttt{Max-Cut}(G)] = \Theta(n^2mp^2).
\end{equation}
By Chebyshev's inequality, for any $\epsilon>0$, we have
\begin{equation} \label{eq:Chebyshevbound}
\Pr\left(\left| \texttt{Max-Cut}(G)-\mathbb{E}_{\mathbf{R}}[\texttt{Max-Cut}(G)]\right| \geq \epsilon n^2mp^2 \right) \leq \frac{\text{Var}_{\mathbf{R}}(\texttt{Max-Cut}(G))}{\epsilon^2 n^4m^2p^4},
\end{equation}
where $\text{Var}_{\mathbf{R}}$ denotes variance with respect to $\mathbf{R}$. To bound the variance on the right hand side of the above inequality, we use the Efron-Stein inequality. In particular, we write $\texttt{Max-Cut}(G):= f(\mathbf{R})$, i.e. we view $\texttt{Max-Cut}(G)$ as a function of the label choices. For $\ell \in [m], i \in [n]$, we also write $\mathbf{R}^{(\ell, i)}$ for the matrix $\mathbf{R}$ where entry $(\ell, i)$ has been replaced by an independent, identically distributed (i.i.d.) copy of $\mathbf{R}_{\ell, i}$, which we denote by $\mathbf{R}_{\ell, i}'$. By the Efron-Stein inequality, we have
\begin{equation} \label{eq:EfronStein}
\text{Var}_{\mathbf{R}}(\texttt{Max-Cut}(G)) \leq \frac{1}{2} \sum_{\ell \in [m], i \in [n]} \mathbb{E}\left[ \left( f(\mathbf{R}) - f\left(\mathbf{R}^{(\ell, i)} \right)\right)^2 \right].
\end{equation} 
Notice now that, given all entries of $\mathbf{R}$ except $\mathbf{R}_{\ell, i}$, the probability that $f(\mathbf{R})$ is different from $f\left(\mathbf{R}^{(\ell, i)} \right)$ is at most $\Pr(\mathbf{R}_{\ell, i} \neq \mathbf{R}_{\ell, i}') = 2p(1-p)$. Furthermore, if $L_{\ell} \backslash \{i\}$ is the set of vertices different from $i$ which have selected $\ell$, we then have that $\left( f(\mathbf{R}) - f\left(\mathbf{R}^{(\ell, i)} \right)\right)^2 \leq |L_{\ell} \backslash \{i\}|^2$, because the intersection graph with representation matrix $\mathbf{R}$ differs by at most $|L_{\ell} \backslash \{i\}|$ edges from the intersection graph with representation matrix $\mathbf{R}^{(\ell, i)}$. Also note that, by definition, $|L_{\ell} \backslash \{i\}|$ follows the Binomial distribution ${\cal B}(n-1, p)$. In particular, $\mathbb{E} \left[ |L_{\ell} \backslash \{i\}|^2 \right] = (n-1)p(np-2p+1)$, implying $\mathbb{E}\left[ \left( f(\mathbf{R}) - f\left(\mathbf{R}^{(\ell, i)} \right)\right)^2 \right] \leq 2p(1-p) (n-1)p(np-2p+1)$, for any fixed $\ell \in [m], i \in [n]$. 

Putting this all together, (\ref{eq:EfronStein}) becomes
\begin{eqnarray} 
\text{Var}_{\mathbf{R}}(\texttt{Max-Cut}(G)) & \leq & \frac{1}{2} \sum_{\ell \in [m], i \in [n]} 2p(1-p) (n-1)p(np-2p+1) \nonumber \\
& = & nm p(1-p) (n-1)p(np-2p+1) = O(n^3mp^3),
\end{eqnarray} 
%where the last equation comes from the fact that, in the range of values that we consider, we have $p = o(1)$ and $np = \Omega(1)$. 
Therefore, by (\ref{eq:Chebyshevbound}), we get 
\begin{displaymath} 
\Pr\left(\left| \texttt{Max-Cut}(G)-\mathbb{E}_{\mathbf{R}}[\texttt{Max-Cut}(G)]\right| \geq \epsilon n^2mp^2 \right) \leq \frac{O(n^3mp^3)}{\epsilon^2 n^4m^2p^4} = O\left( \frac{1}{\epsilon^2 nmp}\right),
\end{displaymath}
which goes to 0 in the range of values that we consider. Together with (\ref{eq:expectationasymptotics}), the above bound proves that $\texttt{Max-Cut}(G)$ is concentrated around its expected value. 
\end{proof}

\subsection{\texttt{Max-Cut} for small number of labels} \label{sec:randomcut}

Using Theorem \ref{theorem:MaxCutconcentration}, we can now show that, in the case $m = n^{\alpha}, \alpha<1$, and $p = O\left( \frac{1}{\sqrt{m}}\right)$, a random cut has asymptotically the same weight as $\texttt{Max-Cut}(G)$, where $G=G(V,E, \mathbf{R}^T \mathbf{R})$ is a random instance of $\overline{\cal G}_{n, m, p}$. In particular, let $\mathbf{x}^{(rand)}$ be constructed as follows: for each $i \in [n]$, set $x^{(rand)}_{i} = -1$ independently with probability $\frac{1}{2}$, and $x^{(rand)}_{i} = +1$ otherwise. 

The proof details of the following Theorem can be found in Section \ref{sec:theorem3proof} of the Appendix. In view of equation (\ref{eq:maxcutRIGformula}), the main idea is to prove that, with high probability over random $\mathbf{x}$ and $\mathbf{R}$, $\|\mathbf{R} \mathbf{x}\|^2$ is asymptotically smaller than the expectation of the weight of the cut defined by $\mathbf{x}^{(rand)}$, in which case the theorem follows by concentration of $\texttt{Max-Cut}(G)$ around its expected value (Theorem \ref{theorem:MaxCutconcentration}), and straightforward bounds on $\texttt{Max-Cut}(G)$.

%\medskip 

%\begin{algorithm}[H]
%\SetAlgoLined
%\KwIn{$G_{n, m, p}$ and its representation matrix $\mathbf{R} \in \{0,1\}^{m \times n}$}
%\KwOut{2-coloring $\mathbf{x}^{(rand)} \in \{-1,+1\}^n$}

%\For{$i=1$ \KwTo $n$}{
%		Set $x^{(rand)}_{i} = -1$ independently with probability $\frac{1}{2}$ and $x^{(rand)}_{i} = +1$ otherwise\;
%}

%\Return{$\mathbf{x}^{(rand)}$\;}

%\caption{Random Cut}
%\end{algorithm}

%\medskip

\begin{theorem} \label{theorem:randomcut}
Let $G(V,E, \mathbf{R}^T \mathbf{R})$ be a random instance of the $\overline{{\cal G}}_{n, m, p}$ model with $m=n^a, \alpha < 1$, and $C_1 \sqrt{\frac{1}{nm}} \leq p \leq C_2 \frac{1}{\sqrt{m}}$, for arbitrary positive constants $C_1, C_2$, and let $\mathbf{R}$ be its representation matrix. Then the cut weight of the random 2-coloring $\mathbf{x}^{(rand)}$ satisfies $\texttt{Cut}(G, \mathbf{x}^{(rand)}) = (1-o(1)) \texttt{Max-Cut}(G)$ with high probability over the choices of $\mathbf{x}^{(rand)}$, $\mathbf{R}$.
\end{theorem}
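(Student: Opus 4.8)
The plan is to control the gap $\texttt{Max-Cut}(G) - \texttt{Cut}(G, \mathbf{x}^{(rand)})$ directly through the exact formulas of Corollary \ref{corollary:maxcutRIGformula}. Subtracting (\ref{eq:cutRIGformula}) from (\ref{eq:maxcutRIGformula}), the common total-weight term $\sum_{i,j \in [n]^2}[\mathbf{R}^T\mathbf{R}]_{i,j}$ cancels, leaving
\[
\texttt{Max-Cut}(G) - \texttt{Cut}(G, \mathbf{x}^{(rand)}) = \frac{1}{4}\left( \left\|\mathbf{R}\mathbf{x}^{(rand)}\right\|^2 - \min_{\mathbf{x} \in \{-1,+1\}^n} \left\|\mathbf{R}\mathbf{x}\right\|^2 \right) \leq \frac{1}{4}\left\|\mathbf{R}\mathbf{x}^{(rand)}\right\|^2,
\]
the inequality using only $\min_{\mathbf{x}}\|\mathbf{R}\mathbf{x}\|^2 \geq 0$. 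So it suffices to show that $\|\mathbf{R}\mathbf{x}^{(rand)}\|^2$ is asymptotically negligible compared with $\texttt{Max-Cut}(G)$, which by (\ref{eq:expectationasymptotics}) and Theorem \ref{theorem:MaxCutconcentration} is $\Theta(n^2 m p^2)$ with high probability.

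First I would estimate the correction term in expectation. Writing $\|\mathbf{R}\mathbf{x}\|^2 = \sum_{\ell \in [m]} (\sum_{v \in L_\ell} x_v)^2$ and expanding each square, the independence and mean-zero property of the coordinates of $\mathbf{x}^{(rand)}$ give $\mathbb{E}_{\mathbf{x}}[(\sum_{v \in L_\ell} x_v)^2] = |L_\ell|$, hence $\mathbb{E}_{\mathbf{x}}[\|\mathbf{R}\mathbf{x}^{(rand)}\|^2] = \sum_\ell |L_\ell|$ and, averaging over $\mathbf{R}$, $\mathbb{E}_{\mathbf{x},\mathbf{R}}[\|\mathbf{R}\mathbf{x}^{(rand)}\|^2] = nmp$. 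The decisive observation is the separation of scales: the ratio of this quantity to the total weight $\Theta(n^2 m p^2)$ is $\frac{1}{np}$, and in the regime $\alpha < 1$ the lower bound on $p$ forces $np \geq C_1\sqrt{n/m} = C_1 n^{(1-\alpha)/2} \to \infty$. This is exactly where the hypothesis $\alpha < 1$ enters; for $\alpha = 1$ the quantity $np$ need only be bounded below by a constant, and the argument breaks down (consistent with the restriction to $\alpha<1$).

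Given this gap between scales, I would pick a slowly vanishing $\epsilon_n \to 0$ with $\epsilon_n np \to \infty$ (for instance $\epsilon_n = (np)^{-1/2}$) and apply Markov's inequality to the nonnegative variable $\|\mathbf{R}\mathbf{x}^{(rand)}\|^2$, obtaining
\[
\Pr\!\left(\left\|\mathbf{R}\mathbf{x}^{(rand)}\right\|^2 \geq \epsilon_n\, n^2 m p^2\right) \leq \frac{nmp}{\epsilon_n\, n^2 m p^2} = \frac{1}{\epsilon_n\, np} \to 0,
\]
so that with high probability over both $\mathbf{x}^{(rand)}$ and $\mathbf{R}$ the gap is at most $\frac{1}{4}\epsilon_n n^2 m p^2 = o(n^2 m p^2)$. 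Intersecting this event with the high-probability event from Theorem \ref{theorem:MaxCutconcentration} on which $\texttt{Max-Cut}(G) = \Theta(n^2 m p^2)$, the gap becomes $o(\texttt{Max-Cut}(G))$, which is precisely $\texttt{Cut}(G, \mathbf{x}^{(rand)}) = (1-o(1))\texttt{Max-Cut}(G)$.

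I do not expect a serious obstacle, since everything reduces to a first-moment (Markov) estimate once the two scales $nmp$ and $n^2 m p^2$ are identified; crucially, no second-moment computation for $\|\mathbf{R}\mathbf{x}^{(rand)}\|^2$ is needed, because we only require an \emph{upper} bound on it. The one point needing care is the bookkeeping of the two independent sources of randomness: the Markov bound is naturally taken over the joint law of $(\mathbf{x}^{(rand)}, \mathbf{R})$, whereas the concentration of $\texttt{Max-Cut}(G)$ is a statement about $\mathbf{R}$ alone, so I would combine the two high-probability events by a union bound to conclude ``with high probability over the choices of $\mathbf{x}^{(rand)}$, $\mathbf{R}$'' as stated.
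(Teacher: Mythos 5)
Your proposal is correct, and its overall skeleton coincides with the paper's: both reduce the theorem to showing that $\|\mathbf{R}\mathbf{x}^{(rand)}\|^2 = o(n^2mp^2)$ with high probability over the joint law of $(\mathbf{x}^{(rand)},\mathbf{R})$, and then invoke the concentration of $\texttt{Max-Cut}(G)$ around its $\Theta(n^2mp^2)$ expectation from Theorem \ref{theorem:MaxCutconcentration}. Where you genuinely diverge is in how that bound on $\|\mathbf{R}\mathbf{x}^{(rand)}\|^2$ is obtained. The paper works row by row: a Chernoff bound shows every row sum $Y_\ell$ is at most $3np$ whp, Hoeffding's inequality then gives $|Z_\ell| \leq \sqrt{6np\ln n}$ for every $\ell$ after a union bound, and hence $\|\mathbf{R}\mathbf{x}\|^2 \leq 6mnp\ln n$; this forces the hypothesis $np = \omega(\ln n)$ to absorb the logarithmic loss. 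You instead compute $\mathbb{E}_{\mathbf{x},\mathbf{R}}[\|\mathbf{R}\mathbf{x}^{(rand)}\|^2] = nmp$ exactly and apply a single Markov inequality at level $\epsilon_n n^2mp^2$ with $\epsilon_n = (np)^{-1/2}$, which only requires $np \to \infty$. Both conditions hold comfortably in the stated regime ($np \geq C_1 n^{(1-\alpha)/2}$), but your first-moment argument is more elementary, needs no second-moment or tail computation as you correctly note, and would extend to a slightly wider range (e.g.\ $np = \omega(1)$ but $np = O(\ln n)$) where the paper's union-bound route does not directly apply; the price is that you get no pointwise control on individual coordinates $[\mathbf{R}\mathbf{x}]_\ell$, which the paper's version does provide but never actually uses. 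Your bookkeeping of the two sources of randomness and the final union bound with the concentration event are handled correctly.
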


We note that the same analysis also holds when $n=m$ and $p$ is sufficiently large (e.g. $p = \omega(\frac{\ln{n}}{n})$); more details can be found at the end of Section \ref{sec:theorem3proof} of the Appendix. In view of this, in the following sections we will only assume $m=n$ (i.e. $\alpha=1$) and also $p =  \frac{c}{n}$, for some positive constant $c$. Besides avoiding complicated formulae for $p$, the reason behind this assumption is that, in this range of values, the expected weight of a fixed edge in $G(V,E, \mathbf{R}^T \mathbf{R})$ is equal to $mp^2 = \Theta(1/n)$, and thus we hope that our work will serve as an intermediate step towards understanding algorithmic bottlenecks for finding maximum cuts in Erd\H{o}s-R\'{e}nyi random graphs $G_{n, c/n}$ with respect to their intersection number.

\section{Algorithmic results (randomized algorithms)} \label{sec:randomizedalgorithms}

\subsection{The Majority Cut Algorithm} \label{sec:MajorityAlgorithm}

In the following algorithm, the 2-coloring representing the bipartition of a cut is constructed as follows: initially, a small constant fraction $\epsilon$ of vertices are randomly placed in the two partitions, and then in each subsequent step, one of the remaining vertices is placed in the partition that maximizes the weight of incident edges with endpoints in the opposite partition. 

%\medskip 

\begin{algorithm}[H] \label{alg:majoritycut}
%\SetAlgoLined
\KwIn{$G(V,E, \mathbf{R}^T \mathbf{R})$ and its representation matrix $\mathbf{R} \in \{0,1\}^{m \times n}$}
\KwOut{Large cut 2-coloring $\mathbf{x} \in \{-1,+1\}^n$}

Let $v_1, \ldots, v_n$ an arbitrary ordering of vertices\;
\For{$t=1$ \KwTo $\epsilon n$}{
	Set $x_t$ to either $-1$ or $+1$ independently with equal probability\;
}
\For{$t=\epsilon n+1$ \KwTo $n$}{
	\eIf{$\sum_{i \in [t-1]} [\mathbf{R}^T \mathbf{R}]_{i,t} x_i \geq 0$}{
	%\eIf{$|N(v_i) \cap \{v_j: x_j=+1\}| \geq \frac{|N(v_t)|}{2}$}{
	$x_t=-1$\;}{
	$x_t=+1$\;}
}
\Return{$\mathbf{x}$\;}

\caption{Majority Cut}
\end{algorithm}

%\medskip 

Clearly the Majority Algorithm runs in polynomial time in $n, m$. Furthermore, the following Theorem provides a lower bound on the expected weight of the cut constructed by the algorithm in the case $m=n$, $p = \frac{c}{n}$, for large constant $c$, and $\epsilon \to 0$. The full proof details can be found in Section \ref{sec:majoritycutanalysis} of the Appendix. 

\begin{theorem} \label{thm:majoritycutanalysis}
Let $G(V,E, \mathbf{R}^T \mathbf{R})$ be a random instance of the $\overline{{\cal G}}_{n, m, p}$ model, with $m=n$, and $p = \frac{c}{n}$, for large positive constant $c$, and let $\mathbf{R}$ be its representation matrix. Then, with high probability over the choices of $\mathbf{R}$, the majority algorithm constructs a cut with expected weight at least $(1+\beta) \frac{1}{4} \mathbb{E}\left[\sum_{i\neq j, i,j \in [n]} \left[\mathbf{R}^T \mathbf{R} \right]_{i,j} \right]$, where $\beta = \beta(c) \geq \sqrt{\frac{16}{27 \pi c^3}} - o(1)$ is a constant, i.e. at least $1+\beta$ times larger than the expected weight of a random cut. 
\end{theorem}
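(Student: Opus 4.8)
The plan is to write the internal-randomness expected cut weight exactly as the random-cut baseline plus a nonnegative ``greedy gain'', and then to lower bound that gain. For fixed $\mathbf{R}$, write $D_t := \sum_{i \in [t-1]} [\mathbf{R}^T\mathbf{R}]_{i,t} x_i$ for the signed weight seen by vertex $t$ when it is colored. For $t>\epsilon n$ the algorithm sets $x_t = -\text{sign}(D_t)$, so the weight of cut edges it creates toward already-colored vertices is $\frac12\sum_{i<t}[\mathbf{R}^T\mathbf{R}]_{i,t} + \frac12|D_t|$, namely the random-cut value plus a gain of $\frac12|D_t|$. Summing over $t$, noting each unordered pair is counted once, and invoking Corollary~\ref{corollary:maxcutRIGformula}, I would obtain the identity
\[ \mathbb{E}_{\mathbf{x}}[\texttt{Cut}(G,\mathbf{x})] = \frac{1}{4}\sum_{i \neq j} [\mathbf{R}^T\mathbf{R}]_{i,j} + \frac{1}{2}\sum_{t > \epsilon n} \mathbb{E}_{\mathbf{x}}[|D_t|], \]
where $\mathbb{E}_{\mathbf{x}}$ is over the $\epsilon n$ initial random colors. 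Equivalently, in the discrepancy picture of Corollary~\ref{corollary:maxcutRIGformula}, a greedy step adds $|S_t|-2|D_t|$ to $\|\mathbf{R}\mathbf{x}\|^2$ rather than the expected increment $|S_t|$ of a random step. Since the first term is $\Theta(n^2mp^2)$ and concentrates over $\mathbf{R}$ exactly as in the proof of Theorem~\ref{theorem:MaxCutconcentration}, everything reduces to lower bounding $\sum_{t>\epsilon n}\mathbb{E}_{\mathbf{x}}[|D_t|]$ by $\Omega(n\sqrt{c})$ with an explicit constant matching $\beta \geq \sqrt{16/(27\pi c^3)} - o(1)$.

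To estimate the $\mathbf{R}$-average of $\mathbb{E}_{\mathbf{x}}[|D_t|]$, I would process vertices in order and reveal column $t$ of $\mathbf{R}$ only when coloring vertex $t$ (deferred decisions). Conditioned on the history $\mathcal{F}_{t-1}$ (columns $1,\dots,t-1$ and the colors $x_1,\dots,x_{t-1}$ they induce), the label set of $t$ is a fresh Bernoulli$(p)$ vector, so $D_t = \sum_{\ell}\mathbf{R}_{\ell,t} Y_\ell$ with $Y_\ell := \sum_{i<t}\mathbf{R}_{\ell,i}x_i$ a sum of independent terms, whence $\text{Var}(D_t \mid \mathcal{F}_{t-1}) = p(1-p)\sum_\ell Y_\ell^2 + p^2\left(\sum_\ell Y_\ell\right)^2 \geq p(1-p)\sum_\ell Y_\ell^2$. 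The crucial point, which avoids having to understand the (very intricate) joint law of the greedily chosen signs, is that $Y_\ell$ has the same parity as $|L_\ell \cap [t-1]|$, so $Y_\ell^2 \geq 1$ whenever $|L_\ell \cap [t-1]|$ is odd; hence
\[ \text{Var}(D_t \mid \mathcal{F}_{t-1}) \geq p(1-p)\,\#\{\ell : |L_\ell\cap[t-1]|\text{ is odd}\}, \]
and the right-hand side is a function of $\mathbf{R}$ alone. For $t=\Theta(n)$ this count concentrates around $m\cdot\frac{1-(1-2p)^{t-1}}{2} = \Theta(n)$, giving $\text{Var}(D_t\mid\mathcal{F}_{t-1}) = \Omega(c)$.

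Converting this variance lower bound into $\mathbb{E}[|D_t|] = \Omega(\sqrt{c})$ is the delicate step and the main obstacle. Because $D_t$ is conditionally a sum of many independent, nearly mean-zero contributions, I would use an anti-concentration estimate of the form $\mathbb{E}[|D_t|\mid\mathcal{F}_{t-1}] \geq c_0\sqrt{\text{Var}(D_t\mid\mathcal{F}_{t-1})}$, either through a Berry--Esseen bound (which requires checking that no single label $Y_\ell$ dominates the variance, plausible for large $c$ but needing care since a label's imbalance can be as large as $\sqrt{c}$) or, more robustly, through the fourth-moment inequality $\mathbb{E}|X| \geq (\mathbb{E}X^2)^{3/2}/(\mathbb{E}X^4)^{1/2}$ applied to the centered $D_t$; bounding $\mathbb{E}[D_t^4\mid\mathcal{F}_{t-1}]$ in terms of $\sum_\ell Y_\ell^2$ and $\sum_\ell Y_\ell^4$ is where the explicit constant $\sqrt{16/(27\pi)}$ should emerge after optimization. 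Integrating the resulting per-vertex bound over $t\in(\epsilon n, n]$ (replacing $\frac1n\sum_t$ by $\int_0^1 ds$ with $t=sn$) yields $\sum_t\mathbb{E}|D_t| = \Omega(n\sqrt{c})$, i.e.\ a gain of order $n\sqrt{c}$ against a baseline of order $nc^2$, which is exactly $\beta=\Omega(c^{-3/2})$.

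Finally, to upgrade from the $\mathbf{R}$-average to the claimed statement ``whp over $\mathbf{R}$'', I would show that $\mathbf{R}\mapsto\mathbb{E}_{\mathbf{x}}[\texttt{Cut}(G,\mathbf{x})]$ has bounded differences: changing one entry $\mathbf{R}_{\ell,i}$ alters the intersection graph by at most $|L_{\ell} \backslash \{i\}|$ edges, exactly the perturbation controlled in the Efron--Stein computation of Theorem~\ref{theorem:MaxCutconcentration}, so McDiarmid/Chebyshev gives fluctuations $o(n\sqrt{c})$ and the lower bound transfers to almost every $\mathbf{R}$. The $\epsilon n$ initially randomized vertices contribute only an $O(\epsilon n c)$ loss to the gain, absorbed into the $o(1)$ term by letting $\epsilon\to 0$. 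I expect the two genuinely nontrivial points to be the sign-independence observation (parity of $|L_\ell\cap[t-1]|$), which is what makes the dependence on the greedy coloring tractable, and the anti-concentration step that produces the precise constant.
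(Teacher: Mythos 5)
Your overall strategy coincides with the paper's: the same recursion writing the greedy cut as the random-cut baseline $\frac14\sum_{i\neq j}[\mathbf{R}^T\mathbf{R}]_{i,j}$ plus a gain $\frac12\sum_{t>\epsilon n}|Z_t|$ with $Z_t=\sum_{\ell}\mathbf{R}_{\ell,t}Y_\ell$, the same conditioning on $\mathbf{x}_{[t-1]},\mathbf{R}_{[m],[t-1]}$ so that only the fresh Bernoulli entries $\mathbf{R}_{\ell,t}$ are random, the same parity observation (labels $\ell$ with $|L_\ell\cap[t-1]|$ odd have $|Y_\ell|\ge 1$ regardless of the greedy signs, and there are $\Theta(t)$ of them whp by Chernoff), and the same final accounting against the $\frac{c^2}{4}n$ baseline with $\epsilon\to 0$. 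So the structure is right, and you have correctly isolated the one genuinely delicate step.

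That step, however, is left with a real gap, and neither of your two proposed routes closes it as stated. Applying Berry--Esseen directly to $D_t=\sum_\ell \mathbf{R}_{\ell,t}Y_\ell$ requires controlling the Lyapunov ratio $\sum_\ell \rho_\ell/(\sum_\ell\sigma_\ell^2)^{3/2}$ with $\rho_\ell\approx p|Y_\ell|^3$; since the only coloring-independent lower bound you have on the variance is $p\cdot\#\{\ell:|L_\ell\cap[t-1]|\text{ odd}\}=\Theta(c)$, while $\sum_\ell|Y_\ell|^3$ can a priori be as large as $\max_\ell|Y_\ell|\cdot\sum_\ell Y_\ell^2$ with $\max_\ell|Y_\ell|$ growing with $n$, the error term is not $o(1)$ without further control of the greedy dynamics --- exactly the dependence you were trying to avoid. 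The fourth-moment inequality $\mathbb{E}|X|\ge(\mathbb{E}X^2)^{3/2}/(\mathbb{E}X^4)^{1/2}$ is robust but loses a constant: in the Gaussian limit it gives $\sigma/\sqrt{3}$ rather than $\sqrt{2/\pi}\,\sigma$, so it cannot recover the stated $\beta\ge\sqrt{16/(27\pi c^3)}-o(1)$, whose factor $\sqrt{2/\pi}$ comes precisely from the mean of a folded normal. The paper's resolution is a domination (splitting) argument: replacing each weighted Bernoulli $a\,\mathbf{R}_{\ell,t}$ by $|a|$ independent unit Bernoullis can only decrease $\mathbb{E}|\cdot|$, which reduces $\mathbb{E}[|Z_t|\mid\cdot]$ to the mean absolute difference of two independent Binomials ${\cal B}(N_t,p)$ with $N_t\ge Y_t/2$; Berry--Esseen then applies with bounded ($\pm 1$-valued) summands and error $O(1/\sqrt{pN_t})=O(1/\sqrt{c})$, yielding the folded-normal mean $\sqrt{\frac{2}{\pi}\cdot 2p(1-p)N_t}\ge\sqrt{\frac{c(t-1)}{3\pi n}}-o(1)$ and hence the claimed constant. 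Adding that splitting lemma (and its justification) is what your proposal needs to be complete; the remaining concentration-over-$\mathbf{R}$ bookkeeping you describe is consistent with what the paper does.
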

\begin{proof}[Proof sketch] Let $G(V,E, \mathbf{R}^T \mathbf{R})$ be a random instance of the $\overline{{\cal G}}_{n, m, p}$ model, with $m=n$, and $p = \frac{c}{n}$, for some large enough constant $c$. For $t \in [n]$, let $M_t$ denote the constructed cut size just after the consideration of a vertex $v_t$, for some $t \geq \epsilon n+1$. By equation (\ref{eq:maxcutRIGformula}) for $n=t$, and since the values $x_1, \ldots, x_{t-1}$ are already decided in previous steps, we have
$M_t = \frac{1}{4} \left(\sum_{i,j \in [t]^2} \left[\mathbf{R}^T \mathbf{R} \right]_{i,j} - \min_{x_t \in \{-1, +1\}}  \left\| \mathbf{R}_{[m], [t]} \mathbf{x}_{[t]} \right\|^2 \right)$, and after careful calculation we get the recurrence
\begin{eqnarray*}
M_t & = & M_{t-1} + \frac{1}{2} \sum_{i \in [t-1]} \left[\mathbf{R}^T \mathbf{R} \right]_{i,t} + \frac{1}{2} \left| Z_t\right|,  %\label{eq:recursion-sketch}
\end{eqnarray*}
where $Z_t = Z_t(\mathbf{x}, \mathbf{R}) = \sum_{i \in [t-1]} \left[\mathbf{R}^T \mathbf{R} \right]_{i,t} x_i = \sum_{\ell \in [m]} \mathbf{R}_{\ell, t} \sum_{i \in [t-1]}  \mathbf{R}_{\ell,i} x_i$. Observe that, in the latter recursive equation, the term $\frac{1}{2} \sum_{i \in [t-1]} \left[\mathbf{R}^T \mathbf{R} \right]_{i,t}$ corresponds to the expected increment of the constructed cut if the $t$-vertex chose its color uniformly at random. Therefore, lower bounding the expectation of $\frac{1}{2}\left| Z_t\right|$ will tell us how much better the Majority Algorithm does when considering the $t$-th vertex.

Towards this end, we note that, given $\mathbf{x}_{[t-1]} = \{x_i, i \in [t-1] \}$, and $\mathbf{R}_{[m], [t-1]}=\{ \mathbf{R}_{\ell, i}, \ell \in [m], i \in [t-1]\}$, $Z_t$ is the sum of $m$ independent random variables, since the Bernoulli random variables $\mathbf{R}_{\ell,t}, \ell \in [m],$ are independent, for any given $t$ (note that the conditioning is essential for independence, otherwise the inner sums in the definition of $Z_t$ would also depend on the $x_i$'s, which are not random when $i$ is large). By using a domination argument, we can then prove that 
\begin{equation*}
\mathbb{E}[|Z_t| \big| \mathbf{x}_{[t-1]}, \mathbf{R}_{[m], [t-1]}] \geq \mathrm{MD}(Z^B_t), %\label{ineq:MD}
\end{equation*}
where $Z^B_t$ is a certain Binomial random variable (formally defined in the full proof), and $\mathrm{MD}(\cdot)$ is the mean absolute difference of (two independent copies of) $Z^B_t$, namely $\mathrm{MD}(Z^B_t) = \mathbb{E}[\left|Z^B_t - Z'^B_t \right|]$. Even though we are aware of no simple closed formula for $\mathrm{MD}(Z^B_t)$, we resort to Gaussian approximation of $Z^B_t - Z'^B_t$ through the Berry-Esseen Theorem, ultimately showing that $|Z^B_t - Z'^B_t|$ follows approximately the \emph{folded normal distribution}. In particular, we show that $\mathrm{MD}(Z^B_t) \geq \sqrt{\frac{c (t-1)}{3 \pi n}} -o(1)$, and since the right hand side is independent of $\mathbf{x}_{[t-1]}, \mathbf{R}_{[m], [t-1]}$, we get the same lower bound on the expectation of $|Z_t|$, namely, $\mathbb{E}[|Z_t|] \geq \sqrt{\frac{c (t-1)}{3 \pi n}} -o(1)$. Summing over all $t \geq \epsilon n+1$, we get 
\begin{equation*}
\sum_{t \geq \epsilon n+1} \mathbb{E}\left[|Z_t| \right] \geq \sqrt{\frac{c}{3 \pi}} \left(\frac{2}{3} - \epsilon ^{3/2}\right) n -o(n),
\end{equation*} 
and the result follows by noting that the expected weight of a random cut is equal to $\frac{1}{4} n(n-1)mp^2 = \frac{c^2}{4}n + o(n)$, and taking $\epsilon \to 0$.

%Furthermore, it can be shown that $\mathbb{E}\left[|Z_t| \big| \mathbf{x}_{[t-1]}, \mathbf{R}_{[m], [t-1]}\right]$ is smallest when the conditional expectation $\mathbb{E}\left[Z_t \big| \mathbf{x}_{[t-1]}, \mathbf{R}_{[m], [t-1]}\right]$ is 0

\end{proof}

\subsection{Intersection graph (weak) bipartization} \label{sec:weakBipartization}

Notice that we can view a weighted intersection graph $G(V, E, \mathbf{R}^T\mathbf{R})$ as a multigraph, composed by $m$ (possibly) overlapping cliques corresponding to the sets of vertices having chosen a certain label, namely $L_{\ell} = \{v: \mathbf{R}_{\ell, v}\}, \ell \in [m]$. In particular, let $K^{(\ell)}$ denote the clique induced by label $\ell$. Then $G = \cup^+_{\ell \in [m]} K^{(\ell)}$, where $\cup^+$ denotes union that keeps multiple edges. In this section, we present an algorithm that takes as input an intersection graph $G$ given as a union of overlapping cliques and outputs a subgraph that is ``almost'' bipartite. %To this end, we first give the following definition:

%\subsection{A structural property} \label{sec:structure}

To facilitate the presentation of our algorithm, we first give some useful definitions. A \emph{closed vertex-label sequence} is a sequence of alternating vertices and labels starting and ending at the same vertex, namely $\sigma:= v_1, \ell_1, v_2, \ell_2, \cdots, v_k, \ell_{k}, v_{k+1}=v_1$, where the size of the sequence $k = |\sigma|$ is the number of its labels, $v_i \in V$, $\ell_i \in {\cal M}$, and $\{v_i, v_{i+1}\} \subseteq L_{\ell_i}$, for all $i \in [k]$ (i.e. $v_i$ is connected to $v_{i+1}$ in the intersection graph). We will also say that label $\ell$ is \emph{strong} if $|L_{\ell}| \geq 3$, otherwise it is \emph{weak}. For a given closed vertex-label sequence $\sigma$, and any integer $\lambda \in [|\sigma|]$, we will say that $\sigma$ is \emph{$\lambda$-strong} if $|L_{\ell_i}| \geq 3$, for $\lambda$ indices $i \in [|\sigma|]$. The structural Lemma below is useful for our analysis (see Section \ref{sec:lemma1proof} of the Appendix for the proof).\footnote{We conjecture that the structural property of Lemma \ref{lem:disjointcycles} also holds if we replace $0$-strong with $\lambda$-strong, for any constant $\lambda$, but this stronger version is not necessary for our analysis.}

\begin{lemma} \label{lem:disjointcycles}
Let $G(V,E, \mathbf{R}^T \mathbf{R})$ be a random instance of the $\overline{{\cal G}}_{n, m, p}$ model, with $m=n$, and $p = \frac{c}{n}$, for some constant $c>0$. With high probability over the choices of $\mathbf{R}$, 0-strong closed vertex-label sequences in $G$ do not have labels in common. 
\end{lemma}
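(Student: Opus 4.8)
The plan is to recast the statement as a structural property of an auxiliary multigraph and then establish it by a first-moment argument that exploits the subcriticality of that multigraph. Define $G_w$ to be the multigraph on vertex set $V$ whose edges are exactly the weak labels of support size $2$: each label $\ell$ with $|L_\ell| = 2$ contributes one edge joining the two vertices of $L_\ell$. Since a $0$-strong closed vertex-label sequence uses only labels $\ell$ with $|L_\ell| \le 2$, and a label must satisfy $|L_{\ell_i}| = 2$ with $L_{\ell_i} = \{v_i, v_{i+1}\}$ to generate the edge $\{v_i, v_{i+1}\}$, such a sequence traverses distinct edges of $G_w$ and hence carries an edge set forming a cycle (or, for non-simple closed walks, a union of cycles) of $G_w$, its labels being precisely these edges. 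Two $0$-strong closed vertex-label sequences share a label if and only if the corresponding cycles of $G_w$ share an edge. Consequently, the conclusion of Lemma~\ref{lem:disjointcycles} is equivalent to the assertion that \emph{every edge of $G_w$ lies on at most one cycle}, i.e.\ that each component of $G_w$ is a cactus; equivalently, $G_w$ contains no theta-subgraph (two vertices joined by three internally vertex-disjoint paths, allowing length-one paths in the multigraph sense).

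The crucial quantitative input is that $G_w$ is subcritical. For a fixed label $\ell$ and a fixed pair $\{a,b\}$ we have $\Pr(L_\ell = \{a,b\}) = p^2(1-p)^{n-2} = \frac{c^2}{n^2}\left(1-\tfrac{c}{n}\right)^{n-2}$, and since label choices are independent across labels, the corresponding events for distinct labels are independent. Writing $\mu := c^2 e^{-c}$, we have $n^2\,\Pr(L_\ell = \{a,b\}) \to \mu$, and the elementary fact that $\max_{c > 0} c^2 e^{-c} = 4 e^{-2} < 1$ shows $\mu < 1$ for every constant $c > 0$. Thus $G_w$ behaves, at the level of first moments, like an Erd\H{o}s--R\'enyi graph $G(n, \mu/n)$ below the critical density, for which complex components, and in particular theta-subgraphs, are asymptotically absent.

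The plan is to make this precise by bounding the expected number of theta-subgraphs of $G_w$ directly. A theta with $k$ labels (edges) consists of three paths of lengths $(p_1,p_2,p_3)$ with $p_1+p_2+p_3 = k$ joining two hubs, and hence has exactly $k-1$ vertices. I would bound the number of labeled such configurations by $O(k^2)$ (the choice of path lengths) times $n^{k-1}$ (an ordered placement of the $k-1$ vertices along the paths) times $n^{k}$ (an assignment of distinct labels to the $k$ edges). Multiplying by the probability $\bigl(p^2(1-p)^{n-2}\bigr)^k \le \bigl(\tfrac{\mu}{n^2}\bigr)^k e^{2ck/n}$ that all $k$ labels realize their prescribed supports, and using $k \le n$ to bound $e^{2ck/n} \le e^{2c}$, the powers of $n$ collapse to $n^{k-1}\cdot n^{k}\cdot n^{-2k} = n^{-1}$, giving an expected count of thetas with $k$ labels at most $O(k^2)\,\mu^{k}\, e^{2c}/n$. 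Summing over $k \ge 3$, the series $\sum_{k\ge 3} k^2 \mu^{k}$ converges because $\mu < 1$, so the total expected number of theta-subgraphs of $G_w$ is $O(1/n) = o(1)$. By Markov's inequality, with high probability $G_w$ has no theta-subgraph, hence is a cactus, which is exactly the conclusion of the lemma.

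The main obstacle is securing the convergence of the first-moment series, which rests entirely on the subcriticality bound $\mu = c^2 e^{-c} < 1$; were the relevant density at or above $1$, the expected number of cycles, and of thetas, would blow up and the argument would fail. A secondary but essential technical point is to enumerate theta-subgraphs directly through their path structure, which yields the tight factor $n^{k-1}$ for the vertex placements, rather than as generic connected subgraphs of excess $\ge 1$: the crude edge-placement count $\binom{v}{2}^{e}$ for general connected subgraphs introduces a super-exponential factor that can overwhelm the $\mu^{k}$ decay. Finally, one must verify the reduction carefully, namely that every $0$-strong closed vertex-label sequence decomposes into cycles of $G_w$ (so that non-simple closed walks are handled), and that the cactus property, i.e.\ no edge lying in two cycles, is indeed equivalent to the absence of theta-subgraphs in the multigraph $G_w$.
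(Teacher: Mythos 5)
Your proof is correct and is essentially the paper's argument: both are first-moment computations whose convergence rests on the same subcriticality bound $c^2(1-p)^{n-2} \to c^2 e^{-c} \le 4e^{-2} < 1$. The only difference is packaging --- the paper directly bounds the expected number of pairs of label-sharing $0$-strong sequences with $j$ common labels (parameterized by the two lengths and the overlap), whereas you count the minimal witnesses, namely theta subgraphs of the weak-label multigraph, arriving at the same $O(1/n)$ bound.
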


The following definition is essential for the presentation of our algorithm.

\begin{definition} \label{def:codd}
Given a weighted intersection graph $G=G(V,E, \mathbf{R}^T \mathbf{R})$ and a subgraph $G^{(b)} \subseteq G$, let ${\cal C}_{odd}(G^{(b)})$ be the set of odd length closed vertex-label sequences $\sigma:= v_1, \ell_1, v_2,  \\ \ell_2, \cdots, v_k, \ell_{k}, v_{k+1}=v_1$ that additionally satisfy the following:

\begin{description}
\item[(a)] $\sigma$ has distinct vertices (except the first and the last) and distinct labels.
\item[(b)] $v_i$ is connected to $v_{i+1}$ in $G^{(b)}$, for all $i \in [|\sigma|]$.
\item[(c)] $\sigma$ is $\lambda$-strong, for some $\lambda > 0$. %\textcolor{red}{(here we may need a stronger lower bound on $\lambda$ to satisfy LLL conditions)}
\end{description}

\end{definition}

Algorithm \ref{alg:WeakBipartization} initially replaces each clique $K^{(\ell)}$ by a random maximal matching $M^{(\ell)}$, and thus gets a subgraph $G^{(b)} \subseteq G$. If ${\cal C}_{odd}(G^{(b)})$ is not empty, then the algorithm selects $\sigma \in {\cal C}_{odd}(G^{(b)})$ and a strong label $\ell \in \sigma$, and then replaces $M^{(\ell)}$ in $G^{(b)}$ by a new random matching of $K^{(\ell)}$. The algorithm repeats until all odd cycles are destroyed (or runs forever trying to do so).

%$G^{(b)}$ has an odd cycle that has edges formed by labels with at least 3 vertices, then our algorithm selects a label $\ell$ that is responsible to an edge of the cycle and replaces $M^{(\ell)}$ by another random matching of $K^{(\ell)}$. The algorithm repeats until all odd cycles are destroyed (or runs forever trying - and failing - to do so).

%\medskip 

\begin{algorithm}[H] \label{alg:WeakBipartization}
%\SetAlgoLined
\KwIn{Weighted intersection graph $G = \cup^+_{\ell \in [m]} K^{(\ell)}$} %and its representation matrix $\mathbf{R} \in \{0,1\}^{m \times n}$
\KwOut{A subgraph of $G^{(b)}$ that has only $0$-strong odd cycles}

\For{each $\ell \in [m]$}{
Let $M^{(\ell)}$ be a random maximal matching of $K^{(\ell)}$\;
}

Set $G^{(b)} = \cup^+_{\ell \in [m]} M^{(\ell)}$ \;

\While{${\cal C}_{odd}(G^{(b)}) \neq \emptyset$}{
Let $\sigma \in {\cal C}_{odd}(G^{(b)})$ and $\ell$ a label in $\sigma$ with $|L_{\ell}| \geq 3$\;
Replace the part of $G^{(b)}$ corresponding to $\ell$ by a new random maximal matching $M^{(\ell)}$\;
}

\Return{$G^{(b)}$\;}

\caption{Intersection Graph Weak Bipartization}
\end{algorithm}

%\medskip

%For the correctness of the algorithm, we prove the following:

The following results are the main technical tools that justify the use of the Weak Bipartization Algorithm for \textsc{Weighted Max Cut}. The proof details for Lemma \ref{lem:only0oddcycles} can be found in Section \ref{sec:only0oddcycles} of the Appendix. 

\begin{lemma} \label{lem:only0oddcycles}
If ${\cal C}_{odd}(G^{(b)})$ is empty, then $G^{(b)}$ may only have 0-strong odd cycles.
\end{lemma}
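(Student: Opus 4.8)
The plan is to prove the contrapositive: if $G^{(b)}$ contains a non-$0$-strong odd cycle — an odd cycle (in the ordinary graph sense) at least one of whose edges is realised by a strong label — then ${\cal C}_{odd}(G^{(b)})$ is non-empty. The strategy is to extract from such a cycle a short odd closed vertex-label sequence meeting all three requirements (a)--(c) of Definition~\ref{def:codd}.

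First I would fix the right class of objects to minimise over. Call a closed vertex-label sequence a \emph{candidate} if it is odd, has distinct vertices, carries at least one strong label, satisfies condition~(b) (its vertices lie in a common connected component of $G^{(b)}$), and has the property that every repeated label occurs on pairwise vertex-disjoint edges. A non-$0$-strong odd cycle $C$ of $G^{(b)}$ is itself a candidate: its vertices are distinct and pairwise connected in $G^{(b)}$, it is odd, it carries a strong label, and since every edge of $G^{(b)}$ lies in some matching $M^{(\ell)}$, two edges bearing the same label are automatically disjoint. Hence a candidate exists; let $\sigma^*$ be one of minimum length.

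The heart of the argument is the claim that $\sigma^*$ has distinct labels, which immediately places it in ${\cal C}_{odd}(G^{(b)})$ and finishes the proof. Suppose instead that a label $\ell$ repeats in $\sigma^*$. By the disjointness property it occurs on two vertex-disjoint edges $e=\{a,b\}$ and $f=\{c,d\}$, so $a,b,c,d$ are four distinct vertices of $L_\ell$ and $\ell$ is strong. Deleting $e$ and $f$ splits $\sigma^*$ into two arcs $P$ (from $b$ to $c$) and $Q$ (from $d$ to $a$) with $|P|+|Q|=|\sigma^*|-2$ odd. I then close $P$ by the clique edge $\{b,c\}\subseteq L_\ell$ and $Q$ by $\{a,d\}\subseteq L_\ell$, both labelled $\ell$, obtaining two shorter closed vertex-label sequences of lengths $|P|+1$ and $|Q|+1$; since these lengths sum to the odd number $|\sigma^*|$, exactly one of them, say the one built from $P$, is odd. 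That sequence has distinct vertices (a subset of those of $\sigma^*$), is strictly shorter than $\sigma^*$, still carries the strong label $\ell$ on its closing edge, keeps all its vertices in one component of $G^{(b)}$, and — because the inserted chord avoids $b$ and $c$'s (now deleted) $\ell$-edges and hence is disjoint from every remaining $\ell$-edge inside $P$ — still has all repeated labels on disjoint edges. Thus it is a strictly shorter candidate, contradicting minimality.

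The step I expect to be the main obstacle is exactly the reduction's compatibility with condition~(b): the chords $\{b,c\}$ and $\{a,d\}$ are clique edges of $G$ that need not survive in the random matchings defining $G^{(b)}$, so the argument relies on reading (b) as connectivity \emph{within} $G^{(b)}$ (inherited because all the vertices involved already lie on $\sigma^*$), rather than as the presence of each individual edge in $G^{(b)}$. The accompanying bookkeeping — that the disjoint-repeats invariant is preserved, so that two equally-labelled edges never become adjacent (which would force a parity-changing contraction instead of the parity-preserving split above), that a strong label is retained, and that vertices stay distinct — is what makes the induction on length go through and is where the care is needed; the parity split itself is routine once this class of candidates has been set up.
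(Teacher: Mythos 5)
Your proof has the same skeleton as the paper's: take a minimal counterexample, locate a repeated label (whose two occurrences are vertex-disjoint because both lie in the matching $M^{(\ell)}$), use the clique $K^{(\ell)}$ to close the two arcs with chords carrying label $\ell$, and observe that exactly one of the two resulting closed sequences is odd, contradicting minimality. The paper minimizes over genuine odd cycles of $G^{(b)}$ that are not $0$-strong and simply declares the split to yield ``a shorter odd cycle,'' whereas you minimize over a broader class of ``candidates'' that is closed under the chord operation and you carry an explicit invariant (repeated labels sit on pairwise disjoint edges). Your bookkeeping of why the invariant, the strong label, the distinctness of vertices and the parity all survive the split is more careful than the paper's one-line conclusion, and all of those verifications check out.

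The soft spot you flag is real, and it is exactly where the two arguments diverge. The paper reads condition~(b) as edge presence: it verifies (b) for the cycle $\sigma$ precisely by noting $\{v_i,v_{i+1}\}\in M^{(\ell_i)}$. Under that reading your terminal candidate is \emph{not} automatically a member of ${\cal C}_{odd}(G^{(b)})$, because the inserted chords $\{b,c\}$ and $\{a,d\}$ belong to $K^{(\ell)}$ but not to the matching $M^{(\ell)}$ (which already contains $\{a,b\}$ and $\{c,d\}$), hence not to $G^{(b)}$; so your connectivity reading of (b) is a departure from the paper's definition, not merely a clarification of it. You should be aware, however, that the paper's own proof stumbles on the same stone: its ``shorter odd cycle'' also traverses a chord outside $M^{(\ell)}$, so it is an odd cycle of $G$ rather than of $G^{(b)}$, and the claimed contradiction with the minimality of $C_k$ (chosen among odd cycles \emph{of $G^{(b)}$}) is not airtight either. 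In short, you have correctly isolated the one genuinely delicate point of this lemma; your write-up handles everything around it more rigorously than the paper does, but neither argument closes that point under the edge-presence reading of condition (b), and a clean fix requires either adopting your relaxed reading of (b) consistently throughout (which affects how the algorithm searches for elements of ${\cal C}_{odd}(G^{(b)})$) or an additional argument showing the extracted sequence can be realized with edges of $G^{(b)}$ only.
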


%We now prove the following:

%The following theorem is the main technical tool that justifies the use of the Weak Bipartization Algorithm for \textsc{Weighted Max Cut}. 

%We note the following:

\begin{theorem} \label{theorem:discrepancytomaxcut}
Let $G(V,E, \mathbf{R}^T \mathbf{R})$ be a random instance of the $\overline{{\cal G}}_{n, m, p}$ model, with $n=m$ and $p = \frac{c}{n}$, where $c>0$ is a constant, and let $\mathbf{R}$ be its representation matrix. Let also $\Sigma$ be a set system with incidence matrix $\mathbf{R}$. With high probability over the choices of $\mathbf{R}$, if Algorithm \ref{alg:WeakBipartization} for weak bipartization terminates on input $G$, its output can be used to construct a 2-coloring $\mathbf{x}^{(\text{disc})} \in \arg \min_{\mathbf{x} \in \{\pm 1\}^n} \text{disc}(\Sigma, \mathbf{x})$, which also gives a maximum cut in $G$, i.e. $\mathbf{x}^{(\text{disc})} \in \arg \max_{\mathbf{x} \in \{\pm 1\}^n} \text{Cut}(G, \mathbf{x})$.
\end{theorem}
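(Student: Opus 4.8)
The plan is to translate the graph-theoretic output of the Weak Bipartization Algorithm into a statement about discrepancy, and then invoke the variational characterization of Corollary~\ref{corollary:maxcutRIGformula}, which reduces finding a maximum cut to computing $\min_{\mathbf{x}} \|\mathbf{R}\mathbf{x}\|^2$. First I would observe that if the algorithm terminates, its output $G^{(b)}$ has, by Lemma~\ref{lem:only0oddcycles}, only $0$-strong odd cycles; that is, the only odd cycles in $G^{(b)}$ are formed entirely by weak labels $\ell$ with $|L_\ell|=2$. A graph all of whose odd cycles use only weak labels is, after contracting or otherwise handling these weak-label components, essentially bipartite. Concretely, I would define a $2$-coloring $\mathbf{x}^{(\mathrm{disc})}$ by properly $2$-coloring the (weakly) bipartite structure of $G^{(b)}$: since the obstructions to $2$-colorability are exactly odd cycles, and all surviving odd cycles are $0$-strong (hence consist of label sets of size exactly $2$, which impose no discrepancy penalty because a set of size $2$ split $+1/-1$ contributes $0$), a proper $2$-coloring of the even-cycle part of $G^{(b)}$ extends to a coloring that balances every strong label set.

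The key step is to argue that this coloring achieves \emph{minimum} discrepancy and simultaneously minimizes $\|\mathbf{R}\mathbf{x}\|^2$. For a strong label $\ell$ (with $|L_\ell|\ge 3$), the matching $M^{(\ell)}$ chosen by the algorithm pairs up vertices of $L_\ell$; if the $2$-coloring assigns opposite colors to each matched pair, then the imbalance $\sum_{v\in L_\ell} x_v = \mathbf{R}_{\ell,[n]}\mathbf{x}$ equals $0$ if $|L_\ell|$ is even and $\pm 1$ if $|L_\ell|$ is odd. Thus every strong label contributes at most $1$ to each coordinate of $\mathbf{R}\mathbf{x}$, and weak labels of size $2$ contribute $0$ when their pair is properly colored (and size-$1$ labels contribute $\pm 1$ unavoidably). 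Since each label of size $\ge 1$ forces imbalance at least the parity of $|L_\ell|$, and a matching-consistent coloring meets this parity lower bound label-by-label, the resulting $\|\mathbf{R}\mathbf{x}^{(\mathrm{disc})}\|^2 = \sum_\ell (\mathbf{R}_{\ell,[n]}\mathbf{x})^2$ attains the coordinatewise minimum, hence equals $\min_{\mathbf{x}}\|\mathbf{R}\mathbf{x}\|^2$. By equation~(\ref{eq:maxcutRIGformula}) this gives a maximum cut, and by the definition of discrepancy it simultaneously minimizes $\text{disc}(\Sigma,\mathbf{x})$.

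The main obstacle, and where Lemma~\ref{lem:disjointcycles} enters, is showing that a single coloring can \emph{simultaneously} satisfy the matching-parity condition for \emph{all} strong labels at once; a priori, forcing opposite colors across the matched pairs of different overlapping labels could create conflicts (an odd closed vertex-label walk through several strong labels would be such a conflict). This is exactly where the high-probability structural guarantee is needed: Lemma~\ref{lem:disjointcycles} states that $0$-strong closed vertex-label sequences do not share labels, and I would use the analogous disjointness of the strong-label cycle structure to argue that, whp over $\mathbf{R}$, the constraints imposed by distinct strong labels decouple, so no odd parity conflict among strong labels can arise once all $\sigma \in {\cal C}_{odd}(G^{(b)})$ have been destroyed. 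In other words, upon termination the strong-label part of $G^{(b)}$ has no odd cycles, so it is genuinely bipartite and admits a proper $2$-coloring; combined with the weak-label analysis above, this coloring is the desired $\mathbf{x}^{(\mathrm{disc})}$. The delicate bookkeeping will be verifying that extending the proper coloring of the strong part to the weak size-$2$ labels never reintroduces a nonzero imbalance on a strong label, which is handled by the disjointness of the relevant label sets.
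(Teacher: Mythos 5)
There is a genuine gap at the heart of your optimality argument. You claim that your matching-consistent coloring ``meets the parity lower bound label-by-label'' and hence that $\|\mathbf{R}\mathbf{x}^{(\mathrm{disc})}\|^2$ attains the coordinatewise minimum $\sum_\ell (\text{parity of } |L_\ell|)$. This is false precisely in the case that makes the theorem non-trivial: when $G^{(b)}$ contains a $0$-strong odd cycle, that cycle is an odd cycle of edges each coming from a distinct weak label of size exactly $2$, and \emph{no} $2$-coloring can split every one of these pairs $+1/-1$ --- that is exactly what it means for the cycle to be odd. At least one size-$2$ label on each such cycle is forced to be monochromatic, contributing imbalance $2$ rather than $0$, so neither $\mathbf{x}^{(\mathrm{disc})}$ nor any other coloring achieves the coordinatewise parity bound. (The paper points out immediately after the theorem that $\mathrm{disc}(\Sigma)>1$ with at least constant probability in this regime; if the parity bound were attainable, the theorem would just be Corollary~\ref{thm:MaxCutvsDiscrepancy}.) Your parenthetical remark that $0$-strong odd cycles ``impose no discrepancy penalty because a set of size $2$ split $+1/-1$ contributes $0$'' is where the error enters.

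The correct argument, which the paper gives, is a comparison rather than a coordinatewise-optimality argument: delete one weak-label edge $e_C$ from each $0$-strong odd cycle to get a bipartite graph $H$, properly $2$-color $H$, accept that each deleted label $\ell_C$ ends up with imbalance exactly $2$ while every other label meets its parity bound, and then show that any competing coloring $\mathbf{x}^{(\min)}$ must \emph{also} pay an imbalance of $2$ on at least one weak label of each $0$-strong odd cycle (again by the odd-cycle parity obstruction). Lemma~\ref{lem:disjointcycles} is then needed to guarantee that these forced penalties land on distinct labels across distinct cycles, so the penalties add up and cannot be shared; you invoke that lemma, but for a different purpose (decoupling constraints among \emph{strong} labels), and you never address the unavoidable penalty on the weak labels, which is the actual content of the optimality proof. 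Your construction of the coloring itself is essentially the paper's, but without the deletion-and-comparison step the proof of $\mathbf{x}^{(\mathrm{disc})} \in \arg\min_{\mathbf{x}}\|\mathbf{R}\mathbf{x}\|^2$ does not go through.
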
 
\begin{proof} By construction, the output of Algorithm \ref{alg:WeakBipartization}, namely $G^{(b)}$, has only 0-strong odd cycles. Furthermore, by Lemma \ref{lem:disjointcycles} these cycles correspond to vertex-label sequencies that are label-disjoint. Let $H$ denote the subgraph of $G^{(b)}$ in which we have destroyed all 0-strong odd cycles by deleting a single (arbitrary) edge $e_C$ from each 0-strong odd cycle $C$ (keeping all other edges intact), and notice that $e_C$ corresponds to a weak label. In particular, $H$ is a bipartite multi-graph and thus its vertices can be partitioned into two independent sets $A, B$ constructed as follows: In each connected component of $H$, start with an arbitrary vertex $v$ and include in $A$ (resp. in $B$) the set of vertices reachable from $v$ that are at an even (resp. odd) distance from $v$. Since $H$ is bipartite, it does not have odd cycles, and thus this construction is well-defined, i.e. no vertex can be placed in both $A$ and $B$. 

We now define $\mathbf{x}^{(disc)}$ by setting $x^{(disc)}_i = +1$ if $i \in A$ and $x^{(disc)}_i = +1$ if $i \in B$. Let ${\cal M}_0$ denote the set of weak labels corresponding to the edges removed from $G^{(b)}$ in the construction of $H$. We first note that, for each $\ell_C \in {\cal M}_0$ corresponding to the removal of an edge $e_C$, we have $\left|\sum_{i \in L_{\ell_C}} x^{(disc)}_i \right|=2$. Indeed, since $e_C$ belongs to an odd cycle in $G^{(b)}$, its endpoints are at even distance in $H$, which means that either they both belong to $A$ or they both belong to $B$. Therefore, their corresponding entries of $\mathbf{x}^{(disc)}$ have the same sign, and so (taking into account that the endpoints of $e_C$ are the only vertices in $L_{\ell_C}$), we have $\left|\sum_{i \in L_{\ell_C}} x^{(disc)}_i \right|=2$. Second, we show that, for all the other labels $\ell \in [m] \backslash {\cal M}_0$, $\left|\sum_{i \in L_{\ell}} x^{(disc)}_i \right|$ will be equal to 1 if $|L_{\ell}|$ is odd and 0 otherwise. For any label $\ell \in [m] \backslash {\cal M}_0$, let $M^{(\ell)}$ denote the part of $G^{(b)}$ corresponding to a maximal matching of $K^{(\ell)}$, and note that all edges of $M^{(\ell)}$ are contained in $H$. Since $H$ is bipartite, no edge in $M^{(\ell)}$ can have both its endpoints in either $A$ or $B$. Therefore, by construction, the contribution of entries of $\mathbf{x}^{(disc)}$ corresponding to endpoints of edges in $M^{(\ell)}$ to the sum $\sum_{i \in L_{\ell}} x^{(disc)}_i$ is 0. In particular, if $|L_{\ell}|$ is even, then $M^{(\ell)}$ is a perfect matching and $\left|\sum_{i \in L_{\ell}} x^{(disc)}_i \right| = 0$, otherwise (i.e. if $|L_{\ell}|$ is odd) there is a single vertex not matched in $M^{(\ell)}$ and $\left|\sum_{i \in L_{\ell}} x^{(disc)}_i \right| = 1$.  

To complete the proof of the theorem, we need to show that $\text{Cut}(G, \mathbf{x}^{(disc)})$ is maximum. By Corollary \ref{corollary:maxcutRIGformula}, this is equivalent to proving that $\|\mathbf{R} \mathbf{x}^{(disc)}\| \leq \|\mathbf{R} \mathbf{x}\|$ for all $\mathbf{x} \in \{-1,+1\}^n$. Suppose that there is some $\mathbf{x}^{(min)} \in \{-1,+1\}^n$ such that $\|\mathbf{R} \mathbf{x}^{(disc)}\| > \|\mathbf{R} \mathbf{x}^{(min)}\|$. As mentioned above, for all $\ell \in [m] \backslash {\cal M}_0$, we have $[\mathbf{R} \mathbf{x}^{(disc)}]_{\ell} \leq 1$, and so $[\mathbf{R} \mathbf{x}^{(disc)}]_{\ell} \leq [\mathbf{R} \mathbf{x}^{(min)}]_{\ell}$. Therefore, the only labels where $\mathbf{x}^{(min)}$ could do better are those corresponding to edges $e_C$ that are removed from $G^{(b)}$ in the construction of $H$, i.e. $\ell_C \in {\cal M}_0$, for which we have $[\mathbf{R} \mathbf{x}^{(disc)}]_{\ell_C} =2$. However, any such edge $e_C$ belongs to an odd cycle $C$, and thus any 2-coloring of the vertices of $C$ will force at least one of the 0-strong labels corresponding to edges of $C$ to be monochromatic. Taking into account the fact that, by Lemma \ref{lem:disjointcycles}, with high probability over the choices of $\mathbf{R}$, all 0-strong odd cycles correspond to vertex-label sequences that are label-disjoint, we conclude that $\|\mathbf{R} \mathbf{x}^{(disc)}\| \leq \|\mathbf{R} \mathbf{x}^{(min)}\|$, which completes the proof.
\end{proof}

The fact that Theorem \ref{theorem:discrepancytomaxcut} is not an immediate consequence of Corollary \ref{thm:MaxCutvsDiscrepancy} follows from the observation that a random set system with incidence matrix $\mathbf{R}$ has discrepancy larger than 1 with (at least) constant probability when $m=n$ and $p = \frac{c}{n}$. Indeed, by a straightforward counting argument, we can see that the expected number of 0-strong odd cycles is at least constant. Furthermore, in any 2-coloring of the vertices at least one of the weak labels forming edges in a 0-strong odd cycle will be monochromatic. Therefore, with at least constant probability, for any $\mathbf{x} \in \{-1,+1\}^n$, there exists a weak label $\ell$, such that $x_i x_j=1$, for both $i, j \in L_{\ell}$, implying that $\text{disc}(L_{\ell})=2$. 

We close this section by a result indicating that the conditional statement of Theorem \ref{theorem:discrepancytomaxcut} is not void, namely there is a range of values for $c$ where the Weak Bipartization Algorithm terminates in polynomial time. %The proof details can be found in Section \ref{sec:theorem:c<1termination} of the Appendix.

\begin{theorem} \label{theorem:c<1termination}
Let $G(V,E, \mathbf{R}^T \mathbf{R})$ be a random instance of the $\overline{{\cal G}}_{n, m, p}$ model, with $n=m$ and $p = \frac{c}{n}$, where $0<c<1$ is a constant, and let $\mathbf{R}$ be its representation matrix. With high probability over the choices of $\mathbf{R}$, Algorithm \ref{alg:WeakBipartization} for weak bipartization terminates on input $G$ in $O\left((n+\sum_{\ell \in [m]} |L_{\ell}|) \cdot \log{n} \right)$ polynomial time.
\end{theorem}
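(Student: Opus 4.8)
The plan is to bound the running time as (number of iterations of the \textbf{while}-loop) times (cost per iteration), and to show that for $c<1$ the former is $O(\log n)$ while the latter is $O\!\left(n+\sum_{\ell\in[m]}|L_\ell|\right)$ with high probability. The per-iteration cost is the easy part: testing whether ${\cal C}_{odd}(G^{(b)})\neq\emptyset$ and extracting a witness $\sigma$ together with a strong label on it can be done by a single bipartiteness-style traversal of $G^{(b)}$, and drawing a fresh random maximal matching $M^{(\ell)}$ costs $O(|L_\ell|)$; since $|E(G^{(b)})|=\sum_\ell\lfloor|L_\ell|/2\rfloor$, each iteration runs in time $O\!\left(n+\sum_\ell|L_\ell|\right)$. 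The whole difficulty is thus to prove that only $O(\log n)$ iterations occur whp, and this is precisely where the hypothesis $c<1$ enters.

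The structural engine I would use is the bipartite incidence graph $B$ on the classes $V$ and ${\cal M}$, where $v$ and $\ell$ are adjacent iff $\mathbf{R}_{\ell,v}=1$. For $n=m$ and $p=\frac{c}{n}$ every vertex and every label has expected degree $np=c$, so exploring a component of $B$ is dominated by a two-type Galton--Watson process whose mean offspring matrix $\bigl(\begin{smallmatrix}0&c\\ c&0\end{smallmatrix}\bigr)$ has spectral radius $c$; hence for $c<1$ the process is subcritical. I would then invoke the standard subcritical random-graph facts: whp every component of $B$ has size $O(\log n)$, whp no component contains two or more independent cycles (every component is a tree or unicyclic), and the total number of cycles of $B$ is $O(1)$ in expectation and $O(\log n)$ whp. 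Since a closed vertex-label sequence with distinct vertices and distinct labels is exactly a cycle of $B$, the candidate members of ${\cal C}_{odd}(G^{(b)})$ are precisely the odd, at-least-$1$-strong cycles of $B$ that are currently \emph{realized} in $G^{(b)}$; by unicyclicity each such cycle is the unique cycle of its component, distinct members live in distinct components (the label-disjointness for the remaining $0$-strong cycles being exactly Lemma \ref{lem:disjointcycles}), and their number is $O(\log n)$ whp.

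The second ingredient is that $B$ — and hence the fixed set of cycles that could ever appear — does not change during the algorithm; re-randomizing a matching only toggles which edges of $G$ are realized in $G^{(b)}$. Because $L_\ell$ sits inside a single component of $B$, replacing $M^{(\ell)}$ alters edges only inside that one component, so the realization status of every other component (in particular of already-repaired ones) is untouched and no new cycle of $B$ can be created. Consequently, when the algorithm targets a realized bad cycle $C$ in a unicyclic component and re-randomizes a strong label $\ell$ on it, the only way $C$ can persist is that the cycle edge $\{v_i,v_{i+1}\}\subseteq L_\ell$ reappears in the fresh matching. As every maximal matching of a clique is near-perfect (size $\lfloor|L_\ell|/2\rfloor$) and is edge-symmetric, the prescribed edge reappears with probability $\binom{|L_\ell|}{2}^{-1}\lfloor|L_\ell|/2\rfloor\le\frac{1}{|L_\ell|-1}\le\frac12$. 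Thus each iteration repairs its targeted component with probability at least $\frac12$, independently of the past, and — crucially, because the component is unicyclic — the repair is permanent: once $C$ is no longer realized, that component carries no realized cycle of $B$ at all and is never revisited.

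Putting these together, the number of iterations is stochastically dominated by a sum of $N$ independent $\mathrm{Geometric}(\tfrac12)$ variables, one per realized bad cycle, with $N=O(\log n)$ whp; this sum is $O(\log n)$ whp by concentration, and multiplying by the $O\!\left(n+\sum_\ell|L_\ell|\right)$ per-iteration cost gives the claimed bound, while correctness of the terminal configuration is exactly Lemma \ref{lem:only0oddcycles}. I expect the main obstacle to be the structural step: rigorously ruling out complex components and controlling the cycle count of $B$ throughout the subcritical regime (together with verifying the constant lower bound on the probability that a re-randomized matching omits a prescribed edge), since everything downstream — locality of re-randomization, permanence of repairs, and the geometric iteration count — rests on $B$ being a disjoint union of small tree-or-unicyclic components, a property that fails exactly at $c=1$.
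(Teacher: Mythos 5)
Your proposal is correct and reaches the same bound, but by a genuinely different route from the paper, and it is worth recording what each buys. The paper works directly with closed vertex-label sequences: Lemma \ref{lemma:kcyckles} gives $\mathbb{E}[C_k]\le\frac{e}{2\pi}c^{2k}$, which for $c<1$ is summable, and three first-moment/Markov bounds (no sequence of length $\ge\log n$; fewer than $\log n$ members of ${\cal C}_{odd}(G^{(b)})$; no two sequences sharing a label) yield label-disjointness and the $O(\log n)$ count; it then argues that each sequence is handled once and each iteration costs $O(n+\sum_{\ell}|L_\ell|)$ via DFS. You instead pass to the bipartite incidence graph $B$ and use subcriticality of the two-type branching process to get small, tree-or-unicyclic components, from which disjointness and the $O(\log n)$ cycle count follow; this is heavier machinery for the same structural conclusion, but it gives a cleaner explanation of why no new bad cycles can ever be created (the cycle space of $B$ is fixed and each component carries at most one cycle). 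More importantly, your geometric-trials step fills a real gap in the paper's argument: the paper asserts that steps 5--6 are executed \emph{exactly once} per sequence, but a fresh random maximal matching of $K^{(\ell)}$ re-includes the prescribed edge with probability $\lfloor|L_\ell|/2\rfloor/\binom{|L_\ell|}{2}\le\frac{1}{2}$ (for $|L_\ell|\ge 3$), so a given cycle may survive several re-randomizations; your stochastic domination by a sum of $O(\log n)$ independent $\mathrm{Geometric}(\tfrac12)$ variables is the right way to conclude that the iteration count is still $O(\log n)$ whp, and the permanence of each repair (no revisiting a repaired component) is exactly where your unicyclicity observation is needed. The remaining items you flag as obstacles (ruling out complex components and counting cycles of $B$ for $c<1$) are standard first-moment computations in the subcritical regime, essentially equivalent to the paper's Lemma \ref{lemma:kcyckles}, so nothing essential is missing.
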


The proof of the above theorem uses the following structural Lemma regarding the expected number of closed vertex label sequences. 

\begin{lemma} \label{lemma:kcyckles}
Let $G(V,E, \mathbf{R}^T \mathbf{R})$ be a random instance of the $\overline{{\cal G}}_{n, m, p}$ model. Let also $C_k$ denote the number of distinct closed vertex-label sequences of size $k$ in $G$. Then 
\begin{equation} \label{eq:C_k}
\mathbb{E}[C_k] = \frac{1}{k} \frac{n!}{(n-k)!} \frac{m!}{(m-k)!} p^{2k}.   
\end{equation}
In particular, when $m=n \to \infty$, $p = \frac{c}{n}, c>0$, and $k \geq 3$, we have $\mathbb{E}[C_k] \leq \frac{e}{2\pi} c^{2k}$. 
\end{lemma}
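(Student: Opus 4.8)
The plan is to compute $\mathbb{E}[C_k]$ by the first-moment method: I would write $C_k = \sum_{\sigma} \mathbf{1}\{\sigma \text{ appears in } G\}$, where the sum ranges over all \emph{potential} closed vertex-label sequences of size $k$ having distinct vertices and distinct labels (these are the sequences the falling factorials in the formula are counting), and then apply linearity of expectation. The computation factors into two independent pieces: counting the number $N_k$ of potential sequences, and computing $\Pr[\sigma \text{ appears}]$, which by symmetry is the same for every such $\sigma$.

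For the count $N_k$, I would first specify an ordered tuple $(v_1,\ldots,v_k)$ of distinct vertices, giving $\frac{n!}{(n-k)!}$ choices, and an ordered tuple $(\ell_1,\ldots,\ell_k)$ of distinct labels, giving $\frac{m!}{(m-k)!}$ choices; together these determine the directed closed sequence $v_1,\ell_1,v_2,\ldots,v_k,\ell_k,v_1$ with $v_1$ designated as its start. Each closed vertex-label sequence arises from exactly $k$ such ordered specifications, namely its $k$ cyclic rotations (the choices of starting vertex), and since the $v_i$ are distinct no nontrivial rotation fixes a specification, so the rotation action is free and every equivalence class has size exactly $k$. Hence $N_k = \frac{1}{k}\frac{n!}{(n-k)!}\frac{m!}{(m-k)!}$; I would sanity-check this against a small case (e.g. $k=n=m=3$ gives $N_3 = 12$, matching the $2$ directed triangles times $3!$ label assignments).

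For the probability, the event that $\sigma$ appears in $G$ is precisely that $\{v_i,v_{i+1}\}\subseteq L_{\ell_i}$ for all $i$, i.e. that the $2k$ entries $\mathbf{R}_{\ell_i,v_i},\mathbf{R}_{\ell_i,v_{i+1}}$, $i\in[k]$, all equal $1$. The key observation is that these $2k$ entries are pairwise distinct: two of them coincide only if their labels coincide, which (labels being distinct) forces the same index $i$, and then within row $\ell_i$ the two columns $v_i\neq v_{i+1}$ are distinct. As the entries of $\mathbf{R}$ are independent $\mathrm{Bernoulli}(p)$, this probability equals $p^{2k}$; multiplying by $N_k$ yields the exact formula.

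Finally, for the asymptotic bound I would substitute $m=n$ and $p=\frac{c}{n}$ to get $\mathbb{E}[C_k]=\frac{1}{k}\left(\frac{n!}{(n-k)!}\right)^2\frac{c^{2k}}{n^{2k}}$, bound the falling factorial crudely by $\frac{n!}{(n-k)!}=\prod_{j=0}^{k-1}(n-j)\leq n^k$, and conclude $\mathbb{E}[C_k]\leq \frac{c^{2k}}{k}\leq \frac{c^{2k}}{3}<\frac{e}{2\pi}c^{2k}$ for all $k\geq 3$ (using $\frac{1}{3}<\frac{e}{2\pi}$). I do not expect a genuine obstacle here; the only steps demanding care are the bookkeeping in the count — verifying that the cyclic-rotation action is free so that the division by $k$ is exact — and confirming that the $2k$ Bernoulli entries are genuinely distinct so that the appearance probability is exactly $p^{2k}$ and not smaller. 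Everything else is routine first-moment estimation, and the crude factorial bound already leaves ample room in the constant.
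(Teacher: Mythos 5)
Your proposal is correct, and its first half is essentially the paper's argument: the paper also counts $\frac{1}{k}\frac{n!}{(n-k)!}$ cyclic arrangements of $k$ distinct vertices, places $\frac{m!}{(m-k)!}$ ordered distinct labels between consecutive pairs, and multiplies by $p^{2k}$; you merely spell out the two points the paper leaves implicit (that the rotation action is free, so dividing by $k$ is exact, and that the $2k$ Bernoulli entries are pairwise distinct, so the appearance probability is exactly $p^{2k}$). Where you genuinely diverge is the asymptotic bound. The paper applies two-sided Stirling estimates to get $\mathbb{E}[C_k]\leq \frac{e^2}{2\pi}\frac{n}{k(n-k)}c^{2k}$ and then argues this is at most $\frac{e}{2\pi}c^{2k}$ as $n\to\infty$ for $k\geq 3$. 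Your route --- the crude bound $\frac{n!}{(n-k)!}\leq n^k$, giving $\mathbb{E}[C_k]\leq \frac{c^{2k}}{k}\leq\frac{c^{2k}}{3}<\frac{e}{2\pi}c^{2k}$ --- is both more elementary and strictly stronger: it is non-asymptotic and holds uniformly for all $3\leq k\leq n$, whereas the paper's intermediate expression $\frac{n}{k(n-k)}$ is not small when $k$ is close to $n$ (and is undefined at $k=n$), even though the lemma is subsequently summed over all $k$ up to $n$ in the proof of Theorem \ref{theorem:c<1termination}. So your version quietly repairs a rough edge in the paper's derivation while reaching the same stated conclusion.
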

\begin{proof}
Notice that there are $\frac{1}{k} \frac{n!}{(n-k)!}$ ways to arrange $k$ out of $n$ vertices in a cycle. Furthermore, in each such arrangement, there are $\frac{m!}{(m-k)!}$ ways to place $k$ out of $m$ labels so that there is exactly one label between each pair of vertices. Since labels in any given arrangement must be selected by both its adjacent vertices, (\ref{eq:C_k}) follows by linearity of expectation. 

Setting $m=n$ and $p = \frac{c}{n}$, and using the inequalities $\sqrt{2 \pi} n^{n+\frac{1}{2}}e^{-n} \leq n! \leq e n^{n+\frac{1}{2}}e^{-n}$,
\begin{eqnarray}
\mathbb{E}[C_k] &=& \frac{1}{k} \left(\frac{n!}{(n-k)!} \right)^2 \left( \frac{c}{n}\right)^{2k} \nonumber \\
& \leq & \frac{1}{k} \frac{e^2 n^{2n+1} e^{-2n}}{2\pi (n-k)^{2n-2k+1} e^{2k-2n}} \left(\frac{c}{n}\right)^{2k} = \frac{1}{k} \frac{e^2}{2\pi} \left( \frac{n}{n-k}\right)^{2n-2k+1} \left( \frac{c}{e} \right)^{2k} \nonumber \\
& \leq & \frac{e^2}{2\pi} \frac{n}{k (n-k)} e^{\frac{k}{n-k} (2n-2k)} \left( \frac{c}{e} \right)^{2k} = \frac{e^2}{2\pi} \frac{n}{k (n-k)} c^{2k}. \nonumber
\end{eqnarray}
When $n$ goes to $\infty$ and $k \geq 3$, then the above is at most $\frac{e}{2\pi} c^{2k}$ as needed.
\end{proof}

We are now ready for the proof of the Theorem.

\begin{proof}[Proof of Theorem \ref{theorem:c<1termination}]
We will prove that, when $m=n \to \infty$, $p = \frac{c}{n}, c<1$, and $k \geq 3$, with high probability, there are no closed vertex-label sequences that have labels in common. To this end, recalling Definition \ref{def:codd} for ${\cal C}_{odd}(G^{(b)})$, we provide upper bounds on the following events: $A \stackrel{\text{def}}{=} \{\exists k \geq \log{n}: C_k \geq 1\}$, $B \stackrel{\text{def}}{=} \{|{\cal C}_{odd}(G^{(b)})| \geq \log{n}\}$ and $C \stackrel{\text{def}}{=} \{\exists \sigma \neq \sigma' \in {\cal C}_{odd}(G^{(b)}): \exists \ell \in \sigma, \ell \in \sigma'\}$.

By the union bound, Markov's inequality and Lemma \ref{lemma:kcyckles}, we get that, whp all closed vertex-label sequences have less than $\log{n}$ labels:

\begin{equation} \label{ineq:nolargecycles}
\Pr\left(A \right) \leq \sum_{k \geq \log{n}} \mathbb{E}[C_k] \leq \sum_{k \geq \log{n}} \frac{e}{2\pi} c^{2k} = \frac{e}{2 \pi} \frac{c^{2 \log{n}}}{1-c^2} = O\left( c^{2 \log{n}} \right) = o(1),
\end{equation}
where the last equality follows since $c<1$ is a constant. Furthermore, by Markov's inequality and Lemma \ref{lemma:kcyckles}, and noting that any closed vertex-label sequence in ${\cal C}_{odd}(G^{(b)})$ must have at least $k \geq 3$ labels, we get that, whp there less than $\log{n}$ closed vertex-label sequences in ${\cal C}_{odd}(G^{(b)})$:
\begin{equation} \label{ineq:notmanycycles}
\Pr\left(B \right) \leq \frac{1}{\log{n}} \sum_{k \geq 3} \mathbb{E}[C_k] \leq \frac{1}{\log{n}} \sum_{k \geq 3} \frac{e}{2\pi} c^{2k} = \frac{1}{\log{n}} \frac{e}{2 \pi} \frac{c^{6}}{1-c^2} = O\left( \frac{1}{\log{n}} \right).
\end{equation}

To bound $\Pr(C)$, fix a closed vertex-label sequence $\sigma$, and let $|\sigma| \geq 3$ be the number of its labels. Notice that, the probability that there is another closed vertex-label sequence that has labels in common with $\sigma$ implies the existence of a vertex-label sequence $\breve{\sigma}$ that starts with either a vertex or a label from $\sigma$, ends with either a vertex or a label from $\sigma$, and has at least one label or at least one vertex that does not belong to $\sigma$. Let $|\breve{\sigma}|$ denote the number of labels of $\breve{\sigma}$ that do not belong to $\sigma$. Then the number of different vertex-label sequences $\breve{\sigma}$ that start and end in labels from $\sigma$ is at most $|\sigma|^2 n^{|\breve{\sigma}|+1} m^{|\breve{\sigma}|}$; indeed $\breve{\sigma}$ in this case has $|\breve{\sigma}|$ labels and $|\breve{\sigma}|+1$ vertices that do not belong to $\sigma$. Therefore, by independence, each such sequence $\breve{\sigma}$ has probability $p^{2|\breve{\sigma}|+2}$ to appear. Similarly, the number of different vertex-label sequences $\breve{\sigma}$ that start and end in vertices from $\sigma$ is at most $|\sigma|^2 n^{|\breve{\sigma}|-1} m^{|\breve{\sigma}|}$ and each one has probability $p^{2|\breve{\sigma}|}$ to appear. Finally, the number of different vertex-label sequences $\breve{\sigma}$ that start in a vertex from $\sigma$ and end in a label from $\sigma$ (notice that this also covers the case where $\breve{\sigma}$ starts in a label from $\sigma$ and ends in a vertex from $\sigma$) is at most $|\sigma|^2 n^{|\breve{\sigma}|} m^{|\breve{\sigma}|}$ and each one has probability $p^{2|\breve{\sigma}|+1}$ to appear. Overall, for a given sequence $\sigma$, the expected number of sequences $\breve{\sigma}$ described above that additionally satisfies $|\breve{\sigma}| < \log{n}$, is at most 

\begin{equation} \label{eq:boundedsigmas}
\sum_{k=0}^{\log{n}-1} |\sigma|^2 n^{k+1} m^{k} p^{2k+2} + \sum_{k=1}^{\log{n}-1} |\sigma|^2 n^{k-1} m^{k} p^{2k} + \sum_{k=1}^{\log{n}-1} |\sigma|^2 n^{k} m^{k} p^{2k+1} \leq c |\sigma|^2 \frac{\log{n}}{n},
\end{equation}
where in the last inequality we used the fact that $m=n, p = \frac{c}{n}$ and $c<1$. Since the existence of a sequence $\breve{\sigma}$ for $\sigma$ that additionally satisfies $|\breve{\sigma}| \geq \log{n}$ implies event $A$, and on other hand the existence of more than $\log{n}$ different sequences $\sigma \in |{\cal C}_{odd}(G^{(b)})|$ implies event $B$, by Markov's inequality and (\ref{eq:boundedsigmas}), we get 
\begin{equation*}
\Pr(C) \leq \Pr(A) + \Pr(B) + c \frac{(\log{n})^4}{n} =  O\left( c^{2 \log{n}} \right) + O\left( \frac{1}{\log{n}} \right) + O\left(\frac{(\log{n})^4}{n} \right) = O\left( \frac{1}{\log{n}} \right).
\end{equation*} 
We have thus proved that, with high probability over the choices of $\mathbf{R}$, closed vertex-label sequences in ${\cal C}_{odd}(G^{(b)})$ are label disjoint, as needed. 

In view of this, the proof of the Theorem follows by noting that, since closed vertex label sequences in ${\cal C}_{odd}(G^{(b)})$ are label disjoint, steps 5 and 6 within the while loop of the Weak Bipartization Algorithm will be executed exactly once for each sequence in ${\cal C}_{odd}(G^{(b)})$, where $G^{(b)}$ is defined in step 3 of the algorithm; indeed, once a closed vertex label sequence $\sigma \in {\cal C}_{odd}(G^{(b)})$ is destroyed in step 6, no new closed vertex label sequence is created. In fact, once $\sigma$ is destroyed we can remove the corresponding labels and edges from $G^{(b)}$, as these will no longer belong to other closed vertex label sequences. Furthermore, to find a closed vertex label sequences in ${\cal C}_{odd}(G^{(b)})$, it suffices to find an odd cycle in $G^{(b)}$, which can be done  by running DFS, requiring $O(n+\sum_{\ell \in [m]} |L_{\ell}|)$ time, because $G^{(b)}$ has at most $\sum_{\ell \in [m]} |L_{\ell}|$ edges. Finally, by (\ref{ineq:notmanycycles}), we have $|{\cal C}_{odd}(G^{(b)})| < \log{n}$ with high probability, and so the running time of the Weak Bipartization Algorithm is $O((n+\sum_{\ell \in [m]} |L_{\ell}|) \log{n})$, which concludes the proof of Theorem \ref{theorem:c<1termination}.
\end{proof}

\section{Discussion and some open problems}

In this paper, we introduced the model of weighted random intersection graphs and we studied the average case analysis of \textsc{Weighted Max Cut} through the prism of discrepancy of random set systems. In particular, in the first part of the paper, we proved concentration of the weight of a maximum cut of $G(V, E, \mathbf{R}^T \mathbf{R})$ around its expected value, and we used it to show that, with high probability, the weight of a random cut is asymptotically equal to the maximum cut weight of the input graph, when $m = n^{\alpha}, \alpha<1$. On the other hand, in the case where the number of labels is equal to the number of vertices (i.e. $m=n$), we proved that a majority algorithm gives a cut with weight that is larger than the weight of a random cut by at least a constant factor, when $p = \frac{c}{n}$ and $c$ is large. 

In the second part of the paper, we highlighted a connection between \textsc{Weighted Max Cut} of sparse weighted random intersection graphs and \textsc{Discrepancy} of sparse random set systems, formalized through our Weak Bipartization Algorithm and its analysis. We demonstrated how our proposed framework can be used to find optimal solutions for these problems, with high probability, in special cases of sparse inputs ($m=n, p=\frac{c}{n}, c<1$).

One of the main problems left open in our work concerns the termination of our Weak Bipartization Algorithm for large values of $c$. We conjecture the following:

\begin{conjecture} Let $G(V,E, \mathbf{R}^T \mathbf{R})$ be a random instance of the $\overline{{\cal G}}_{n, m, p}$ model, with $m=n$, and $p = \frac{c}{n}$, for some constant $c \geq 1$. With high probability over the choices of $\mathbf{R}$, on input $G$, Algorithm \ref{alg:WeakBipartization} for weak bipartization terminates in polynomial time. 
\end{conjecture}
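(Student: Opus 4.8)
The plan is to prove termination of Algorithm \ref{alg:WeakBipartization} in polynomial time, whp, for every constant $c \geq 1$, by analysing its \textbf{while} loop as a randomized constraint-resampling process and bounding the total number of iterations by a polynomial. The argument behind Theorem \ref{theorem:c<1termination} rests entirely on Lemma \ref{lemma:kcyckles}, which gives $\sum_k \mathbb{E}[C_k] = O(1)$ and hence label-disjoint closed vertex-label sequences whp; this breaks down completely once $c \geq 1$, since $\sum_k c^{2k}$ diverges, closed vertex-label sequences become numerous and heavily overlapping, and (for $c>1$) a giant component emerges in which odd cycles share strong labels. Thus one can no longer argue that each iteration destroys a cycle without affecting others, and a genuinely dynamical argument is required.

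First I would recast the process in the language of the Moser--Tardos / entropy-compression framework. The independent ``variables'' are the random maximal matchings $M^{(\ell)}$ of the strong labels $\ell$ (weak labels contribute deterministic edges); the ``bad events'' are the strong closed vertex-label sequences $A_\sigma$ of odd length, where $A_\sigma$ holds iff for every strong label $\ell_i \in \sigma$ the matching $M^{(\ell_i)}$ contains the prescribed edge $\{v_i, v_{i+1}\}$. Two bad events are dependent iff their sequences share a strong label, so the dependency graph is governed exactly by the label-overlap structure studied in the proof of Theorem \ref{theorem:c<1termination}. A single execution of steps 5--6 resamples one strong label of a currently violated $\sigma$, i.e. it is precisely a Moser--Tardos resample of one variable in the witnessed bad event.

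Second I would estimate the two quantities that control termination: the probability $\Pr(A_\sigma)$ that a fresh matching recreates the edges of a given $\lambda$-strong odd cycle, which is of order $\prod_{i:\,\ell_i \text{ strong}} \frac{1}{|L_{\ell_i}|-1}$, and the expected number of \emph{new} strong odd cycles created through a label when its matching is resampled. Combining a refinement of Lemma \ref{lemma:kcyckles} (to count odd cycles through a fixed label and to control their overlaps) with these per-event probabilities, the goal is to verify a Lov\'{a}sz-Local-Lemma-type condition --- $\Pr(A_\sigma)$ times the dependency degree bounded away from the critical value --- so that the witness-tree / entropy-compression argument yields a polynomial bound, whp, on the total number of resamplings; each resampling costs the $O(n+\sum_{\ell \in [m]}|L_{\ell}|)$ DFS time used in the proof of Theorem \ref{theorem:c<1termination}. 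Equivalently, one may track the potential $\Phi = |{\cal C}_{odd}(G^{(b)})|$ and show $\mathbb{E}[\Delta\Phi] \leq -\delta$ for a constant $\delta>0$ with bounded increments, then invoke a standard drift/hitting-time bound; here the negative-drift requirement reduces to showing that resampling a strong label destroys, in expectation, strictly more strong odd cycles than it creates.

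The main obstacle is that all of these are density conditions that degrade as $c$ grows, whereas the conjecture asserts termination for every constant $c \geq 1$. For $c>1$ the strong labels of the giant component are mutually reachable, so resampling a single matching simultaneously perturbs the (relative) bipartition colours of many far-apart vertices; this long-range coupling means the ``new cycles created'' are neither local nor independent, and the local-lemma/drift parameter is not bounded below by a constant once $c$ is large. Controlling these correlations --- for instance via a component-wise or $2$-core decomposition of $G^{(b)}$ together with a more global potential measuring the parity obstruction (the number of strong labels forced to be monochromatic) rather than the raw cycle count --- is, I expect, the crux, and is precisely why the statement remains a conjecture rather than a theorem.
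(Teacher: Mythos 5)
This statement is not proved in the paper: it is explicitly stated as a conjecture and left open, so there is no proof of record to compare your attempt against. Your proposal, by your own admission, is a research plan rather than a proof, and it contains a genuine, acknowledged gap: the entire argument hinges on verifying either a Lov\'{a}sz-Local-Lemma-type condition for the Moser--Tardos resampling view, or a uniform negative drift $\mathbb{E}[\Delta\Phi] \leq -\delta$ for the potential $\Phi = |{\cal C}_{odd}(G^{(b)})|$, and you concede in the final paragraph that neither condition is established --- indeed that the relevant parameter ``is not bounded below by a constant once $c$ is large.'' Since the conjecture asserts termination for \emph{every} constant $c \geq 1$, an argument that works only for $c$ in some bounded range (even if completed) would not suffice. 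There are also unaddressed technical issues in the setup itself: for $c>1$ the number of closed vertex-label sequences with distinct labels is no longer controlled by Lemma \ref{lemma:kcyckles} (the series $\sum_k c^{2k}$ diverges, and whp there are superpolynomially many such sequences in the giant component), so the family of ``bad events'' is enormous and heavily overlapping, and the claim that a resampling of $M^{(\ell)}$ recreates a given cycle with probability about $\prod_i \frac{1}{|L_{\ell_i}|-1}$ addresses only re-creation of \emph{old} cycles, not the creation of \emph{new} odd cycles assembled from edges of several matchings, which is where the process could plausibly fail to make progress.

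That said, your diagnosis of \emph{why} the paper's own technique (Theorem \ref{theorem:c<1termination}) stops at $c<1$ is accurate --- the label-disjointness of closed vertex-label sequences from Lemma \ref{lem:disjointcycles} and the first-moment bounds of Lemma \ref{lemma:kcyckles} are exactly what break down at $c=1$ --- and the two frameworks you propose (entropy compression over matching resamplings, or a drift argument on a parity-obstruction potential restricted to the $2$-core) are reasonable directions for attacking the conjecture. But as submitted this is a statement of strategy with the central step missing, not a proof.
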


We also leave the problem of determining whether Algorithm \ref{alg:WeakBipartization} terminates in polynomial time, in the case $m=n$ and $p = \omega(1/n)$, as an open question for future research.

Towards strengthening the connection between \textsc{Weighted Max Cut} under the $\overline{{\cal G}}_{n, m, p}$ model, and \textsc{Discrepancy} in random set systems, we conjecture the following:

%\begin{conjecture}
%Let $G_{n, m, p}$ be a weighted random intersection graph with $n=m$ and $p = \frac{c}{n}$, for some constant $c>0$, and let $\mathbf{R}$ be its representation matrix. Let also $\Sigma$ be a set system with incidence matrix $\mathbf{R}$. Then $\text{disc}(\mathbf{R}) \leq 1$ with high probability over the choices of $\mathbf{R}$. Furthermore, there is a polynomial time algorithm that finds a 2-coloring that achieves minimum discrepancy for $\Sigma$.
%\end{conjecture}

%In particular, by Theorem \ref{thm:MaxCutvsDiscrepancy}, the above conjecture implies that, with the same (high) probability there is also a polynomial time algorithm that finds a maximum cut (not just an approximation given by the majority algorithm). 

%In view of the discussion in the first paragraph of Section \ref{sec:MajorityAlgorithm}, the above conjecture would imply the following:

\begin{conjecture}
Let $G(V,E, \mathbf{R}^T \mathbf{R})$ be a random instance of the $\overline{{\cal G}}_{n, m, p}$ model, with $m=n^{\alpha}, \alpha \leq 1$ and $mp^2 = O(1)$, and let $\mathbf{R}$ be its representation matrix. Let also $\Sigma$ be a set system with incidence matrix $\mathbf{R}$. Then, with high probability over the choices of $\mathbf{R}$, there exists $\mathbf{x}^{\text{disc}} \in \arg \min_{\mathbf{x} \in \{-1, +1\}^n} \text{disc}(\Sigma, \mathbf{x})$, such that $ \texttt{Cut}(G, \mathbf{x}^{\text{disc}})$ is asymptotically equal to $\texttt{Max-Cut}(G)$. 
\end{conjecture}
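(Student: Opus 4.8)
The plan is to translate the claim into a comparison of two quadratic objectives and then to split the range of parameters according to the typical label-set size $np$. By Corollary \ref{corollary:maxcutRIGformula}, for every $\mathbf{x}$ we have $\texttt{Cut}(G,\mathbf{x})=\frac14\left(\sum_{i,j}[\mathbf{R}^T\mathbf{R}]_{i,j}-\|\mathbf{R}\mathbf{x}\|^2\right)$, so for any minimum-discrepancy coloring $\mathbf{x}^{(\text{disc})}$,
\[
\texttt{Max-Cut}(G)-\texttt{Cut}(G,\mathbf{x}^{(\text{disc})})=\tfrac14\left(\|\mathbf{R}\mathbf{x}^{(\text{disc})}\|^2-\min_{\mathbf{x}}\|\mathbf{R}\mathbf{x}\|^2\right).
\]
Since $\mathbb{E}_{\mathbf{R}}[\texttt{Max-Cut}(G)]=\Theta(n^2mp^2)$ (equation (\ref{eq:expectationasymptotics})) and $\texttt{Max-Cut}(G)$ concentrates around this value (Theorem \ref{theorem:MaxCutconcentration}), it suffices to exhibit a discrepancy minimizer for which, whp, $E:=\|\mathbf{R}\mathbf{x}^{(\text{disc})}\|^2-\min_{\mathbf{x}}\|\mathbf{R}\mathbf{x}\|^2=o(n^2mp^2)$. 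Note that $E\geq 0$, since the max-cut coloring is the global minimizer of $\|\mathbf{R}\mathbf{x}\|^2$, and that we are free to pick $\mathbf{x}^{(\text{disc})}$ to be the discrepancy minimizer of smallest $\ell_2$ norm.

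In the first regime, that of large label sets ($np=\omega(\log m)$, which in particular covers every $\alpha<1$), I would use only the crude estimate $\|\mathbf{R}\mathbf{x}^{(\text{disc})}\|^2\leq m\,\text{disc}(\Sigma)^2$. A Chernoff bound on the imbalance of each set under a uniformly random coloring, a union bound over the $m$ sets, and the whp bound $\max_\ell|L_\ell|=\Theta(np)$, together give $\text{disc}(\Sigma)=O(\sqrt{np\log m})$ whp, because the minimum discrepancy is at most that of a random coloring. Consequently $\|\mathbf{R}\mathbf{x}^{(\text{disc})}\|^2=O(m\,np\log m)$, and since $n^2mp^2/(m\,np\log m)=np/\log m\to\infty$ in this regime, we obtain $E\leq\|\mathbf{R}\mathbf{x}^{(\text{disc})}\|^2=o(n^2mp^2)$. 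This settles the entire range $\alpha<1$, as well as $m=n$ with $p=\omega(\log n/n)$.

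The sparse regime $m=n$, $p=c/n$ with $c=\Theta(1)$ is different: now $\texttt{Max-Cut}(G)=\Theta(n)$ and both $\|\mathbf{R}\mathbf{x}^{(\text{disc})}\|^2$ and $\min_{\mathbf{x}}\|\mathbf{R}\mathbf{x}\|^2$ are $\Theta(n)$, so the crude bound is useless and I must match leading constants, i.e. prove $E=o(n)$. For $c<1$ this is already in hand: by Theorem \ref{theorem:c<1termination} the Weak Bipartization Algorithm terminates whp, and Theorem \ref{theorem:discrepancytomaxcut} then yields a single coloring that is simultaneously a discrepancy minimizer and a maximum cut, so in fact $E=0$ whp. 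The remaining case $c\geq 1$ is the crux. My plan is to fix a global $\ell_2$-minimizer $\mathbf{x}^{*}$ (a maximum cut) and to convert it into a discrepancy minimizer by $\ell_2$-neutral local moves: first I would bound the discrepancy itself, aiming at $\text{disc}(\Sigma)=O(1)$ whp, using that label sets have Poisson$(c)$ sizes and that the set system is locally tree-like away from $o(n)$ vertices; then use local optimality of $\mathbf{x}^{*}$ (no single flip lowers $\|\mathbf{R}\mathbf{x}\|^2$, which forces $x_v\sum_{\ell:\,v\in L_\ell}[\mathbf{R}\mathbf{x}^{*}]_\ell\leq|S_v|$ for all $v$) to show that only $o(n)$ labels have imbalance exceeding $\text{disc}(\Sigma)$ and that each lies in a bounded-size, label-disjoint substructure (via Lemma \ref{lem:disjointcycles}); and finally rebalance each offending label by flipping vertices along an alternating path through already parity-balanced labels, which leaves the contribution of the balanced labels unchanged while reducing the offending imbalance, reaching $\mathbf{x}^{(\text{disc})}$ with $\|\mathbf{R}\mathbf{x}^{(\text{disc})}\|^2=\|\mathbf{R}\mathbf{x}^{*}\|^2+o(n)$.

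The hard part is the case $c\geq 1$, and specifically the last step above. This is exactly the existential content of the Weak Bipartization framework stripped of its algorithmic termination guarantee: Theorem \ref{theorem:discrepancytomaxcut} shows that a coloring balancing every label to its parity, up to the unavoidable label-disjoint odd-cycle corrections of Lemma \ref{lem:disjointcycles}, is simultaneously min-discrepancy and max-cut; but in the supercritical regime $c\geq 1$ the re-matching loop is not known to terminate (this is the open conjecture stated immediately above), so I cannot simply run it to produce the common optimizer. The difficulty is thus to prove \emph{existence} of such a jointly optimal coloring directly --- for instance by a potential-function or fixed-point argument over the space of per-label matchings $M^{(\ell)}$, or by confining the excess imbalances of $\mathbf{x}^{*}$ to $o(n)$ label-disjoint gadgets repaired independently. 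Controlling the number and independence of these gadgets inside a giant component, where short odd cycles are plentiful and need not be label-disjoint beyond the $0$-strong case of Lemma \ref{lem:disjointcycles}, is the genuine obstacle, and it is the same structural gap that blocks a termination proof for $c\geq 1$.
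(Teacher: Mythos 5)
You should first be aware that the statement you were asked to prove is stated in the paper as an open \emph{conjecture}: the paper offers no proof of it, and your write-up correctly rediscovers why. Your reduction via Corollary \ref{corollary:maxcutRIGformula} to bounding $\|\mathbf{R}\mathbf{x}^{(\text{disc})}\|^2-\min_{\mathbf{x}}\|\mathbf{R}\mathbf{x}\|^2$ is the right normalization, and your first regime is sound: for $np=\omega(\log m)$ the chain $\|\mathbf{R}\mathbf{x}^{(\text{disc})}\|^2\leq m\,\text{disc}(\Sigma)^2\leq m\cdot O(np\log m)=o(n^2mp^2)$, combined with $\texttt{Max-Cut}(G)=\Theta(n^2mp^2)$ whp (Theorem \ref{theorem:MaxCutconcentration}), does settle all $\alpha<1$ by essentially the same Chernoff-plus-union-bound computation the paper uses to prove Theorem \ref{theorem:randomcut}. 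Likewise, invoking Theorems \ref{theorem:c<1termination} and \ref{theorem:discrepancytomaxcut} for $m=n$, $p=c/n$, $c<1$ is exactly what the paper's machinery delivers there.

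The genuine gap is the one you name yourself, and it is not a detail but the entire content of the conjecture: for $m=n$ and $p=c/n$ with $c\geq 1$ (and in fact for the whole uncovered band $p=\omega(1/n)$, $p=O(\log n/n)$, e.g.\ $p=\sqrt{\log n}/n$, which falls between your two regimes), both competing quantities are $\Theta(n)$ and you must match leading constants. Your sketched repair scheme rests on at least two unproven claims. First, $\text{disc}(\Sigma)=O(1)$ whp in this regime is not established anywhere --- the result of Altschuler and Niles-Weed requires $n\geq 0.73\,m\log m$, which fails at $m=n$, and the paper only shows the weaker fact that $\text{disc}(\Sigma)>1$ with constant probability. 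Second, the rebalancing step (flipping vertices along alternating paths through parity-balanced labels) needs the offending labels to sit in bounded-size, label-disjoint substructures; Lemma \ref{lem:disjointcycles} gives label-disjointness only for $0$-strong closed sequences, and inside the giant component at $c\geq 1$ short odd cycles built from strong labels are abundant and can share labels, so the independence of your ``gadgets'' is precisely what is missing. This is the same structural obstruction that blocks a termination proof of Algorithm \ref{alg:WeakBipartization} for $c\geq 1$, which the paper states as a separate conjecture. So your proposal is a correct partial result plus an honest identification of the open core, not a proof of the statement.
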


\newpage

%%
%% Bibliography
%%

%% Please use bibtex, 

%\bibliography{lipics-v2021-sample-article}

\newpage

\appendix

%\section{Proof details}

\section{Proof of Corollary \ref{corollary:maxcutRIGformula}} \label{sec:corollary1proof}

We first prove the following Lemma, by straightforward calculation from equation (\ref{eq:maxcut}):
\begin{lemma} \label{lem:maxcutmatrix}
Let $G(V, E, \mathbf{W})$ be a weighted graph such that $\mathbf{W}$ is symmetric and $\mathbf{W}_{i,j} = 0$ if $\{i,j\} \notin E$. Then 
\begin{equation} \label{eq:maxcutmatrix}
\texttt{Max-Cut}(G) = \frac{1}{4} \left(\sum_{i,j \in [n]^2} \mathbf{W}_{i,j} - \min_{\mathbf{x} \in \{-1, +1\}^n} \mathbf{x}^T \mathbf{W} \mathbf{x} \right).
\end{equation}
\end{lemma}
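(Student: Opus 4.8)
The plan is to derive the claimed closed form directly from the defining expression (\ref{eq:maxcut}) by a term-by-term rewriting of the quadratic $(x_i-x_j)^2$, and then to move the maximum inside as a minimization of a quadratic form. Since $\sum_{i,j \in [n]^2}\mathbf{W}_{i,j}$ does not depend on $\mathbf{x}$, the whole statement reduces to establishing, for every fixed $\mathbf{x}\in\{-1,+1\}^n$, the identity $\texttt{Cut}(G,\mathbf{x}) = \frac14\bigl(\sum_{i,j \in [n]^2}\mathbf{W}_{i,j} - \mathbf{x}^T\mathbf{W}\mathbf{x}\bigr)$; maximizing over $\mathbf{x}$ then yields (\ref{eq:maxcutmatrix}) immediately, because the first summand is a constant and so $\max_{\mathbf{x}}\texttt{Cut}(G,\mathbf{x})$ corresponds to $\min_{\mathbf{x}}\mathbf{x}^T\mathbf{W}\mathbf{x}$.

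First I would record the elementary identity that for $x_i,x_j\in\{-1,+1\}$ one has $(x_i-x_j)^2 = 2 - 2x_ix_j$, using $x_i^2 = x_j^2 = 1$. Substituting this into $\texttt{Cut}(G,\mathbf{x}) = \frac14\sum_{\{i,j\}\in E}\mathbf{W}_{i,j}(x_i-x_j)^2$ turns the cut weight into an affine function of the products $x_ix_j$, which is exactly the form needed to recognize the quadratic form $\mathbf{x}^T\mathbf{W}\mathbf{x} = \sum_{i,j}\mathbf{W}_{i,j}x_ix_j$.

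Next I would convert the sum over unordered edges into the full double sum over ordered pairs $(i,j)\in[n]^2$. Two bookkeeping facts do all the work: since $\mathbf{W}$ is symmetric, each unordered pair $\{i,j\}$ with $i\neq j$ is counted exactly twice in the ordered sum, contributing a factor of $2$; and since $\mathbf{W}_{i,j}=0$ whenever $\{i,j\}\notin E$, extending the summation to all ordered pairs adds nothing beyond the edges. The diagonal terms need no special treatment: both in the form $\mathbf{W}_{i,i}(x_i-x_i)^2$ and in the form $\mathbf{W}_{i,i}(2-2x_i^2)$ they vanish identically because $x_i^2 = 1$, which is precisely why nonzero diagonal entries of $\mathbf{W}$ do not affect cut weights. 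Carrying out the substitution gives $\sum_{i,j\in[n]^2}\mathbf{W}_{i,j}(x_i-x_j)^2 = 2\sum_{i,j\in[n]^2}\mathbf{W}_{i,j} - 2\,\mathbf{x}^T\mathbf{W}\mathbf{x}$.

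Finally, I would combine the factors: the left-hand side equals $8\,\texttt{Cut}(G,\mathbf{x})$ (twice the edge sum, which is $4\,\texttt{Cut}(G,\mathbf{x})$), so dividing through yields the target per-coloring identity, and taking $\max_{\mathbf{x}\in\{-1,+1\}^n}$ gives (\ref{eq:maxcutmatrix}). There is no genuine obstacle here; it is a direct computation, and the only point requiring care is the factor-of-two accounting between the ordered double sum over $[n]^2$ and the sum over unordered edges, together with the (automatic) cancellation of diagonal contributions, which is what makes the symmetric-matrix formula valid even when $\mathbf{W}$ has nonzero diagonal.
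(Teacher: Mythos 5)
Your proof is correct and follows essentially the same route as the paper's: both reduce the claim to the per-coloring identity $\texttt{Cut}(G,\mathbf{x}) = \frac14\bigl(\sum_{i,j}\mathbf{W}_{i,j} - \mathbf{x}^T\mathbf{W}\mathbf{x}\bigr)$ via the algebraic identity $(x_i-x_j)^2 = x_i^2+x_j^2-2x_ix_j = 2-2x_ix_j$ and the symmetry/zero-off-edge bookkeeping between the ordered double sum and the sum over edges. The factor-of-two accounting and the vanishing of diagonal terms are handled correctly.
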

\begin{proof} For any $\mathbf{x} \in \{-1,+1\}^n$, we write
\begin{eqnarray*}
\sum_{i,j \in [n]^2} \mathbf{W}_{i,j} - \mathbf{x}^T \mathbf{W} \mathbf{x} & = & \sum_{i,j \in [n]^2} \mathbf{W}_{i,j} - \sum_{i,j \in [n]^2} \mathbf{W}_{i,j} x_i x_j \\
& = & \frac{1}{2} \sum_{i,j \in [n]^2} \mathbf{W}_{i,j} \left( x_i^2+x_j^2 - 2x_i x_j\right) \\
& = & \frac{1}{2} \sum_{i,j \in [n]^2} \mathbf{W}_{i,j} \left( x_i-x_j\right)^2 \\
& = & \sum_{\{i,j\} \in E} \mathbf{W}_{i,j} \left( x_i-x_j\right)^2. 
\end{eqnarray*}
By (\ref{eq:maxcut}), this completes the proof.
\end{proof}

\begin{proof}[Proof of Corollary \ref{corollary:maxcutRIGformula}]
Notice that diagonal entries of the weight matrix in (\ref{eq:maxcutmatrix}) cancel out, and so, for any $\mathbf{x} \in \{-1, +1\}^n$, we have 
\begin{equation*}
\sum_{i,j \in [n]^2} \left[\mathbf{R}^T \mathbf{R} \right]_{i,j} - \left\| \mathbf{R} \mathbf{x} \right\|^2 = \sum_{i\neq j, i,j \in [n]^2} \left[\mathbf{R}^T \mathbf{R} \right]_{i,j} - \sum_{i\neq j, i,j \in [n]^2} \left[\mathbf{R}^T \mathbf{R} \right]_{i,j} x_i x_j.
\end{equation*}
Taking expectations with respect to $\mathbf{x}$, the contribution of the second sum in the above expression equals 0, which completes the proof.
\end{proof}

\section{Proof of Corollary \ref{thm:MaxCutvsDiscrepancy}} \label{sec:previoustheorem}

\begin{proof} Since $\text{disc}(\Sigma, \mathbf{x}^*) \leq 1$, then each component of $\mathbf{R}\mathbf{x}^*$ is either 0 or 1, for any $\mathbf{x}^* \in \{-1, +1\}^n$. In particular, for any $\ell \in [m]$, $\left[\mathbf{R}\mathbf{x}^*\right]_{\ell}$ is 0 if the number of ones in the $\ell$-th row is even and it is equal to 1 otherwise. This is the best one can hope for, since sets with an odd number of elements cannot have discrepancy less than 1. Therefore, $\|\mathbf{R} \mathbf{x}^*\|$ is also the minimum possible. In particular, this implies that, in the case $\text{disc}(\Sigma, \mathbf{x}^*) \leq 1$, any 2-coloring that achieves minimum discrepancy gives a bipartition that corresponds to a maximum cut and vice versa.
\end{proof}

\section{Proof of Theorem \ref{theorem:randomcut}} \label{sec:theorem3proof}

\begin{proof} Let $G=G(V,E, \mathbf{R}^T \mathbf{R})$ be a weighted random intersection graph. By equation (\ref{eq:cutRIGformula}) of Corollary \ref{corollary:maxcutRIGformula}, for any $\mathbf{x} \in \{-1, +1\}^n$, we have:
\begin{equation*}
\texttt{Cut}(G, \mathbf{x}) = \frac{1}{4} \left(\sum_{i,j \in [n]} \left[\mathbf{R}^T \mathbf{R} \right]_{i,j} - \|\mathbf{R} \mathbf{x}\|^2 \right).
\end{equation*}
Taking expectations with respect to random $\mathbf{x}$ and $\mathbf{R}$, we get
\begin{eqnarray}
\mathbb{E}_{\mathbf{x}, \mathbf{R}}[\texttt{Cut}(G, \mathbf{x})] & = & \frac{1}{4} \cdot \mathbb{E}_{\mathbf{R}}\left[ \sum_{i,j \in [n]} \left[\mathbf{R}^T \mathbf{R} \right]_{i,j} - \sum_{i \in [n]} \left[\mathbf{R}^T \mathbf{R} \right]_{i,i} \right] \nonumber \\
& = & \frac{1}{4} \cdot \mathbb{E}_{\mathbf{R}}\left[ \sum_{i\neq j, i,j \in [n]} \left[\mathbf{R}^T \mathbf{R} \right]_{i,j} \right] = \frac{1}{4} n(n-1)mp^2. \label{eq:randomcut}
\end{eqnarray}
To prove the Theorem, we will show that, with high probability over random $\mathbf{x}$ and $\mathbf{R}$, we have $\|\mathbf{R} \mathbf{x}\|^2 = o\left( \mathbb{E}_{\mathbf{R}}\left[ \frac{1}{4} \sum_{i\neq j, i,j \in [n]} \left[\mathbf{R}^T \mathbf{R} \right]_{i,j} \right] \right) = o(n^2mp^2)$, in which case the theorem follows by concentration of $\texttt{Max-Cut}(G)$ around its expected value (Theorem \ref{theorem:MaxCutconcentration}), and the fact that $\texttt{Max-Cut}(G) \geq \frac{1}{4} \sum_{i\neq j, i,j \in [n]} \left[\mathbf{R}^T \mathbf{R} \right]_{i,j}$. 

%We now prove that with high probability over random $\mathbf{x}$ and $\mathbf{R}$, we have $\left| \sum_{i\neq j, i,j \in [n]} \left[\mathbf{R}^T \mathbf{R} \right]_{i,j} x_i x_j \right| = o\left( \mathbb{E}_{\mathbf{R}}\left[ \frac{1}{4} \sum_{i\neq j, i,j \in [n]} \left[\mathbf{R}^T \mathbf{R} \right]_{i,j} \right] \right) = o(n(n-1)mp^2)$, in which case the theorem follows by concentration of $\texttt{Max-Cut}(G_{n, m, p})$ in Theorem \ref{theorem:MaxCutconcentration}, and the fact that $\texttt{Max-Cut}(G_{n, m, p}) \geq \frac{1}{4} \sum_{i\neq j, i,j \in [n]} \left[\mathbf{R}^T \mathbf{R} \right]_{i,j}$. The desired bound follows from the claim below, by taking $\lambda=\Theta(\log{n})$

To this end, fix $\ell \in [m]$ and consider the random variable counting the number of ones in the $\ell$-th row of $\mathbf{R}$, namely $Y_{\ell} = \sum_{i \in [n]} \mathbf{R}_{\ell, i}$. By the multiplicative Chernoff bound, for any $\delta >0$, 
\begin{equation*}
\Pr(Y_{\ell}> (1+\delta) np ) \leq \left( \frac{e^{\delta}}{(1+\delta)^{1+\delta}} \right)^{np}.
\end{equation*}	
Since $np \geq C_1 \sqrt{\frac{n}{m}} = C_1 n^{\frac{1-\alpha}{2}}$, taking any $\delta \geq 2$, we get
\begin{equation} \label{eq:yellbound}
\Pr(Y_{\ell}> 3 np ) \leq \left( \frac{e^{2}}{27} \right)^{np} = o\left( \frac{1}{m} \right).
\end{equation}	
Therefore, by the union bound, 
\begin{equation} \label{eq:unionbound}
\Pr(\exists \ell \in [m]: Y_{\ell}> 3 np ) = o(1),
\end{equation}
implying that, all rows of $\mathbf{R}$ have at most $3np$ non-zero elements with high probability.

Fix now $\ell$ and consider the random variable corresponding to the $\ell$-th entry of $\mathbf{R} \mathbf{x}$, namely $Z_{\ell} = \sum_{i \in [n]} \mathbf{R}_{\ell, i} x_i$. In particular, given $Y_{\ell}$, notice that $Z_{\ell}$ is equal to the sum of $Y_{\ell}$ independent random variables $x_i \in \{-1, +1\}$, for $i$ such that $\mathbf{R}_{\ell, i}=1$. Therefore, since $\mathbb{E}_{\mathbf{x}}[Z_{\ell}] = \mathbb{E}_{\mathbf{x}}[Z_{\ell} | Y_{\ell}]=0$, by Hoeffding's inequality, for any $\lambda \geq 0$,
\begin{equation*}
\Pr(|Z_{\ell}|>\lambda | Y_{\ell}) \leq e^{-\frac{\lambda^2}{2Y_{\ell}}}.
\end{equation*}
Therefore, by the union bound, and taking $\lambda \geq \sqrt{6 np \ln{n}}$,

\begin{equation} \label{eq:overallbound}
\Pr(|Z_{\ell}|>\lambda) \leq \Pr(\exists \ell \in [m]: Y_{\ell}> 3 np ) + m e^{-\frac{\lambda^2}{6np}} = o(1)+ \frac{m}{n} = o(1),
\end{equation}
implying that all entries of $\mathbf{R} \mathbf{x}$ have absolute value at most $\sqrt{6 np \ln{n}}$ with high probability over the choices of $\mathbf{x}$ and $\mathbf{R}$. Consequently, with high probability over the choices of $\mathbf{x}$ and $\mathbf{R}$, we have $\| \mathbf{R} \mathbf{x}\|^2 = 6mnp \ln{n}$, which is $o(n^2mp^2)$, since $np = \omega(\ln{n})$ in the range of parameters considered in this theorem. This completes the proof.
\end{proof}

We note that the same analysis also holds when $n=m$ and $p$ is sufficiently large (e.g. $p = \omega(\frac{\ln{n}}{n})$). In particular, similar probability bounds hold in equations (\ref{eq:yellbound}), (\ref{eq:unionbound}) and (\ref{eq:overallbound}), for the same choices of $\delta \geq 2$ and $\lambda \geq \sqrt{6 np \ln{n}}$, implying that $\| \mathbf{R} \mathbf{x}\|^2 = 6mnp \ln{n} = o(n^2mp^2)$ with high probability. 

%In view of this, in the following sections we will only assume $m=n$ (i.e. $\alpha=1$) and also $p =  \frac{c}{n}$, for some positive constant $c$. Besides avoiding complicated formulae for $p$, the reason behind this assumption is that, in this range of values, the expected weight of a fixed edge $\{i, j\}$ in $G(V,E, \mathbf{R}^T \mathbf{R})$ is equal to $mp^2 = \Theta(1/n)$, and thus we hope that our work will serve as an intermediate step towards understanding and overcoming algorithmic bottlenecks for finding maximum cuts in Erd\H{o}s-R\'{e}nyi random graphs $G_{n, c/n}$.

\section{Proof of Theorem \ref{thm:majoritycutanalysis}} \label{sec:majoritycutanalysis}

%We begin with a useful result concerning the expectation of the absolute value of weighted sums of Bernoulli random variables:

%\begin{lemma}
%Let $a_1, a_2, \ldots, a_N \in \mathbb{R}$, for some integer $N \geq 1$, and let $X_1, X_2, \ldots, X_N$ be Bernoulli random variables with $\Pr(X_i = 1) = p = 1- \Pr(X_i=0)$, for all $i \in [N]$, where $p \in (0,1)$. Define the random variables $Z = \sum_{i=1}^N a_i X_i$ and $Z'=\frac{\mathbb{E}[Z]}{Np} B$, where $B$ follows the binomial distribution with parameters $N, p$, namely $B \sim {\cal B}(N,p)$. Then $\mathbb{E}[|Z'|] \leq \mathbb{E}[|Z|]$. 
%\end{lemma}
%\begin{proof} Since $|\mathbb{E}[Z]| \leq \mathbb{E}[|Z|]$ is valid for any random variable, we have $\mathbb{E}[|Z'|] = \frac{|\mathbb{E}[Z]|}{Np} Np = |\mathbb{E}[Z]| \leq \mathbb{E}[|Z|]$, as needed.
%\end{proof}

%We now proceed with the proof of Theorem \ref{thm:majoritycutanalysis}.

\begin{proof} Let $G(V,E, \mathbf{R}^T \mathbf{R})$ (i.e. the input to the Majority Cut Algorithm \ref{alg:majoritycut}) be a random instance of the $\overline{{\cal G}}_{n, m, p}$ model, with $m=n$, and $p = \frac{c}{n}$, for some large enough constant $c$. For $t \in [n]$, let $M_t$ denote the constructed cut size just after the consideration of a vertex $v_t$, for some $t \geq \epsilon n+1$. In particular, by equation (\ref{eq:maxcutRIGformula}) for $n=t$, and since the values $x_1, \ldots, x_{t-1}$ are already decided in previous steps, we have
\begin{eqnarray}
M_t & = & \frac{1}{4} \left(\sum_{i,j \in [t]^2} \left[\mathbf{R}^T \mathbf{R} \right]_{i,j} - \min_{x_t \in \{-1, +1\}}  \left\| \mathbf{R}_{[m], [t]} \mathbf{x}_{[t]} \right\|^2 \right) \label{eq:Mt} %\\
%& = & \frac{1}{2} \left(\sum_{i,j \in [t]^2} \left[\mathbf{R}^T \mathbf{R} \right]_{i,j} - \min_{x_t \in \{-1, +1\}}  \left\| \mathbf{R}_{:,t} x_t + \sum_{i \in [t-1]} \mathbf{R}_{:,i} x_i \right\|^2 \right)
\end{eqnarray}

The first of the above terms is
\begin{equation} \label{eq:1stterm}
\frac{1}{4} \sum_{i,j \in [t]^2} \left[\mathbf{R}^T \mathbf{R} \right]_{i,j} = \frac{1}{4} \left( \sum_{i,j \in [t-1]^2} \left[\mathbf{R}^T \mathbf{R} \right]_{i,j} + 2 \sum_{i \in [t-1]} \left[\mathbf{R}^T \mathbf{R} \right]_{i,t} + \left[\mathbf{R}^T \mathbf{R} \right]_{t,t}\right) 
\end{equation}
and the second term is
\begin{eqnarray}
&& -\frac{1}{4} \min_{x_t \in \{-1, +1\}}  \left\| \mathbf{R}_{[m], [t]} \mathbf{x}_{[t]} \right\|^2 \nonumber \\
&& \quad = -\frac{1}{4} \min_{x_t \in \{-1, +1\}}  \left\| \mathbf{R}_{[m],t} x_t + \sum_{i \in [t-1]} \mathbf{R}_{[m],i} x_i \right\|^2 \nonumber \\
&& \quad = -\frac{1}{4} \min_{x_t \in \{-1, +1\}} \sum_{i,j \in [t]^2} \left[\mathbf{R}^T \mathbf{R} \right]_{i,j} x_i x_j \nonumber \\
&& \quad = -\frac{1}{4} \left( \sum_{i,j \in [t-1]^2} \left[\mathbf{R}^T \mathbf{R} \right]_{i,j} x_i x_j + 2 \min_{x_t \in \{-1, +1\}}{ \sum_{i \in [t-1]} \left[\mathbf{R}^T \mathbf{R} \right]_{i,t} x_i x_t} + \left[\mathbf{R}^T \mathbf{R} \right]_{t,t} \right) \label{eq:2ndterm}
\end{eqnarray}
By (\ref{eq:Mt}), (\ref{eq:1stterm}) and (\ref{eq:2ndterm}), we have

\begin{eqnarray}
M_t & = & M_{t-1} + \frac{1}{2} \sum_{i \in [t-1]} \left[\mathbf{R}^T \mathbf{R} \right]_{i,t} - \frac{1}{2} \min_{x_t \in \{-1, +1\}}{ \sum_{i \in [t-1]} \left[\mathbf{R}^T \mathbf{R} \right]_{i,t} x_i x_t} \nonumber \\
& = & M_{t-1} + \frac{1}{2} \sum_{i \in [t-1]} \left[\mathbf{R}^T \mathbf{R} \right]_{i,t} + \frac{1}{2} \left| \sum_{i \in [t-1]} \left[\mathbf{R}^T \mathbf{R} \right]_{i,t} x_i\right|  \label{eq:recursion}
%& = & M_{t-1} + \frac{1}{2} \max_{x_t \in \{-1, +1\}}{ \sum_{i \in [t-1]} \left[\mathbf{R}^T \mathbf{R} \right]_{i,t} (1 - x_i x_t) }
%& = & M_{t-1} + \frac{1}{2} \sum_{i \in [t-1]} \left[\mathbf{R}^T \mathbf{R} \right]_{i,t} + \frac{1}{2} \left| \sum_{i \in [t-1]} \sum_{\ell \in [m]} \mathbf{R}_{\ell,i} \mathbf{R}_{\ell, t} x_i\right| \\
%& = & M_{t-1} + \frac{1}{2} \sum_{i \in [t-1]} \left[\mathbf{R}^T \mathbf{R} \right]_{i,t} + \frac{1}{2} \left| \sum_{\ell \in [m]} \mathbf{R}_{\ell, t} \sum_{i \in [t-1]}  \mathbf{R}_{\ell,i} x_i\right|
\end{eqnarray}

Define now the random variable 
\begin{equation*}
Z_t = Z_t(\mathbf{x}, \mathbf{R}) = \sum_{i \in [t-1]} \left[\mathbf{R}^T \mathbf{R} \right]_{i,t} x_i = \sum_{\ell \in [m]} \mathbf{R}_{\ell, t} \sum_{i \in [t-1]}  \mathbf{R}_{\ell,i} x_i,
\end{equation*}
so that $M_t = M_{t-1} + \frac{1}{2} \sum_{i \in [t-1]} \left[\mathbf{R}^T \mathbf{R} \right]_{i,t} + \frac{1}{2} \left| Z_t\right|$. Observe that, in the latter recursive equation, the term $\frac{1}{2} \sum_{i \in [t-1]} \left[\mathbf{R}^T \mathbf{R} \right]_{i,t}$ corresponds to the expected increment of the constructed cut if the $t$-vertex chose its color uniformly at random. Therefore, lower bounding the expectation of $\frac{1}{2}\left| Z_t\right|$ will tell us how much better the Majority Algorithm does when considering the $t$-th vertex.

Towards this end, we first note that, given $\mathbf{x}_{[t-1]} = \{x_i, i \in [t-1] \}$, and $\mathbf{R}_{[m], [t-1]}=\{ \mathbf{R}_{\ell, i}, \ell \in [m], i \in [t-1]\}$, $Z_t$ is the sum of $m$ independent random variables, since the Bernoulli random variables $\mathbf{R}_{\ell,t}, \ell \in [m],$ are independent, for any given $t$ (note that the conditioning is essential for independence, otherwise the inner sums in the definition of $Z_t$ would also depend on the $x_i$'s, which are not random when $i$ is large). Furthermore, $\mathbb{E}[Z_t| \mathbf{x}_{[t-1]}, \mathbf{R}_{[m], [t-1]}] = p \sum_{\ell \in [m]} \sum_{i \in [t-1]}  \mathbf{R}_{\ell,i} x_i$ and $\text{Var}(Z_t| \mathbf{x}_{[t-1]}, \mathbf{R}_{[m], [t-1]}) = p(1-p) \sum_{\ell \in [m]} \left(\sum_{i \in [t-1]} \mathbf{R}_{\ell,i} x_i \right)^2$. Given $\mathbf{x}_{[t-1]} = \{x_i, i \in [t-1] \}$, and $\mathbf{R}_{[m], [t-1]}=\{ \mathbf{R}_{\ell, i}, \ell \in [m], i \in [t-1]\}$, define the sets $A^+_t = \{\ell \in [m]: \sum_{i \in [t-1]}  \mathbf{R}_{\ell,i} x_i > 0\}$ and $A^-_t = \{\ell \in [m]: \sum_{i \in [t-1]}  \mathbf{R}_{\ell,i} x_i < 0\}$. In particular, given $\mathbf{x}_{[t-1]} = \{x_i, i \in [t-1] \}$, and $\mathbf{R}_{[m], [t-1]}=\{ \mathbf{R}_{\ell, i}, \ell \in [m], i \in [t-1]\}$, $Z_t$ can be written as
\begin{equation} \label{eq:Z_t}
Z_t = \sum_{\ell \in A^+_t} \mathbf{R}_{\ell, t} \sum_{i \in [t-1]}  \mathbf{R}_{\ell,i} x_i - \sum_{\ell \in A^-_t} \mathbf{R}_{\ell, t} \left|\sum_{i \in [t-1]}  \mathbf{R}_{\ell,i} x_i \right|,
\end{equation}
where $\mathbf{R}_{\ell, t}, \ell \in A^+_t \cup A^-_t$ are independent Bernoulli random variables with success probability $p$.

It is a matter of careful calculation to show that $\mathbb{E}\left[|Z_t| \big| \mathbf{x}_{[t-1]}, \mathbf{R}_{[m], [t-1]}\right]$ is smallest when the conditional expectation $\mathbb{E}\left[Z_t \big| \mathbf{x}_{[t-1]}, \mathbf{R}_{[m], [t-1]}\right]$ is 0, which happens when the sum of positive factors for the Bernoulli random variables in the definition of $Z_t$ is equal to the sum of negative ones, namely $\sum_{\ell \in A^+_t} \sum_{i \in [t-1]}  \mathbf{R}_{\ell,i} x_i = \sum_{\ell \in A^-_t} \left|\sum_{i \in [t-1]}  \mathbf{R}_{\ell,i} x_i \right|$. Furthermore, we note that $\mathbb{E}[|Z_t| \big| \mathbf{x}_{[t-1]}, \mathbf{R}_{[m], [t-1]}]$ does not increase if we replace $\sum_{\ell \in A^+_t} \mathbf{R}_{\ell, t} \sum_{i \in [t-1]}  \mathbf{R}_{\ell,i} x_i$ and $\sum_{\ell \in A^-_t} \mathbf{R}_{\ell, t} \left|\sum_{i \in [t-1]}  \mathbf{R}_{\ell,i} x_i \right|$ in the expression (\ref{eq:Z_t}) for $Z_t$ by independent binomial random variables $Z_t^+ \sim {\cal B}\left( \sum_{\ell \in A^+_t} \sum_{i \in [t-1]}  \mathbf{R}_{\ell,i} x_i, p\right)$ and $Z_t^- \sim {\cal B}\left( \sum_{\ell \in A^-_t} \left|\sum_{i \in [t-1]}  \mathbf{R}_{\ell,i} x_i \right|, p\right)$, respectively.\footnote{This property follows inductively, by noting that, if $X = \sum_{i=1}^k a_i X_i - \sum_{i=k}^N a_i X_i$, and $X'=\sum_{i=1}^{k-1} a_i X_i + (a_k-1)X_k+X'_k - \sum_{i=k}^N a_i X_i$, where $k, N, a_i \in \mathbb{N}^+, i \in [N]$, and $X_i, i \in [N], X'_k$ are independent, identically distributed Bernoulli random variables, then $\mathbb{E}[|X|] \geq \mathbb{E}[|X'|]$. Indeed, notice that, the independence of $X_k, X'_k$ implies that these random variables work against each other (with respect to the absolute value) at least half of the time.}

In view of the above, if $Z^B_t$ is a random variable which, given $\mathbf{x}_{[t-1]} = \{x_i, i \in [t-1] \}$, and $\mathbf{R}_{[m], [t-1]}=\{ \mathbf{R}_{\ell, i}, \ell \in [m], i \in [t-1]\}$, follows the Binomial distribution ${\cal B}\left( N_t, p\right)$, where
\begin{equation} \label{eq:Nt}
N_t\stackrel{\text{def}}{=}\max\left( \sum_{\ell \in A^+_t} \sum_{i \in [t-1]}  \mathbf{R}_{\ell,i} x_i, \sum_{\ell \in A^-_t} \left|\sum_{i \in [t-1]}  \mathbf{R}_{\ell,i} x_i \right| \right),
\end{equation}
then
%with parameters $N_t\stackrel{\text{def}}{=}\max\left( \sum_{\ell \in A^+_t} \sum_{i \in [t-1]}  \mathbf{R}_{\ell,i} x_i, \sum_{\ell \in A^-_t} \left|\sum_{i \in [t-1]}  \mathbf{R}_{\ell,i} x_i \right| \right)$ and $p$, namely $Z^B_t \sim {\cal B}\left( N_t, p\right)$, then 
\begin{equation}
\mathbb{E}[|Z_t| \big| \mathbf{x}_{[t-1]}, \mathbf{R}_{[m], [t-1]}] \geq \mathrm{MD}(Z^B_t), \label{ineq:MD}
\end{equation}
where $\mathrm{MD}(\cdot)$ is the mean absolute difference of (two independent copies of) $Z^B_t$. In particular, $\mathrm{MD}(Z^B_t) = \mathbb{E}[\left|Z^B_t - Z'^B_t \right|]$, where $Z^B_t, Z'^B_t$ are independent random variables following ${\cal B}\left( N_t, p\right)$. Unfortunately, we are aware of no simple closed formula for $\mathrm{MD}(Z^B_t)$, and so we resort to Gaussian approximation through the Berry-Esseen Theorem:

\begin{theorem*}[Berry-Esseen Theorem \cite{S11}]
Let $X_1, X_2, \ldots,$ be independent, identically distributed random variables, with $\mathbb{E}[X_i]=0, \mathbb{E}[X_i^2] = \sigma^2>0$, and $\mathbb{E}[|X_i|^3] = \rho < \infty$. For $N>0$, let $F_N(\cdot)$ be the cumulative distribution function of $\frac{X_1+\cdots+X_N}{\sigma \sqrt{N}}$, and let $\Phi(\cdot)$ be the cumulative distribution function of the standard normal distribution. Then, $\sup_{x \in \mathbb{R}}|F_N(x)-\Phi(x)| \leq \frac{0.4748 \rho}{\sigma^3 \sqrt{N}}$.    
\end{theorem*}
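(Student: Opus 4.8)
The statement to be proved is the classical Berry--Esseen Theorem with the sharp absolute constant of \cite{S11}, so the plan is to follow the Fourier-analytic (characteristic function) route, whose crux is a smoothing inequality that converts the uniform distance between two distribution functions into an integral of the difference of their characteristic functions. First I would invoke Esseen's smoothing inequality: for any distribution function $F$ and any differentiable comparison function $G$ with $\sup_x |G'(x)| \leq M$, and for every $T>0$,
\[
\sup_{x \in \mathbb{R}} |F(x) - G(x)| \leq \frac{1}{\pi} \int_{-T}^{T} \left| \frac{\hat{f}(t) - \hat{g}(t)}{t} \right| dt + \frac{b\, M}{T},
\]
where $\hat{f}, \hat{g}$ are the characteristic functions of $F, G$ and $b$ is an absolute constant. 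Applying this with $F = F_N$ and $G = \Phi$, so that $\hat{g}(t) = e^{-t^2/2}$ and $M = (2\pi)^{-1/2}$, reduces the whole problem to estimating the difference of characteristic functions over $|t| \leq T$.

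Next, writing $\varphi$ for the common characteristic function of the normalized summands $X_i/\sigma$, the characteristic function of $\frac{X_1+\cdots+X_N}{\sigma\sqrt{N}}$ is $\hat{f}_N(t) = \varphi(t/\sqrt{N})^N$. The key input is a Taylor expansion of $\varphi$ near the origin: since $\mathbb{E}[X_i/\sigma]=0$, $\mathbb{E}[(X_i/\sigma)^2]=1$, and $\mathbb{E}[|X_i/\sigma|^3] = \rho/\sigma^3$, one has $\varphi(s) = 1 - s^2/2 + \theta(s)$ with $|\theta(s)| \leq \frac{\rho}{6\sigma^3}|s|^3$. From this I would control $|\log \varphi(t/\sqrt{N}) + t^2/(2N)|$ and exponentiate to obtain a pointwise bound of the shape $|\hat{f}_N(t) - e^{-t^2/2}| \leq C \frac{\rho}{\sigma^3\sqrt{N}} |t|^3 e^{-t^2/3}$, valid for $|t|$ up to a multiple of $\sqrt{N}\,\sigma^3/\rho$ (the range where $\varphi(t/\sqrt{N})$ stays safely close to $1$).

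I would then split $\int_{-T}^{T} |\hat{f}_N(t) - e^{-t^2/2}|/|t|\, dt$ into a central region, where the pointwise bound contributes a term proportional to $\frac{\rho}{\sigma^3\sqrt{N}}$ because $\int |t|^2 e^{-t^2/3}\, dt$ converges, and a moderate region, where I bound $|\hat{f}_N(t)|$ uniformly below $1$ to keep the tail under control. Choosing $T$ of order $\sqrt{N}\,\sigma^3/\rho$ balances the smoothing term $bM/T$ against the integral and yields $\sup_x |F_N(x) - \Phi(x)| \leq \frac{C\rho}{\sigma^3\sqrt{N}}$ for some absolute $C$.

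The hard part will be the sharp constant $0.4748$. The crude execution above yields an absolute constant of rather modest quality (historically on the order of the values between $0.4$ and $0.8$, depending on care taken), and it is precisely the optimization of every step --- a refined version of the smoothing inequality, an improved pointwise estimate of $\hat{f}_N$ over the \emph{entire} integration range rather than just near the origin, and a numerical optimization of the free parameters (the split point and $T$) --- that produces Shevtsova's value. This delicate quantitative analysis is the content of \cite{S11}, and for the numerically optimized constant I would defer to that reference rather than reproduce the full argument here.
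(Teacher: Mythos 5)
The paper does not actually prove this statement: it is imported verbatim as a known external result, attributed to Shevtsova \cite{S11}, and used as a black box inside the proof of Theorem \ref{thm:majoritycutanalysis} (the majority-cut analysis), so there is no internal proof to compare against. Your sketch follows the standard Fourier-analytic route --- Esseen's smoothing inequality with $G = \Phi$ and $M = (2\pi)^{-1/2}$, the third-order expansion $\varphi(s) = 1 - s^2/2 + \theta(s)$ with $|\theta(s)| \leq \frac{\rho}{6\sigma^3}|s|^3$ for the characteristic function of $X_i/\sigma$, a pointwise bound on $\left|\varphi(t/\sqrt{N})^N - e^{-t^2/2}\right|$ valid for $|t|$ up to a multiple of $\sqrt{N}\,\sigma^3/\rho$, and the choice $T \asymp \sqrt{N}\,\sigma^3/\rho$ --- and this is sound as far as it goes: it establishes the inequality with \emph{some} absolute constant $C$ in place of $0.4748$. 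The sharp numerical value is precisely Shevtsova's contribution and, as you correctly acknowledge, cannot be recovered from this execution, so deferring to \cite{S11} for it is the same move the paper itself makes. It is also worth noting that nothing in the paper requires the sharp constant: in the sole invocation of the theorem, the approximation error $0.4748/\sqrt{2p(1-p)N_t}$ only needs to be $o(1)$ as $c \to \infty$ (with $N_t = \Theta(n)$ and $p = c/n$), so your weaker version with an unspecified absolute constant would suffice for every application in the paper.
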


In our case, we write $Z^B_t = \sum_{i=1}^{N_t} Z^B_{t,i}$, $Z'^B_t = \sum_{i=1}^{N_t} Z'^B_{t,i}$, and set $X_i = Z^B_{t,i} - Z'^B_{t,i}$, where $Z^B_{t,i}, Z'^B_{t,i}$ are independent Bernoulli random variables with success probability $p$, for any $i \in [N_t]$. In particular, we have $\mathbb{E}[X_i]=0$, $\mathbb{E}[X_i^2] = \mathbb{E}[|X_i|^3] = 2p(1-p)$. Therefore, by the Berry-Esseen Theorem, given $\mathbf{x}_{[t-1]} = \{x_i, i \in [t-1] \}$, and $\mathbf{R}_{[m], [t-1]}=\{ \mathbf{R}_{\ell, i}, \ell \in [m], i \in [t-1]\}$,the distribution of $Z^B_t - Z'^B_t$ is approximately Normal ${\cal N}(0, 2p(1-p)N_t)$, with approximation error $\frac{0.4748}{\sqrt{2p(1-p) N_t}}$. 

Notice that the latter approximation error bound becomes $o(1)$ if $N_t = \Theta(n), p = \frac{c}{n}$ and $c \to \infty$. Therefore, we next show that, with high probability over the choices of $\mathbf{R}$, $N_t = \Theta(n)$, for any $t \geq \epsilon n+1$, where $\epsilon$ is the constant used in the Majority Algorithm. In particular, even though we cannot control the  variables $x_i \in \{-1,+1\}, i \in [t-1]$, in the definition of $N_t$, we will find a lower bound that holds whp by using the random variable 
\begin{equation*}
Y_t = Y_t(\mathbf{R}, \mathbf{x}) \stackrel{\text{def}}{=} \left| \ell \in [m]: \sum_{i \in [t-1]} \mathbf{R}_{\ell,i} \textrm{ is odd}\right|, 
\end{equation*}
and employing the following inequality 
\begin{equation} \label{ineq:N_tY_t}
N_t \geq \frac{Y_t}{2}.
\end{equation}
Indeed, (\ref{ineq:N_tY_t}) holds because, for any $i \in [t-1]$, if $\sum_{i \in [t-1]} \mathbf{R}_{\ell,i}$ is odd, then $\left|\sum_{i \in [t-1]} \mathbf{R}_{\ell,i} x_i \right| \geq 1$, no matter what value the $x_i$'s have. Therefore, $\sum_{i \in [t-1]} \mathbf{R}_{\ell,i} x_i$ will contribute at least 1 to one of the two terms in the maximum from the right side of (\ref{eq:Nt}), and thus (\ref{ineq:N_tY_t}) follows.

Notice now that, for any fixed $i$ and $t \geq \epsilon n+1$, we have $\Pr(\sum_{i \in [t-1]} \mathbf{R}_{\ell,i} \textrm{ is odd}) = \sum_{j \textrm{ odd}} \binom{t-1}{j} p^j (1-p)^{t-1-j} = \frac{1}{2} \left( 1 - (1-2p)^{t-1}\right) \geq \frac{1}{2} \left( 1 - e^{-2p(t-1)}\right) \geq \frac{1}{2} \left( 1 - e^{-2c\epsilon}\right)$, where in the last inequality we set $p = \frac{c}{n}$. Taking $c \to \infty$, the latter bound becomes $\frac{1}{2} - o(1)$. Therefore, by independence of the entries of $\mathbf{R}$, $Y_t$ stochastically dominates a binomial random variable ${\cal B}(t-1, \frac{1}{3})$. Furthermore, by the multiplicative Chernoff (upper) bound, for any $\delta>0$,
\begin{equation*}
\Pr\left(Y_t<(1-\delta) \frac{t-1}{3} \right) < \left( \frac{e^{-\delta}}{(1-\delta)^{1-\delta}}\right)^{\frac{t-1}{3}}.
\end{equation*}

%QQQQQQQQQQQQQQQQQQQQQQQQQQQQQQQQQQQQQQQ

%Before applying the Central Limit Theorem to approximate the distribution of $Z_t$, we need to show that, for a sufficient number of $\ell \in [m]$, say $Y_t = Y_t(\mathbf{R}, \mathbf{x})$, we have $\sum_{i \in [t-1]} \mathbf{R}_{\ell,i} x_i \neq 0$, because this will affect the approximation error of the CLT, which is $O(Y_t^{-1/2})$ \cite{S86}. To show this, we note that $Y_t$ is lower bounded by the number of $\ell \in [m]$ for which $\sum_{i \in [t-1]} \mathbf{R}_{\ell,i}$ is an odd number. But, for any fixed $i$ and $t$, we have $\Pr(\sum_{i \in [t-1]} \mathbf{R}_{\ell,i} \textrm{ is odd}) = \sum_{j \textrm{ odd}} \binom{t-1}{j} p^j (1-p)^{t-1-j}$, which is bounded below by a constant as long as $c$ is large enough (e.g. $c>1$). Therefore, $Y_t$ stochastically dominates a binomial random variable ${\cal B}(t-1, \frac{1}{3})$. Therefore, by the multiplicative Chernoff (upper) bound, for any $\delta>0$,
%\begin{equation*}
%\Pr\left(Y_t<(1-\delta) \frac{t-1}{3} \right) < \left( \frac{e^{-\delta}}{(1-\delta)^{1-\delta}}\right)^{\frac{t-1}{3}}.
%\end{equation*}
Taking $\delta = \frac{1}{2}$ and noting that $t \geq \epsilon n +1$, we have
\begin{equation*}
\Pr\left(Y_t<\frac{t-1}{6} \right) < \left( \frac{e}{2}\right)^{-\frac{\epsilon n}{6}},
\end{equation*}
which is $o(1/n)$, for any constant $\epsilon >0$. By the union bound, 
\begin{equation*}
\Pr\left(\exists t: t \geq \epsilon n+1, Y_t<\frac{t-1}{6} \right) =o(1).
\end{equation*}
By inequality (\ref{ineq:N_tY_t}), we thus have that, with high probability over the choices of $\mathbf{R}$, $N_t \geq \frac{t-1}{12} \geq \frac{\epsilon n}{12}$, for all $t \geq \epsilon n+1$, as needed.  
%Since $t \geq \epsilon n+1$, for a fixed $i$ and $t$, $\Pr(\sum_{i \in [t-1]} \mathbf{R}_{\ell,i} \textrm{ is odd}) = \sum_{j \textrm{ odd}} \binom{t-1}{j} p^j (1-p)^{t-1-j}$, which is bounded by a constant for $t = \Omega(n)$ and $p = \frac{c}{n}$. Therefore, by Chernoff bounds, $Y = \Theta(m)$ with high probability.    

Combining the above, by the Berry-Esseen Theorem, given $\mathbf{x}_{[t-1]}, \mathbf{R}_{[m], [t-1]}$, the distribution of $Z_t^B-Z'^B_t$ is approximately Normal ${\cal N}(0, 2p(1-p)N_t)$ with 
%expectation $\mathbb{E}[Z_t| \mathbf{x}_{[t-1]}, \mathbf{R}_{[m], [t-1]}] = p \sum_{\ell \in [m]} \sum_{i \in [t-1]}  \mathbf{R}_{\ell,i} x_i$, variance $\text{Var}(Z_t| \mathbf{x}_{[t-1]}, \mathbf{R}_{[m], [t-1]}) = p(1-p) \sum_{\ell \in [m]} \left(\sum_{i \in [t-1]} \mathbf{R}_{\ell,i} x_i \right)^2$ and 
approximation error $o(1)$ as $c \to \infty$, with high probability over the choices of $\mathbf{R}$. In particular, given $\mathbf{x}_{[t-1]}, \mathbf{R}_{[m], [t-1]}$, $|Z_t^B-Z'^B_t|$ follows approximately (i.e. with the same approximation error $o(1)$) the \emph{folded normal distribution} with mean value (at least) $\sqrt{\frac{2}{\pi} \text{Var}(Z_t^B-Z'^B_t| \mathbf{x}_{[t-1]}, \mathbf{R}_{[m], [t-1]})}$. Notice now that, by inequality (\ref{ineq:N_tY_t}), we have
\begin{equation*}
\text{Var}(Z_t^B-Z'^B_t| \mathbf{x}_{[t-1]}, \mathbf{R}_{[m], [t-1]}) \geq p(1-p) Y_t.
\end{equation*}

Since $Y_t \geq \frac{t-1}{6} \geq \frac{\epsilon n}{6}$ with high probability, and also $p = \frac{c}{n}$, we get that $\text{Var}(Z_t^B-Z'^B_t| \mathbf{x}_{[t-1]}, \mathbf{R}_{[m], [t-1]}) \geq \frac{c (t-1)}{6n} -o(1)$, with high probability, where the $o(1)$ comes from the approximation error given by the Berry-Esseen Theorem. Consequently, by inequality (\ref{ineq:MD}), with high probability over the choices of $\mathbf{R}$ (which is $1- o(1)$),
\begin{equation*}
\mathbb{E}\left[|Z_t| \right] = \mathbb{E}\left[\left| \sum_{i \in [t-1]} \left[\mathbf{R}^T \mathbf{R} \right]_{i,t} x_i\right| \right] \geq \sqrt{\frac{c (t-1)}{3 \pi n}} -o(1).
\end{equation*} 

Summing over all $t \geq \epsilon n+1$, we get
\begin{equation*}
\sum_{t \geq \epsilon n+1} \mathbb{E}\left[|Z_t| \right] \geq \sqrt{\frac{c}{3 \pi n}} \sum_{t \geq \epsilon n} \sqrt{t} -o(n) = \sqrt{\frac{c}{3 \pi n}} \left(\sum_{t \geq 1} \sqrt{t} - \epsilon n \sqrt{\epsilon n} \right) -o(n).
\end{equation*} 
Using the fact that $\sum_{t \geq 1} \sqrt{t} = \frac{2}{3} n^{3/2} +o(n)$, we thus have that 
\begin{equation*}
\sum_{t \geq \epsilon n+1} \mathbb{E}\left[|Z_t| \right] \geq \sqrt{\frac{c}{3 \pi}} \left(\frac{2}{3} - \epsilon ^{3/2}\right) n -o(n).
\end{equation*} 

On the other hand, we have that the expected weight of a random cut is equal to $\frac{1}{4} n(n-1)mp^2 = \frac{c^2}{4}n + o(n)$ (see e.g. equation (\ref{eq:randomcut})). The proof is completed by taking $\epsilon \to 0$.
\end{proof}

\section{Proof of Lemma \ref{lem:disjointcycles}} \label{sec:lemma1proof}

\begin{proof}
We will use the first moment method and so we need to prove that the expectation of the number of pairs of distinct 0-strong closed vertex-label sequences in $G$ that have at least one label in common goes to 0. To this end, for $j \in [\min(k, k')-1]$, let $A_j(k, k')$ denote the number of such sequences $\sigma, \sigma'$, with $k=|\sigma|, k' = |\sigma'|$, that have $j$ labels in common. In particular, for integers $k, k'$, let $ \sigma:=v_1, \ell_1, v_2, \ell_2, \cdots, v_k, \ell_{k}, v_{k+1}=v_1$, and let $\sigma' :=v'_1, \ell'_1, v'_2, \ell'_2, \cdots, v'_{k'}, \ell'_{k'}, v'_{k'+1}=v_1$. Notice that, any such fixed pair $\sigma, \sigma'$ has the same probability to appear, namely $p^{2(k+k'-j)} (1-p)^{(n-2)(k+k'-j)}$; indeed, $p^{2k} (1-p)^{(n-2)k}$ is the probability that $\sigma$ appears (recall that $\sigma$ has $k$ labels and it is 0-strong, i.e. each label is only selected by two vertices) and $p^{2(k'-j)} (1-p)^{(n-2)(k'-j)}$ is the probability that $\sigma'$ appears given that $\sigma$ has appeared. Furthermore, the number of such pairs of sequences is dominated by the number of sequences that overlap in $j$ consecutive labels (e.g. the first $j$), which is at most $n^k m^k n^{k'-j-1} m^{k'-j}$ (notice that $j$ common labels implies that there are at least $j'+1$ common vertices). Overall, since $n=m$ and $p = \frac{c}{n}$, we have
\begin{eqnarray*}
\mathbb{E}[A_j(k, k')] & \leq & (1+o(1)) \frac{1}{n} (np)^{2(k+k'-j)} (1-p)^{(n-2)(k+k'-j)} \\
& = & (1+o(1)) \frac{1}{n} \left( c^2 (1-p)^{n-2} \right)^{k+k'-j}.
\end{eqnarray*}
Since $n \to \infty$ and $p = \frac{c}{n}$, by elementary calculus we have that $c^2 (1-p)^{n-2}$ bounded by a constant (which depends only on $c$) strictly less than 1. Therefore, the above expectation is at most $e^{-\ln{n} - \Theta(1) (k+k'-j)}$. Therefore, summing over all choices of $k, k' \in [n]$ and $j \in [\min(k, k')-1]$, we get that the expected number of pairs of distinct 0-strong closed vertex-label sequences that have at least one label in common is at most

\begin{equation*}
\sum_{k, k' \in [n]} \sum_{j \in [\min(k, k')-1]} e^{-\ln{n} - \Theta(1) (k+k'-j)} = o(1),
\end{equation*}
and the proof is completed by Markov's inequality.
\end{proof}

\section{Proof of Lemma \ref{lem:only0oddcycles}} \label{sec:only0oddcycles}

\begin{proof} For the sake of contradiction, assume ${\cal C}_{odd}(G^{(b)}) = \emptyset$, but $G^{(b)} = \cup^+_{\ell \in [m]} M^{(\ell)}$ has an odd cycle $C_k$ that is not 0-strong and has minimum length. Notice that $C_k$ corresponds to a closed vertex-label sequence, say $\sigma:= v_1, \ell_1, v_2, \ell_2, \cdots, v_k, \ell_{k}, v_{k+1}=v_1$, where $\{v_i, v_{i+1}\} \in M^{(\ell_i)}$, for all $i \in [k]$. Furthermore, by assumption, conditions (b) and (c) of Definition \ref{def:codd} are satisfied by $\sigma$ (indeed $\{v_i, v_{i+1}\} \in M^{(\ell_i)}$, for all $i \in [k]$, and $\sigma$ is $\lambda$-strong, for some $\lambda>0$). Therefore, the only reason for which $\sigma$ does not belong to ${\cal C}_{odd}(G^{(b)})$ is that condition (a) of Definition \ref{def:codd} is not satisfied, i.e. there are distinct indices $i > i' \in [k]$ such that $\ell_i = \ell_{i'}$. Clearly, such indices are not consecutive (i.e. $i' \neq i+1$), because $\ell_i$ is strong and step 6 of our algorithm implies that $M^{(\ell_i)}$ is a matching of $K^{(\ell_i)}$. But then either the vertex-label sequence $v_1, \ldots, v_i, \ell_i, v_{i'+1}, \ell_{i'+1}, v_{i'+2}, \ldots, v_{k+1} = v_1$ or the vertex-label sequence $v_{i+1}, \ell_{i+1}, v_{i+2}, \ldots, v_{i'}, \ell_{i}, v_{i+1}$ corresponds to a shorter odd cycle, which is a contradiction on the minimality of $C_k$.
\end{proof}

\end{document}